\documentclass[12pt,oneside,reqno]{amsart}
\PassOptionsToPackage{natbib=true}{biblatex}
\usepackage[T1]{fontenc}
\usepackage[utf8]{inputenc}
\usepackage{colortbl}
\usepackage{mathtools}
\usepackage{amstext}
\usepackage{amsthm}
\usepackage{amssymb}
\usepackage{geometry}
\usepackage{setspace}
\usepackage[pdfusetitle,
 bookmarks=true,bookmarksnumbered=false,bookmarksopen=false,
 breaklinks=false,pdfborder={0 0 1},backref=false,colorlinks=false]
 {hyperref}

\makeatletter

\providecommand{\tabularnewline}{\\}

\usepackage{colortbl}
\usepackage{float}
\usepackage{amstext}
\usepackage{amsthm}
\usepackage{amsfonts}
\usepackage{amstext}
\usepackage{amsthm}

\usepackage{tikz}
\usepackage{pgfplots}
\usepgfplotslibrary{groupplots}
\pgfplotsset{compat=1.18}

\theoremstyle{definition}

\usepackage{csquotes}   

\AddToHook{package/biblatex/after}{
  
  \renewbibmacro*{name:andothers}{
    \ifboolexpr{
      test {\ifnumequal{\value{listcount}}{\value{liststop}}}
      and
      test \ifmorenames
    }
      {\ifnumgreater{\value{liststop}}{1}
         {\finalandcomma}
         {}%
       \andothersdelim\bibstring[\emph]{andothers}}
      {}}
  
  \DeclareFieldFormat{pages}{#1}                                          
  \renewbibmacro{in:}{\ifentrytype{article}{}{\printtext{\bibstring{in}\intitlepunct}}}  
}

\hypersetup{colorlinks,
            bookmarksdepth=3,
            citecolor=[RGB]{46,126,42},
            linkcolor=[RGB]{128,0,6},
            urlcolor=[RGB]{138,0,135}}
\usepackage{bookmark}
\usepackage[hyphenbreaks]{breakurl}

\makeatother

\theoremstyle{plain}
\newtheorem{thm}{\protect\theoremname}
\newtheorem{assumption}{\protect\assumptionname}
\newtheorem{lem}{\protect\lemmaname}
\theoremstyle{remark}
\newtheorem{rem}{\protect\remarkname}
\theoremstyle{plain}
\newtheorem{cor}{\protect\corollaryname}
\newtheorem{prop}{\protect\propositionname}
\usepackage[style=authoryear,sorting=nyt,dashed=false,maxbibnames=999,maxcitenames=2,uniquename=init,uniquelist=false,giveninits=true,date=year]{biblatex}
\providecommand{\assumptionname}{Assumption}
\providecommand{\corollaryname}{Corollary}
\providecommand{\lemmaname}{Lemma}
\providecommand{\propositionname}{Proposition}
\providecommand{\remarkname}{Remark}
\providecommand{\theoremname}{Theorem}

\begin{document}
\title{Power enhancement via cross-fit variance estimation: Applications to specification, overidentification, and many-restriction testing}
\author{Keita Sunada}
\address{Aarhus Center for Econometrics, Aarhus University, Universitetsbyen 51, 8000 Aarhus C, Denmark.}
\email{\href{mailto:ksunada@econ.au.dk}{ksunada@econ.au.dk}}
\author{Yukitoshi Matsushita}
\address{Graduate School of Economics, Hitotsubashi University, 2-1 Naka, Kunitachi, Tokyo 186-8601, Japan.}
\email{\href{mailto:matsushita.y@r.hit-u.ac.jp}{matsushita.y@r.hit-u.ac.jp}}
\author{Taisuke Otsu}
\address{Department of Economics, London School of Economics, Houghton Street, London, WC2A 2AE, UK.}
\email{\href{mailto:t.otsu@lse.ac.uk}{t.otsu@lse.ac.uk}}
\thanks{We are grateful to Mikkel Sølvsten and Luke Taylor for helpful comments. Matsushita gratefully acknowledges financial support from JSPS KAKENHI (23K01331); Sunada gratefully acknowledges research support from the Aarhus Center for Econometrics (ACE), funded by the Danish National Research Foundation grant number DNRF186.}
\begin{abstract}
Quadratic-form test statistics are widely used in econometrics, and their performance depends on accurate variance estimation. Conventional plug-in estimators are consistent under the null hypothesis, but under alternatives the drift in the residuals inflates them and the test loses power. We develop a general framework for variance estimation in such statistics, replacing one of the two squared-residual factors by an auxiliary linear combination of the residuals (``cross-fitting'') chosen to annihilate the drift. We characterize the conditional bias of each estimator exactly. The drift enters the plug-in estimator squared, multiplied by quantities bounded away from zero, so its bias is positive whenever the drift is non-degenerate. It reaches the cross-fit estimator only through the part that survives the cross-fitting, and then only through off-diagonal entries of a residual-maker matrix. From this calculation we obtain conditions under which the cross-fit estimator remains consistent under alternatives while the plug-in estimator does not. At a common critical value the cross-fit test therefore rejects whenever the plug-in test does, and against distant alternatives the plug-in statistic converges to a finite limit, small when few observations carry the departure: its power can tend to zero where the cross-fit test's tends to one. We verify the conditions under primitive assumptions in nonparametric specification testing, overidentification testing, and testing many linear restrictions, and illustrate the procedure on the Oregon Health Insurance Experiment.
\end{abstract}

\maketitle

\section{Introduction} \label{sec:intro}

Quadratic-form test statistics play a central role in econometrics and arise in a wide range of applications, including nonparametric specification testing \citep[e.g.][]{HongWhite1995,DonaldImbensNewey2003,SunLi2006}, overidentification testing in instrumental variables models \citep[e.g.][]{AnatolyevGospodinov2011,lee2012hahn,ChaoHausmanNeweySwansonWoutersen2014}, and tests of many linear restrictions \citep[e.g.][]{CattaneoJanssonNewey2018,AnatolyevSolvsten2023}. In these procedures, statistical significance is determined by studentized test statistics, making accurate variance estimation essential for reliable inference. Variance estimators are commonly constructed by plugging residual-based estimates into the population variance formula, and are justified by their consistency under the null hypothesis. Each of these literatures has accordingly been built around valid inference under the null and the asymptotic null distribution.

What such an estimator does under alternatives is considerably less understood, and it can have important consequences for power. Under fixed or local alternatives, the drift component of the residuals enters the variance estimator and can generate substantial variance inflation. As a result, the studentized test statistic may suffer a considerable loss of power even when the underlying quadratic-form statistic itself exhibits a strong signal. This issue becomes particularly pronounced in modern settings where the signal accumulates over many observations or restrictions.

We construct variance estimators for quadratic-form statistics that remain valid under the null while being substantially less sensitive to deterministic drift under alternatives. The construction replaces one of the two squared-residual factors by an auxiliary linear combination of the residuals (``cross-fitting'') chosen to annihilate the drift. Because the drift then survives in only one of the two factors, it can reach the variance estimate only through a second-order covariance, whereas it enters the plug-in estimator directly. This is the construction that \citet{MikushevaSun2022} introduced for the jackknife Anderson--Rubin statistic in a linear instrumental-variables model with many weak instruments; the framework below shows that neither the construction nor the power gain it delivers is specific to that setting.

We first compute the conditional bias of each estimator under the alternative exactly. The two formulas differ in where the systematic component of the residual can appear. The drift reaches the plug-in estimator through products of diagonal quantities, which are bounded away from zero. In contrast, it reaches the cross-fit estimator only through the part of the systematic component that survives the cross-fitting, and, when that part vanishes, only through off-diagonal entries of a residual-maker matrix. The plug-in bias is in addition nonnegative, so it inflates the estimated variance, while the cross-fit bias need not be. From this calculation we obtain conditions on the weights and on the residual decomposition under which the cross-fit estimator is consistent for the asymptotic variance under both hypotheses and the plug-in estimator is inconsistent under the alternative. The condition behind the cross-fit consistency asks that the weights of the quadratic form not be concentrated on a few pairs; the condition behind the plug-in inconsistency asks that the systematic component be non-degenerate
relative to the weights.

Referred to a common critical value, the two studentized statistics differ asymptotically by the factor $\sqrt{1+B_n/V}$, where $B_n$ is the plug-in bias and $V$ the asymptotic variance, so the cross-fit test rejects whenever the plug-in test does with probability approaching one. Also, along local alternatives $B_{n}/V\to0$ and the two tests have the same limiting power. However, against a distant alternative, the plug-in bias grows as fast as the square of the numerator, so the plug-in statistic converges to a finite limit. That limit is invariant to the scale of the systematic component, and it is bounded by the number of observations on which that component is nonzero. When that bound falls below the critical value, the plug-in statistic never reaches it however large the departure, and the plug-in test has power tending to zero where the cross-fit test has power tending to one.

We verify the conditions of the general theory under primitive assumptions in three important econometric applications: nonparametric specification testing, overidentification testing in instrumental variables models, and testing many linear restrictions in linear regression. In each, the asymptotic results follow as direct consequences of the general theory. Although these testing problems arise in different econometric settings, in each of them the drift enters the plug-in estimator squared; what cross-fitting removes differs across the three, as Remark \ref{rem:appbias} sets out. Monte Carlo experiments demonstrate that the proposed procedure keeps size close to nominal over most of the range, turning mildly liberal only in the smallest-sample designs, and delivers substantial power gains relative to conventional plug-in variance estimators. The gains can be substantial when the number of restrictions or instruments is large relative to the sample size, and are largest for many restrictions, where the plug-in procedure also grows more conservative under the null as the number of restrictions increases. We further illustrate the practical relevance of the proposed approach through an empirical analysis of treatment effect heterogeneity in the Oregon Health Insurance Experiment.

Our work is related to the literature on cross-fitting, sample splitting, and leave-out estimation. Cross-fitting and sample splitting have become standard tools for eliminating overfitting bias and obtaining valid inference with flexible first-stage estimators \citep{NeweyRobins2018,Chernozhukov2018}. Their use for variance estimation is a currently active area: leave-out and bias-corrected variance estimators have been developed for high-dimensional linear regression and quadratic-form inference \citep{KlineSaggioSolvsten2020,Jochmans2022}, for testing many linear restrictions under heteroskedasticity \citep{AnatolyevSolvsten2023}, and more recently for quadratic forms under clustered sampling \citep{KolesarMinWangZhang2026}.

Closest to this paper is \citet{MikushevaSun2022}, who establish for the jackknife Anderson--Rubin statistic that the cross-fit estimator of the scale parameter remains consistent under local alternatives, and document the power loss that the plug-in estimator suffers. The present paper generalizes that result in four directions: general weighting arrays and auxiliary vectors in place of a single statistic, an estimated nuisance parameter in the residual, alternatives beyond their local-drift regime, and an exact conditional bias with a nonnegative lower bound for the plug-in estimator in place of an approximation. See Remark \ref{rem:ms22} for the details.

The obstacle the leave-out literature confronts is that an unbiased estimator of $\sigma^{2}_{i}\sigma^{2}_{j}$ built from leave-out residuals is not invariant to the parameterization of the regression, and the literature has taken three routes around it. First, \citet[Remark 3]{AnatolyevSolvsten2023} accept the non-invariance, and pay for it with a leave-three-out construction. Second, \citet{AnatolyevKorobka2026} restore invariance by splitting the sample, at the cost of a bound on how many restrictions may be tested --- $q/n\le1/4$ for the pairwise products.\footnote{\citet[Section 3]{Jochmans2022} explains why the first two routes diverge: an invariant cross-fit estimator needs two disjoint subsamples that each identify the parameter, which forces a bound on the number of restrictions.} We take a third route: impose the restriction under test, which removes the part of the contamination that the restriction identifies, and characterize what remains.\footnote{See also \citet{Boot2023} for a different route to power under many restrictions: joint confidence regions centered at James--Stein averaging estimators, which gain power against sparse alternatives when the restricted estimator points at the right coefficients and lose it when it points at the wrong ones. Ours asks for no such choice, since we change only the variance estimator and leave the statistic alone, and Boot's heteroskedasticity-robust variant uses the same leave-out variance estimator that the present paper replaces.} \citet[Remark 4]{AnatolyevSolvsten2023} point in this direction, and we investigate it in the setting where imposing the restriction leaves only a fixed-dimensional parameter to estimate. Under this narrower setup the resulting test delivers moderately higher power than theirs in our simulations and avoids their leave-three-out construction, which makes it far cheaper to compute.

The paper is organized as follows. Section \ref{sec:gen} develops the general theory, Section \ref{sec:app} the three applications, and Sections \ref{sec:sim} and \ref{sec:emp} the simulation and empirical evidence; the Appendix contains all proofs.

\section{General result} \label{sec:gen}

Let $\{\hat{u}_{i}\}^{n}_{i=1}$ denote estimated residuals or moment conditions. We consider statistics of the form 
\begin{equation}
Q_{n}=\sum^{n}_{i=1}\sum_{j\neq i}a_{ij}\hat{u}_{i}\hat{u}_{j},\label{eq:T}
\end{equation}
where $\{a_{ij}\}$ is an array of weights that may depend on underlying variables. Such statistics arise in a variety of econometric applications, including specification testing, overidentification testing in instrumental variable models, and tests involving many restrictions. We suppose that $\hat u_i$ admits the decomposition 
\begin{equation}
\hat{u}_{i}=\Delta_{i}+e_{i}+\rho_{i},\label{eq:uhat}
\end{equation}
where $\Delta_{i}$ denotes the leading drift component under the alternative hypothesis, $e_{i}$ is a mean-zero stochastic error, and $\rho_{i}$ is an asymptotically negligible remainder. Under the null hypothesis, $\Delta_{i}=0$. Under the alternative hypothesis, the drift component may be non-negligible and can materially affect the behavior of variance estimators.

Let $\sigma^2_i=\mathbb{E}[e^{2}_{i}]$ and
\begin{equation}
V=2\sum^{n}_{i=1}\sum_{j\neq i}a^{2}_{ij}\sigma^2_i \sigma^2_j\label{eq:V}
\end{equation}
denote the leading asymptotic variance component of $Q_{n}$. A conventional plug-in estimator of $V$ is 
\begin{equation}
\hat{V}=2\sum^{n}_{i=1}\sum_{j\neq i}a^{2}_{ij}\hat{u}^{2}_{i}\hat{u}^{2}_{j}.\label{eq:Vhat}
\end{equation}
Substituting \eqref{eq:uhat} into \eqref{eq:Vhat} introduces drift-dependent terms under the alternative hypothesis. In particular, terms of the form $\Delta^{2}_{i}\sigma^{2}_{j}$ and $\Delta^{2}_{i}\Delta^{2}_{j}$ inflate the estimated variance and attenuate the resulting studentized statistic.

To reduce this contamination, we consider a cross-fit variance estimator. Let $\Delta=(\Delta_{1},\ldots,\Delta_{n})^{\prime}$, $e=(e_1,\ldots,e_n)'$, $\rho=(\rho_1,\ldots,\rho_n)'$, and $\ell_{i}$ be an auxiliary vector constructed so that $\ell_i'\Delta=0$. We propose the following cross-fit variance estimator: 
\begin{equation}
\hat{V}_{\mathrm{cf}}=\sum^{n}_{i=1}\sum_{j\neq i}c_{ij}\left\{ \hat{u}_{i}(\ell^{\prime}_{i}\hat{u})\right\} \left\{ \hat{u}_{j}(\ell^{\prime}_{j}\hat{u})\right\}, \label{eq:Vcf}
\end{equation}
where the weights $\{c_{ij}\}$ are chosen to remove the bias. Let $L$ be the matrix with rows $\ell^{\prime}_{i}$. If the $e_{i}$ are independent with mean zero, then for $i\neq j$,
\[
\mathbb{E}\bigl[e_{i}(\ell^{\prime}_{i}e)e_{j}(\ell^{\prime}_{j}e)\bigr]=(L_{ii}L_{jj}+L_{ij}L_{ji})\,\sigma^2_i \sigma^2_j,
\]
so that taking
\[
c_{ij}=\frac{2a^{2}_{ij}}{L_{ii}L_{jj}+L_{ij}L_{ji}}
\]
makes $\mathbb{E}[\hat{V}_{\mathrm{cf}}]=V$ exactly when $\Delta=\rho=0$. 

The two estimators differ in where the drift can appear. By squaring the residuals, $\hat{V}$ carries $\Delta^{2}_{i}$, which is nonnegative, accumulates over $i$, and cannot cancel. In contrast, pairing $\hat{u}_{i}$ with $\ell^{\prime}_{i}\hat{u}$ removes that square, since $\ell^{\prime}_{i}\Delta=0$; products $\Delta_{i}\Delta_{j}$ survive for $i\neq j$, each carrying an off-diagonal weight built from $L$. In the applications, $L$ is a residual-maker matrix $I-P$, so these weights are bounded by the maximal leverage $\max_{i}P_{ii}$. Lemma \ref{lem:bias} makes the comparison exact.

In applications, we decompose $\rho_i$ as
\begin{equation}
\rho_{i}=\omega_{i}+H^{\prime}_{i}\hat{\delta},\label{eq:split}
\end{equation}
where $\omega_{i}$ is an approximation error that does not involve the estimation error, $H=(H_{1},\ldots,H_{n})^{\prime}$ is an $n\times p$ matrix, and $\hat{\delta}$ is an estimation error with $\|\hat{\delta}\|=O_{p}(r_{n})$ for a sequence $r_{n}\to0$. How large $\omega_{i}$ is depends on the quality of the approximation, and how large $H^{\prime}_{i}\hat{\delta}$ is depends on the convergence rate $r_{n}$ of the estimator. 

\textbf{Conditioning}. The arrays $\{a_{ij}\}$ and $\{c_{ij}\}$ and the vectors $\{\ell_{i}\}$ may depend on underlying variables, typically covariates. Let $\mathcal{F}$ denote the $\sigma\text{-field}$ generated by these variables together with $\Delta$ and $\omega$. We assume that the errors are independent across $i$ conditionally on $\mathcal{F}$ with $\mathbb{E}[e_{i}\mid\mathcal{F}]=0$, and every expectation, order of magnitude and limit below is conditional on $\mathcal{F}$. In each application $\mathcal{F}$ is generated by the covariates and instruments, which makes the approximation error $\omega$ measurable. When the design is random, as it is in two of the three applications, conditions stated below as $o(\cdot)$ or $O(\cdot)$ are to be understood as holding in probability, and the conclusions then hold in probability as well. The matrix $H$ need not be $\mathcal{F}$-measurable. For example, in the overidentification application below, $H_{i}=x_{i}$ contains the first-stage error, which is correlated with $e_{i}$ by construction. It enters only through the stochastic-order conditions on the estimation error.

\textbf{Notation}. We take the weights to be symmetric, $a_{ij}=a_{ji}$, as they are in all three applications. This is a normalization rather than a restriction because replacing $a_{ij}$ by $(a_{ij}+a_{ji})/2$ leaves $Q_{n}$ unchanged. Throughout, $\|\cdot\|$ is the Euclidean norm for vectors and the spectral norm for matrices.

\medskip{}
The following assumption is maintained throughout this section. Let $\omega=(\omega_{1},\ldots,\omega_{n})^{\prime}$, $A_{n}=\sum^{n}_{i=1}\sum_{j\neq i}a^{2}_{ij}$, and $C_{n}=\sum^{n}_{i=1}\sum_{j\neq i}|c_{ij}|$.

\begin{assumption} \label{as:gen}
The variables $\{e_{i}\}^{n}_{i=1}$ are, conditionally on $\mathcal{F}$, independent mean-zero random variables satisfying $c\le\sigma^2_i \le C$ and $\max_{1\le i\le n}\mathbb{E}[e^{4}_{i}]\le C$ for some constants $0<c<C<\infty$. Moreover, the leverage condition $L_{ii}L_{jj}+L_{ij}L_{ji}\ge c_{L}>0$ holds for all $i\neq j$, so that $C_{n}\le(2/c_{L})A_{n}$, and there exists a deterministic sequence $b_{n}\ge1$ such that 
\begin{align*}
 & \max_{1\le i\le n}|e_{i}|=O_{p}(b_{n}),\qquad\max_{1\le i\le n}|\ell^{\prime}_{i}e|=O_{p}(b_{n}),\\
 & \max_{1\le i\le n}|\omega_{i}|+\max_{1\le i\le n}|\ell^{\prime}_{i}\omega|=o_{p}(b^{-3}_{n}),\qquad\|\hat{\delta}\|=O_{p}(r_{n}).
\end{align*}
\end{assumption}

The assumption requires $\omega$ to be small uniformly in $i$, but requires only a rate for $\hat{\delta}$. By \eqref{eq:split} the estimation error enters into both factors of $\hat{u}_{i}(\ell^{\prime}_{i}\hat{u})$: it enters $\hat{u}_{i}$ through $H^{\prime}_{i}\hat{\delta}$ and $\ell^{\prime}_{i}\hat{u}$ through $\ell^{\prime}_{i}H\hat{\delta}$. Each factor is therefore of first degree in $\hat{\delta}$ and  their product is of second degree, so $\hat{V}_{\mathrm{cf}}$, which pairs two such products, is a polynomial of degree four whose coefficients are sums over $i\neq j$. The next lemma writes that polynomial explicitly.
\begin{lem} \label{lem:exp}
Let $\check{u}_{i}=\Delta_{i}+e_{i}+\omega_{i}$ be the residual stripped of estimation error, $\check{u}=(\check{u}_{1},\ldots,\check{u}_{n})^{\prime}$, and $h_{i}=\ell^{\prime}_{i}H$, and define
\[
g_{0i}=\check{u}_{i}(\ell^{\prime}_{i}\check{u}),\qquad g_{1i}=\check{u}_{i}h^{\prime}_{i}+(\ell^{\prime}_{i}\check{u})H_{i},\qquad g_{2i}=\tfrac{1}{2}(H_{i}h_{i}+h^{\prime}_{i}H^{\prime}_{i}).
\]
Then $\hat{u}_{i}(\ell^{\prime}_{i}\hat{u})=g_{0i}+g^{\prime}_{1i}\hat{\delta}+\hat{\delta}^{\prime}g_{2i}\hat{\delta}$ and
\begin{equation}
\hat{V}_{\mathrm{cf}}=\sum^{n}_{i=1}\sum_{j\neq i}c_{ij}g_{0i}g_{0j}+\mathcal{G}^{\prime}_{1}\hat{\delta}+\hat{\delta}^{\prime}\mathcal{G}_{2}\hat{\delta}+\mathcal{R}_{n},\label{eq:expansion}
\end{equation}
 where
\[
\mathcal{G}_{1}=\sum^{n}_{i=1}\sum_{j\neq i}c_{ij}(g_{0i}g_{1j}+g_{0j}g_{1i}),\qquad\mathcal{G}_{2}=\sum^{n}_{i=1}\sum_{j\neq i}c_{ij}(g_{1i}g^{\prime}_{1j}+g_{0i}g_{2j}+g_{0j}g_{2i}),
\]
and the remainder $\mathcal{R}_{n}$, which collects the terms of degree three and four in $\hat{\delta}$, satisfies $|\mathcal{R}_{n}|\le\|\hat{\delta}\|^{3}\mathcal{G}_{3}+\|\hat{\delta}\|^{4}\mathcal{G}_{4}$ for
\[
\mathcal{G}_{3}=\sup_{\|\delta\|=1}\Bigl|\sum^{n}_{i=1}\sum_{j\neq i}c_{ij}\bigl\{(g^{\prime}_{1i}\delta)(\delta^{\prime}g_{2j}\delta)+(g^{\prime}_{1j}\delta)(\delta^{\prime}g_{2i}\delta)\bigr\}\Bigr|,\qquad\mathcal{G}_{4}=\sup_{\|\delta\|=1}\Bigl|\sum^{n}_{i=1}\sum_{j\neq i}c_{ij}(\delta^{\prime}g_{2i}\delta)(\delta^{\prime}g_{2j}\delta)\Bigr|.
\]
If $\ell^{\prime}_{i}H=0$ for every $i$, then $g_{2i}=0$, $g_{1i}=(\ell^{\prime}_{i}\check{u})H_{i}$, and $\mathcal{R}_{n}=0$.
\end{lem}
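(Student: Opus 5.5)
This is a purely algebraic identity, so the plan is to expand directly with no probabilistic input. By \eqref{eq:uhat} and \eqref{eq:split}, $\hat{u}_{i}=\check{u}_{i}+H^{\prime}_{i}\hat{\delta}$, and stacking gives $\hat{u}=\check{u}+H\hat{\delta}$. Hence $\ell^{\prime}_{i}\hat{u}=\ell^{\prime}_{i}\check{u}+h_{i}\hat{\delta}$, where $h_{i}=\ell^{\prime}_{i}H$ is a $1\times p$ row. Multiplying the two factors gives
\[
\hat{u}_{i}(\ell^{\prime}_{i}\hat{u})=\check{u}_{i}(\ell^{\prime}_{i}\check{u})+\bigl\{\check{u}_{i}h_{i}\hat{\delta}+(\ell^{\prime}_{i}\check{u})H^{\prime}_{i}\hat{\delta}\bigr\}+(H^{\prime}_{i}\hat{\delta})(h_{i}\hat{\delta}).
\]
The first term is $g_{0i}$ and the middle bracket is $g^{\prime}_{1i}\hat{\delta}$. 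The last term equals $\hat{\delta}^{\prime}H_{i}h_{i}\hat{\delta}$. Only the symmetric part of a matrix survives in a quadratic form, so this equals $\hat{\delta}^{\prime}g_{2i}\hat{\delta}$ with $g_{2i}=\tfrac12(H_{i}h_{i}+h^{\prime}_{i}H^{\prime}_{i})$. That proves the first display.

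Next, substitute this into \eqref{eq:Vcf}. Write $x_{i}=g_{0i}+X_{1i}+X_{2i}$, where $X_{1i}=g^{\prime}_{1i}\hat{\delta}$ and $X_{2i}=\hat{\delta}^{\prime}g_{2i}\hat{\delta}$ have degrees one and two in $\hat{\delta}$. Expand $\sum_{i}\sum_{j\neq i}c_{ij}x_{i}x_{j}$ and group the terms by degree.
\begin{itemize}
\item Degree zero gives $\sum\sum c_{ij}g_{0i}g_{0j}$.
\item Degree one gives $\sum\sum c_{ij}(g_{0i}X_{1j}+X_{1i}g_{0j})=\mathcal{G}^{\prime}_{1}\hat{\delta}$.
\item Degree two collects $X_{1i}X_{1j}+g_{0i}X_{2j}+X_{2i}g_{0j}$. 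Using $X_{1i}X_{1j}=\hat{\delta}^{\prime}g_{1i}g^{\prime}_{1j}\hat{\delta}$, this is $\hat{\delta}^{\prime}\mathcal{G}_{2}\hat{\delta}$.
\end{itemize}
The definition of $\mathcal{G}_{2}$ is not symmetrized. That is harmless, because only its quadratic form enters.

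The remainder is $\mathcal{R}_{n}=\sum\sum c_{ij}\{X_{1i}X_{2j}+X_{2i}X_{1j}\}+\sum\sum c_{ij}X_{2i}X_{2j}$. If $\hat{\delta}=0$ the bound is trivial. Otherwise set $\delta=\hat{\delta}/\|\hat{\delta}\|$. By homogeneity, the degree-three sum equals $\|\hat{\delta}\|^{3}$ times the expression inside $\mathcal{G}_{3}$ evaluated at this unit $\delta$. That expression is bounded in absolute value by the supremum $\mathcal{G}_{3}$. The same argument bounds the degree-four sum by $\|\hat{\delta}\|^{4}\mathcal{G}_{4}$, and the triangle inequality then gives the stated bound. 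The only care needed is in this homogeneity step, which is why the suprema are taken over the unit sphere rather than at $\hat{\delta}$. There is no real obstacle.

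For the final claim, suppose $\ell^{\prime}_{i}H=0$ for all $i$. Then $h_{i}=0$, so $g_{2i}=0$ and $g_{1i}=(\ell^{\prime}_{i}\check{u})H_{i}$. Consequently $X_{2i}=0$, every degree-three and degree-four term vanishes identically, and $\mathcal{R}_{n}=0$.
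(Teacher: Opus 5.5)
Your proof is correct and follows the paper's own argument essentially step for step. Both expand the two factors using $\hat{u}=\check{u}+H\hat{\delta}$, symmetrize $H_{i}h_{i}$ inside the quadratic form, group the products in $\hat{V}_{\mathrm{cf}}$ by degree in $\hat{\delta}$, and bound the degree-three and degree-four terms by homogeneity over the unit sphere; your separate treatment of $\hat{\delta}=0$ is a small tidying that the paper leaves implicit.
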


Under the law of large numbers \eqref{eq:cond} for the errors and the negligibility conditions \eqref{eq:condest} for the estimation error, the cross-fit estimator is consistent under the null. Both are verified for each application considered in Section \ref{sec:app}.
\begin{thm} \label{thm:null}
Suppose Assumption \ref{as:gen} holds and, under $H_{0}$, (\ref{eq:uhat}) holds with $\Delta_{i}=0$. Assume further that 
\begin{equation}
\sum^{n}_{i=1}\sum_{j\neq i}\left\{ c_{ij}e_{i}(\ell_{i}'e)e_{j}(\ell_{j}'e)-2a^{2}_{ij}\sigma^2_i \sigma^2_j\right\} =o_{p}(V).\label{eq:cond}
\end{equation}
In addition, the estimation error is negligible for the variance:
\begin{equation}
\|\hat{\delta}\|\,\|\mathcal{G}_{1}\|=o_{p}(V),\quad\|\hat{\delta}\|^{2}\,\|\mathcal{G}_{2}\|=o_{p}(V),\quad\|\hat{\delta}\|^{3}\mathcal{G}_{3}=o_{p}(V),\quad\|\hat{\delta}\|^{4}\mathcal{G}_{4}=o_{p}(V),\label{eq:condest}
\end{equation}
where $\mathcal{G}_{1}$ and $\mathcal{G}_{2}$ are defined in Lemma \ref{lem:exp}. Then
\[
\hat{V}_{\mathrm{cf}}=V+o_{p}(V).
\]
\end{thm}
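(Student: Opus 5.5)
Under $H_{0}$ we have $\Delta=0$, so $\check u_{i}=e_{i}+\omega_{i}$. By Lemma \ref{lem:exp}, $\hat V_{\mathrm{cf}}$ equals a leading term $\sum_{i}\sum_{j\neq i}c_{ij}g_{0i}g_{0j}$ plus $\mathcal{G}_{1}'\hat\delta+\hat\delta'\mathcal{G}_{2}\hat\delta+\mathcal{R}_{n}$. By Cauchy--Schwarz, $|\mathcal{G}_{1}'\hat\delta|\le\|\hat\delta\|\,\|\mathcal{G}_{1}\|$ and $|\hat\delta'\mathcal{G}_{2}\hat\delta|\le\|\hat\delta\|^{2}\|\mathcal{G}_{2}\|$. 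Lemma \ref{lem:exp} also gives $|\mathcal{R}_{n}|\le\|\hat\delta\|^{3}\mathcal{G}_{3}+\|\hat\delta\|^{4}\mathcal{G}_{4}$. Condition \eqref{eq:condest} therefore makes all the estimation-error terms $o_{p}(V)$. It remains to show that the leading term equals $V+o_{p}(V)$.

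For the leading term, write $g_{0i}=(e_{i}+\omega_{i})(\ell_{i}'e+\ell_{i}'\omega)=\tilde e_{i}+\eta_{i}$. Here $\tilde e_{i}=e_{i}(\ell_{i}'e)$, and $\eta_{i}$ collects the three terms that contain at least one factor of $\omega$:
\[
\eta_{i}=e_{i}(\ell_{i}'\omega)+\omega_{i}(\ell_{i}'e)+\omega_{i}(\ell_{i}'\omega).
\]
Then
\[
\sum_{i}\sum_{j\neq i}c_{ij}g_{0i}g_{0j}=\sum_{i}\sum_{j\neq i}c_{ij}\tilde e_{i}\tilde e_{j}+2\sum_{i}\sum_{j\neq i}c_{ij}\tilde e_{i}\eta_{j}+\sum_{i}\sum_{j\neq i}c_{ij}\eta_{i}\eta_{j},
\]
using $c_{ij}=c_{ji}$, which follows from the symmetry of $a_{ij}$ and of the denominator. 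By \eqref{eq:cond}, the first sum equals $V+o_{p}(V)$.

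For the other two sums, use the uniform bounds in Assumption \ref{as:gen}. We have $\max_{i}|\tilde e_{i}|=O_{p}(b_{n}^{2})$. Since $b_{n}\ge1$, $\max_{i}|\eta_{i}|\le O_{p}(b_{n})\,o_{p}(b_{n}^{-3})+o_{p}(b_{n}^{-6})=o_{p}(b_{n}^{-2})$. The cross sum is then bounded by $C_{n}\max_{i}|\tilde e_{i}|\max_{j}|\eta_{j}|=C_{n}\,o_{p}(1)$, and the last sum by $C_{n}\,o_{p}(b_{n}^{-4})=C_{n}\,o_{p}(1)$. Finally, $C_{n}\le(2/c_{L})A_{n}$ and $V\ge2c^{2}A_{n}$ because $\sigma_{i}^{2}\ge c$. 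Hence $C_{n}=O(V)$, and both sums are $o_{p}(V)$. This is exactly where the rate $o_{p}(b_{n}^{-3})$ on $\omega$ is calibrated: it makes the product of a quadratic-in-$e$ factor with a linear-in-$\omega$ factor negligible.

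Combining the three pieces gives $\hat V_{\mathrm{cf}}=V+o_{p}(V)$. The only delicate step is the cross term $\sum c_{ij}\tilde e_{i}\eta_{j}$. A crude sup-norm bound is the simplest route, but it has to be checked against the scaling $C_{n}\asymp V$. If a finer bound were needed, I would instead condition on $\mathcal{F}$ and bound second moments using independence. With the rates as assumed, however, the sup-norm bound suffices. All statements are conditional on $\mathcal{F}$ (in probability for random designs), as stipulated.
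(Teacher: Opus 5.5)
Your proof is correct and follows the paper's argument essentially step for step. You dispose of the estimation-error terms via Lemma \ref{lem:exp} and \eqref{eq:condest}, split $g_{0i}$ into $e_{i}(\ell_{i}'e)$ plus the $\omega$-carrying remainder, apply \eqref{eq:cond} to the main piece, and bound the rest by $C_{n}$ times an $o_{p}(1)$ sup-norm using $C_{n}\le(2/c_{L})A_{n}$ and $V\ge2c^{2}A_{n}$; the only cosmetic difference is that you merge the two cross terms via $c_{ij}=c_{ji}$, whereas the paper bounds them separately.
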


Under the alternative the residual has a systematic component. Denote $\pi_{i}=\Delta_{i}+\omega_{i}$ for that systematic component and $\lambda_{i}=\ell^{\prime}_{i}\omega$ for the part of it that survives the auxiliary factor $\ell^{\prime}_{i}$. Then the two factors of $\hat{u}_{i}(\ell^{\prime}_{i}\hat{u})$ can be written as
\[
\hat{u}_{i}=\underbrace{\Delta_{i}+\omega_{i}}_{\pi_{i}}+e_{i}+H^{\prime}_{i}\hat{\delta},\qquad\ell^{\prime}_{i}\hat{u}=\underbrace{\ell^{\prime}_{i}\Delta}_{0}+\underbrace{\ell^{\prime}_{i}\omega}_{\lambda_{i}}+\ell^{\prime}_{i}e+\ell^{\prime}_{i}H\hat{\delta},
\]
the drift dropping out of the second because $\ell^{\prime}_{i}\Delta=0$. Let $\Sigma=\operatorname{diag}(\sigma^{2}_{1},\ldots,\sigma^{2}_{n})$ and $\mu_{3i}=\mathbb{E}[e^{3}_{i}]$. In this notation the residual of Lemma \ref{lem:exp} is $\check{u}_{i}=\pi_{i}+e_{i}$, and we write 
\[
\check{V} =2\sum^{n}_{i=1}\sum_{j\neq i}a^{2}_{ij}\check{u}^{2}_{i}\check{u}^{2}_{j},\qquad
\check{V}_{\mathrm{cf}} =\sum^{n}_{i=1}\sum_{j\neq i}c_{ij}\left\{ \check{u}_{i}(\ell^{\prime}_{i}\check{u})\right\} \left\{ \check{u}_{j}(\ell^{\prime}_{j}\check{u})\right\} ,
\]
which are \eqref{eq:Vhat} and \eqref{eq:Vcf} with $\hat u$ replaced by $\check u$ (i.e., $\hat\delta=0$) throughout.
\begin{lem} \label{lem:bias}
Suppose \eqref{eq:uhat} holds under $H_{1}$ with $\ell^{\prime}_{i}\Delta=0$, and let $\{e_{i}\}$ be independent with mean zero and bounded third moments. Then
\begin{equation}
B_n := \mathbb{E}[\check{V}]-V=2\sum^{n}_{i=1}\sum_{j\neq i}a^{2}_{ij}\bigl\{\pi^{2}_{i}\sigma^{2}_{j}+\sigma^{2}_{i}\pi^{2}_{j}+\pi^{2}_{i}\pi^{2}_{j}\bigr\},\label{eq:biasplug}
\end{equation}
and
\begin{align}
B^{\mathrm{cf}}_{n} := \mathbb{E}[\check{V}_{\mathrm{cf}}]-V & =\sum^{n}_{i=1}\sum_{j\neq i}c_{ij}\Bigl\{\pi_{i}\lambda_{i}\pi_{j}\lambda_{j}+\pi_{i}\lambda_{i}L_{jj}\sigma^{2}_{j}+L_{ii}\sigma^{2}_{i}\pi_{j}\lambda_{j}\label{eq:biascf}\\
 & \qquad+\pi_{i}\pi_{j}(L\Sigma L^{\prime})_{ij}+\pi_{i}\lambda_{j}L_{ij}\sigma^{2}_{j}+\lambda_{i}\pi_{j}L_{ji}\sigma^{2}_{i}\nonumber \\
 & \qquad+\pi_{i}L_{ij}L_{jj}\mu_{3j}+\pi_{j}L_{ji}L_{ii}\mu_{3i}\Bigr\}.\nonumber 
\end{align}
\end{lem}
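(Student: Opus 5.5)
The plan is a direct moment computation, done separately for the two estimators. Throughout, $\pi_i$, $\lambda_i$, $L$ and the weights are held fixed, since everything is conditional on $\mathcal{F}$, and only the independent mean-zero errors $e_i$ are random.

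\textbf{Plug-in bias.} Write $\check u_i=\pi_i+e_i$, so that $\check u_i^2=\pi_i^2+2\pi_ie_i+e_i^2$. For $i\neq j$, independence gives
\[
\mathbb{E}[\check u_i^2\check u_j^2]=\mathbb{E}[\check u_i^2]\,\mathbb{E}[\check u_j^2]=(\pi_i^2+\sigma_i^2)(\pi_j^2+\sigma_j^2).
\]
Expanding the product and subtracting the term $\sigma_i^2\sigma_j^2$, which reproduces $V$, yields \eqref{eq:biasplug}. This step needs only second moments.

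\textbf{Cross-fit bias: setup.} For each $i$, use $\ell_i'\Delta=0$ to write the two factors as
\[
\check u_i=\pi_i+e_i,\qquad \ell_i'\check u=\lambda_i+\ell_i'e .
\]
The summand is then $c_{ij}X_iX_j$, where
\[
X_i=(\pi_i+e_i)(\lambda_i+\ell_i'e)=\pi_i\lambda_i+\pi_i\ell_i'e+\lambda_ie_i+e_i\ell_i'e .
\]
Unlike the plug-in case, $X_i$ and $X_j$ are not independent, because $\ell_i'e$ involves all the errors. I would therefore expand $\mathbb{E}[X_iX_j]$ into its $16$ cross products and classify them by their total degree in $e$.

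\textbf{Cross-fit bias: the terms by degree.}
\begin{itemize}
\item \emph{Degree $0$.} The only term is $\pi_i\lambda_i\pi_j\lambda_j$.
\item \emph{Degree $1$.} These terms vanish because the errors have mean zero.
\item \emph{Degree $2$.} Use $\mathbb{E}[e_ke_m]=\sigma_k^2\mathbf{1}\{k=m\}$. This gives:
  \begin{itemize}
  \item $\mathbb{E}[\ell_i'e\,\ell_j'e]=(L\Sigma L')_{ij}$;
  \item $\mathbb{E}[e_i\,\ell_j'e]=L_{ji}\sigma_i^2$;
  \item $\mathbb{E}[e_i\ell_i'e]=L_{ii}\sigma_i^2$;
  \item $\mathbb{E}[e_ie_j]=0$ for $i\neq j$.
  \end{itemize}
  Pairing each of these with its deterministic coefficient produces $\pi_i\lambda_iL_{jj}\sigma_j^2$, $L_{ii}\sigma_i^2\pi_j\lambda_j$, $\pi_i\pi_j(L\Sigma L')_{ij}$, $\pi_i\lambda_jL_{ij}\sigma_j^2$ and $\lambda_i\pi_jL_{ji}\sigma_i^2$. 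I will check which index carries the $L$ entry in the mixed pairs, e.g.\ that $\pi_i\ell_i'e\cdot\lambda_je_j$ gives $L_{ij}\sigma_j^2$.
\item \emph{Degree $3$.} The terms are $\pi_i\,\ell_i'e\,e_j\,\ell_j'e$, its mirror image, and the corresponding terms with $\lambda$. Expanding $\mathbb{E}[e_ke_me_r]$, only $k=m=r$ survives, with value $\mu_{3k}$. The $\lambda$-terms, e.g.\ $\lambda_ie_ie_j\ell_j'e$, carry the factor $e_ie_j$ with $i\neq j$ and so vanish. The term $\pi_i\ell_i'e\,e_j\ell_j'e$ forces all three indices to equal $j$, giving $\pi_iL_{ij}L_{jj}\mu_{3j}$, and symmetrically for the mirror term. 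This is where the bounded third moments are used.
\item \emph{Degree $4$.} The term $\mathbb{E}[e_i\ell_i'e\,e_j\ell_j'e]$ equals $(L_{ii}L_{jj}+L_{ij}L_{ji})\sigma_i^2\sigma_j^2$, using the computation stated before \eqref{eq:Vcf}. Multiplied by $c_{ij}$, it returns $2a_{ij}^2\sigma_i^2\sigma_j^2$, and summing over $i\neq j$ this cancels $V$.
\end{itemize}

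\textbf{Main obstacle.} The delicate part is the degree-4 term. Expanding $\mathbb{E}[e_ie_je_ke_m]$ over $k,m$ produces index coincidences such as $k=i$, $m=j$ or $k=j$, $m=i$, and also configurations where all indices coincide. Fourth moments can only appear when all four indices are equal, which would require $i=j$ and is excluded. I will verify this case by case. The claim is that the surviving pairings are exactly those with $k=i$, $m=j$ and those with $k=j$, $m=i$, while $k=m\neq i,j$ gives $\mathbb{E}[e_ie_j]\sigma_k^2=0$.

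After these checks, summing $c_{ij}\mathbb{E}[X_iX_j]-2a_{ij}^2\sigma_i^2\sigma_j^2$ over $i\neq j$ gives \eqref{eq:biascf}.
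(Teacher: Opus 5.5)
Your proposal is correct and follows the paper's proof step for step. The plug-in bias comes from independence of $\check u_i^2$ and $\check u_j^2$, and the cross-fit bias from expanding $\check A_i\check A_j$ into sixteen products sorted by degree in $e$, with the same nine survivors: one of degree zero, five of degree two, the two third-moment terms, and the degree-four term that returns $V$ through the choice of $c_{ij}$ (the degree-four case you flag as the main obstacle is routine, since $i\neq j$ forces $\{k,m\}=\{i,j\}$).
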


The plug-in squares $\hat{u}_{i}$, so the systematic component multiplies itself and enters the bias as $\pi^{2}_{i}$. The cross-fit multiplies $\hat{u}_{i}$ by $\ell^{\prime}_{i}\hat{u}$, from which $\ell^{\prime}_{i}\Delta=0$ has already removed the drift; the systematic component of that factor is $\lambda_{i}$, and the bias carries $\pi_{i}\lambda_{i}$.

Every term of \eqref{eq:biasplug} is nonnegative, so with $\sigma^{2}_{i}\ge c$ the plug-in bias is bounded below by $4c\sum^{n}_{i=1}\sum_{j\neq i}a^{2}_{ij}\pi^{2}_{i}$ and is strictly positive whenever $\sum^{n}_{i=1}\sum_{j\neq i}a^{2}_{ij}\pi^{2}_{i}>0$. The cross-fit bias need not be positive, since its terms carry $\pi_{i}\lambda_{i}$ and $\mu_{3i}$, whose signs are unrestricted, so no analogous lower bound is available.

The first two terms of \eqref{eq:biasplug} are quadratic in $\pi$ and the last is quartic. The quadratic pair dominates for small $\pi$: with $\pi_{i}\equiv\pi$ and $\sigma_{i}\equiv\sigma$ they contribute $2\pi^{2}\sigma^{2}$ against $\pi^{4}$, so the two are comparable at $\pi^{2}=2\sigma^{2}$. Under local alternatives, along which $\pi_{i}\to0$, the quartic term is the first to vanish and the quadratic one governs.

\begin{rem} \label{rem:lambda0}
When $\ell_i'\pi=0$ for every $i$ so that $\lambda\equiv0$, the bias reduces to $\sum^{n}_{i=1}\sum_{j\neq i}c_{ij}\{\pi_{i}\pi_{j}(L\Sigma L^{\prime})_{ij}+\pi_{i}L_{ij}L_{jj}\mu_{3j}+\pi_{j}L_{ji}L_{ii}\mu_{3i}\}$. Every surviving term carries an off-diagonal factor, $(L\Sigma L^{\prime})_{ij}$ or $L_{ij}$, and the two third-moment terms, being linear rather than quadratic in $\pi$, dominate when the drift is small. The maintained assumption bounds $L_{ii}L_{jj}+L_{ij}L_{ji}$ away from zero but places no upper bound on the off-diagonal entries, so this reduction by itself does not make the bias negligible. Below, we impose a condition on those entries explicitly. $\blacksquare$
\end{rem}

The next theorem is the main theoretical result of the paper. Under the alternative the cross-fit variance estimator remains consistent for $V$ while the plug-in estimator converges to $V+B_{n}$, and $B_{n}$ is bounded away from zero relative to $V$ under a non-degeneracy condition on the systematic component.
\begin{thm} \label{thm:alt}
Suppose Assumption \ref{as:gen} holds and, under $H_{1}$, \eqref{eq:uhat} holds with $\ell^{\prime}_{i}\Delta=0$. Let $\check{A}_{i}=\check{u}_{i}(\ell^{\prime}_{i}\check{u})$ and assume that
\begin{align}
 & \sum^{n}_{i=1}\sum_{j\neq i}c_{ij}\bigl\{\check{A}_{i}\check{A}_{j}-\mathbb{E}[\check{A}_{i}\check{A}_{j}]\bigr\}=o_{p}(V),\label{eq:condalt}\\
 & 2\sum^{n}_{i=1}\sum_{j\neq i}a^{2}_{ij}\bigl\{\check{u}^{2}_{i}\check{u}^{2}_{j}-\mathbb{E}[\check{u}^{2}_{i}]\mathbb{E}[\check{u}^{2}_{j}]\bigr\}=o_{p}(V+B_{n}),\nonumber \\
 & \hat{V}_{\mathrm{cf}}-\check{V}_{\mathrm{cf}}=o_{p}(V),\qquad\hat{V}-\check{V}=o_{p}(V+B_{n}),\qquad B^{\mathrm{cf}}_{n}=o(V).\nonumber 
\end{align}
Then 
\[
\frac{\hat{V}_{\mathrm{cf}}}{V}\overset{p}{\to}1,\qquad\frac{\hat{V}}{V+B_{n}}\overset{p}{\to}1.
\]
If in addition $\liminf_{n}A^{-1}_{n}\sum^{n}_{i=1}\sum_{j\neq i}a^{2}_{ij}\pi^{2}_{i}>0$, then $\liminf_{n}B_{n}/V>0$, so the plug-in estimator is not consistent for $V$.
\end{thm}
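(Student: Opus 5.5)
The argument is a chain of decompositions, with each error term controlled by one of the high-level conditions in \eqref{eq:condalt}. Lemma \ref{lem:bias} then identifies the means.

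\textbf{Cross-fit estimator.} Write
\[
\hat V_{\mathrm{cf}}-V=(\hat V_{\mathrm{cf}}-\check V_{\mathrm{cf}})+(\check V_{\mathrm{cf}}-\mathbb{E}[\check V_{\mathrm{cf}}])+B^{\mathrm{cf}}_n .
\]
The first piece is $o_p(V)$ by assumption. The second equals $\sum_i\sum_{j\neq i}c_{ij}\{\check A_i\check A_j-\mathbb{E}[\check A_i\check A_j]\}$ by the definition of $\check V_{\mathrm{cf}}$, since the $c_{ij}$ are $\mathcal F$-measurable and all expectations are conditional on $\mathcal F$; it is therefore $o_p(V)$ by the first line of \eqref{eq:condalt}. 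The third is $o(V)$ by hypothesis, where $B^{\mathrm{cf}}_n=\mathbb{E}[\check V_{\mathrm{cf}}]-V$ by Lemma \ref{lem:bias}. Its hypotheses hold because Assumption \ref{as:gen} gives conditional independence and mean zero, and bounded fourth moments imply bounded third moments. Dividing by $V$ gives $\hat V_{\mathrm{cf}}/V\to_p1$.

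\textbf{Plug-in estimator.} The main point is to match the centering in the second line of \eqref{eq:condalt} with $V+B_n$. Since $\check u_i=\pi_i+e_i$ with $\pi_i$ being $\mathcal F$-measurable, we have $\mathbb{E}[\check u_i^2]=\pi_i^2+\sigma_i^2$. Hence
\[
2\sum_i\sum_{j\neq i}a_{ij}^2\,\mathbb{E}[\check u_i^2]\,\mathbb{E}[\check u_j^2]=2\sum_i\sum_{j\neq i}a_{ij}^2(\pi_i^2+\sigma_i^2)(\pi_j^2+\sigma_j^2)=V+B_n ,
\]
where the last equality expands the product and uses \eqref{eq:biasplug}. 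Conditional independence for $i\neq j$ also makes this equal to $\mathbb{E}[\check V]$, consistent with Lemma \ref{lem:bias}. Then
\[
\hat V-(V+B_n)=(\hat V-\check V)+\bigl(\check V-(V+B_n)\bigr)=o_p(V+B_n)
\]
by the remaining two hypotheses, so $\hat V/(V+B_n)\to_p1$.

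\textbf{Non-degeneracy.} Every term of \eqref{eq:biasplug} is nonnegative. Keeping only the first two terms and using symmetry $a_{ij}=a_{ji}$ and $\sigma_i^2\ge c$ gives
\[
B_n\ge 4c\sum_i\sum_{j\neq i}a_{ij}^2\pi_i^2 .
\]
Also $V\le 2C^2A_n$ since $\sigma_i^2\le C$. Therefore
\[
\frac{B_n}{V}\ge\frac{2c}{C^2}\,A_n^{-1}\sum_i\sum_{j\neq i}a_{ij}^2\pi_i^2 ,
\]
whose liminf is positive under the stated condition. Combined with $\hat V/(V+B_n)\to_p1$, this gives $\hat V/V=(1+B_n/V)(1+o_p(1))$, which is bounded away from $1$ along the sequence, so $\hat V$ is inconsistent.

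Under a random design the same steps hold with every bound read in probability. The only step needing care is confirming that the centering term equals $V+B_n$, which the expansion above settles. Everything else is bookkeeping, since the substantive work sits in the assumed conditions.
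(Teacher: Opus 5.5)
Your proposal is correct and follows the paper's proof essentially step for step. It uses the same three-term decomposition of $\hat V_{\mathrm{cf}}-V$, the same two-term decomposition of $\hat V-(V+B_n)$, and the same lower bound $B_n\ge 4c\sum_{i}\sum_{j\neq i}a_{ij}^2\pi_i^2$ compared against $V\le 2C^2A_n$. Your explicit check that the centering $2\sum_{i}\sum_{j\neq i}a_{ij}^2\,\mathbb{E}[\check u_i^2]\,\mathbb{E}[\check u_j^2]$ equals $\mathbb{E}[\check V]=V+B_n$ fills in a step the paper leaves implicit.
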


The first two conditions of \eqref{eq:condalt} are laws of large numbers for $\check{V}_{\mathrm{cf}}$ and $\check{V}$; the next two control the estimation error, and are the alternative-hypothesis counterparts of \eqref{eq:condest}; the last asks that the cross-fit bias be negligible. By Lemma \ref{lem:exp}, $\hat{V}_{\mathrm{cf}}-\check{V}_{\mathrm{cf}}=\mathcal{G}^{\prime}_{1}\hat{\delta}+\hat{\delta}^{\prime}\mathcal{G}_{2}\hat{\delta}+\mathcal{R}_{n}$, so that the difference is $o_{p}(V)$ whenever
$\|\hat{\delta}\|\|\mathcal{G}_{1}\|=o_{p}(V)$,
$\|\hat{\delta}\|^{2}\|\mathcal{G}_{2}\|=o_{p}(V)$,
$\|\hat{\delta}\|^{3}\mathcal{G}_{3}=o_{p}(V)$, and
$\|\hat{\delta}\|^{4}\mathcal{G}_{4}=o_{p}(V)$. Its plug-in counterpart $\hat{V}-\check{V}=o_{p}(V+B_{n})$ is the same statement for $\hat{V}$. When $\ell^{\prime}_{i}H=0$ the cross-fit expansion has $\mathcal{R}_{n}=0$, so only $\mathcal{G}_{1}$ and $\mathcal{G}_{2}$ need bounding. However, the plug-in expansion keeps its degree-three and degree-four terms, which must be bounded as well. Lemma \ref{lem:lln} gives aggregate bounds that imply the first two conditions of \eqref{eq:condalt} and $B^{\mathrm{cf}}_{n}=o(V)$, which the three applications verify below.

\begin{rem} \label{rem:ms22}
\citet{MikushevaSun2022} propose the jackknife Anderson--Rubin statistic in a linear instrumental-variables model with many weak instruments; their Theorem 3 shows that the cross-fit estimator of the scale parameter remains consistent under local alternatives, and their Section 4.2 documents the resulting power loss of the plug-in estimator. Theorem \ref{thm:alt} extends that result in four directions. First, it covers general weighting arrays $\{a_{ij}\}$ and auxiliary vectors $\{\ell_{i}\}$ rather than a single statistic. Second, it allows the residual to depend on an estimated nuisance parameter through $H^{\prime}_{i}\hat{\delta}$, which their implied error does not. The terms $\mathcal{G}_{1},\ldots,\mathcal{G}_{4}$ of Lemma \ref{lem:exp} control that dependence. Third, it asks only that the cross-fit bias be negligible, in place of the upper bound on the drift relative to $A_{n}$ that their local-alternative condition imposes. Finally, it establishes the inconsistency of the plug-in estimator formally; their Section 4.2 argues the point informally. The comparison extends to the bias expressions. Their leading term for the plug-in estimator is quartic in the drift alone, where \eqref{eq:biasplug} adds a quadratic one that dominates it at moderate drift. For the cross-fit estimator they establish consistency but no bias formula, where \eqref{eq:biascf} is exact and has a third-moment term that leads when the drift is small. $\blacksquare$
\end{rem}

Write $T_{n}=Q_{n}/\sqrt{\hat{V}}$ and $T^{\mathrm{cf}}_{n}$ for the same quantity with $\hat{V}_{\mathrm{cf}}$ in the denominator. The next result shows that, compared against a common critical value, the two-sided test based on $T^{\mathrm{cf}}_{n}$ has a rejection region containing that based on $T_{n}$, and its rejection probability is at least as large.
\begin{cor} \label{cor:pow}
Under the conditions of Theorem \ref{thm:alt},
\[
\frac{T^{\mathrm{cf}}_{n}}{T_{n}}=\sqrt{\frac{\hat{V}}{\hat{V}_{\mathrm{cf}}}}=\sqrt{1+\frac{B_{n}}{V}}\bigl\{1+o_{p}(1)\bigr\}.
\]
If in addition $\liminf_{n}B_{n}/V\ge\gamma>0$, then $P(\hat{V}\ge\hat{V}_{\mathrm{cf}})\to1$, so that $|T^{\mathrm{cf}}_{n}|\ge|T_{n}|$ with probability approaching one. 
\end{cor}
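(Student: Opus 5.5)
The plan is to deduce Corollary \ref{cor:pow} directly from the two probability limits in Theorem \ref{thm:alt}; no new stochastic analysis is required. Both statistics share the numerator $Q_n$, so on the event $Q_n\neq0$, $\hat V>0$, $\hat V_{\mathrm{cf}}>0$ we have $T^{\mathrm{cf}}_n/T_n=\sqrt{\hat V/\hat V_{\mathrm{cf}}}$ algebraically. Theorem \ref{thm:alt} gives $\hat V_{\mathrm{cf}}/V\overset{p}{\to}1$ with $V>0$, since $V\ge 2c^2A_n>0$ by Assumption \ref{as:gen}. Hence $\hat V_{\mathrm{cf}}>0$ with probability approaching one, and the ratio is well defined with probability approaching one. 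The event $Q_n=0$ is harmless, because the identity can be read as the equality $T^{\mathrm{cf}}_n=\sqrt{\hat V/\hat V_{\mathrm{cf}}}\,T_n$, which holds trivially when $Q_n=0$.

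Next I would write
\[
\frac{\hat V}{\hat V_{\mathrm{cf}}}=\frac{\hat V}{V+B_n}\cdot\frac{V}{\hat V_{\mathrm{cf}}}\cdot\Bigl(1+\frac{B_n}{V}\Bigr).
\]
By Theorem \ref{thm:alt} the first two factors are $1+o_p(1)$, and the continuous mapping theorem gives $(\cdot)^{-1}$ and $\sqrt{\cdot}$ of a $1+o_p(1)$ quantity again $1+o_p(1)$. The key point is that $B_n/V$ need not converge and may even diverge under distant alternatives. For this reason the statement is multiplicative: $\sqrt{1+B_n/V}$ stays as a deterministic (conditional on $\mathcal F$) factor and is never absorbed into the error. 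Since $B_n\ge0$ by \eqref{eq:biasplug}, this factor is at least $1$ and finite for each $n$, so the product $\sqrt{1+B_n/V}\{1+o_p(1)\}$ is legitimate as written.

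For the second claim I would take $\varepsilon>0$ with $\sqrt{1+\gamma}(1-\varepsilon)>1$. The event $\hat V/\hat V_{\mathrm{cf}}\ge(1+B_n/V)(1-\varepsilon)^2$ has probability tending to one. For $n$ large, $B_n/V\ge\gamma-\eta$ for any small $\eta>0$, by the definition of $\liminf$. Choosing $\eta,\varepsilon$ so that $(1+\gamma-\eta)(1-\varepsilon)^2>1$ gives $P(\hat V>\hat V_{\mathrm{cf}})\to1$, and multiplying through by $|Q_n|/\sqrt{\hat V\hat V_{\mathrm{cf}}}$ yields $|T^{\mathrm{cf}}_n|\ge|T_n|$ on that event.

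The only delicate step is the unbounded $B_n/V$ just discussed, which the multiplicative form handles. Beyond that, one should check that "in probability" is interpreted conditionally on $\mathcal F$ (or in probability over a random design), as stipulated in the Conditioning paragraph. The argument goes through unchanged in either case.
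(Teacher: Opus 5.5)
Your proposal is correct and follows essentially the same route as the paper: you use the same three-factor decomposition of $\hat V/\hat V_{\mathrm{cf}}$ built from the two limits of Theorem \ref{thm:alt}, and then the shared-numerator comparison of $|T^{\mathrm{cf}}_n|$ and $|T_n|$. Your explicit $\varepsilon$-argument for $P(\hat V>\hat V_{\mathrm{cf}})\to1$ and your check that $\hat V_{\mathrm{cf}}>0$ with probability approaching one fill in steps that the paper leaves implicit.
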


Corollary \ref{cor:pow} compares the two procedures at a common critical value. Under size correction the comparison need not favor either procedure. The simulations favor cross-fitting either way: at a common critical value in Section \ref{sec:sim}, and after calibration in Section \ref{sec:calibrated} and Appendix \ref{subsec:Size-corrected-power}, where most of the gap remains and it closes only against the smoother alternative at the smallest samples.

\begin{rem} \label{rem:gain}
By Corollary \ref{cor:pow} the gain from cross-fitting is governed by $B_{n}/V$, and this ratio depends on how the systematic component is distributed across observations, not only on its size. Dividing \eqref{eq:biasplug} by \eqref{eq:V} gives
\[
\frac{B_{n}}{V}=\frac{2\sum^{n}_{i=1}\pi^{2}_{i}\sum_{j\neq i}a^{2}_{ij}\sigma^{2}_{j}}{\sum^{n}_{i=1}\sigma^{2}_{i}\sum_{j\neq i}a^{2}_{ij}\sigma^{2}_{j}}+\frac{\sum^{n}_{i=1}\sum_{j\neq i}a^{2}_{ij}\pi^{2}_{i}\pi^{2}_{j}}{\sum^{n}_{i=1}\sum_{j\neq i}a^{2}_{ij}\sigma^{2}_{i}\sigma^{2}_{j}}.
\]
In the first term $\pi^{2}_{i}$ and $\sigma^{2}_{i}$ carry the same weight $\sum_{j\neq i}a^{2}_{ij}\sigma^{2}_{j}$: two systematic components with the same $\sum_{i}\pi^{2}_{i}$ produce different gains, with a larger gain when the large values of $\pi^{2}_{i}$ belong to observations whose weight is large. $\blacksquare$
\end{rem}

Along local sequences, those with $\pi_i\to0$ sufficiently fast that $B_n/V\to0$, the two statistics agree to first order and the two tests have the same limiting power. The comparison sharpens as the alternative moves away from the null. Write
\[
N_{n}=\sum^{n}_{i=1}\sum_{j\neq i}a_{ij}\pi_{i}\pi_{j},\qquad S_{n}=2\sum^{n}_{i=1}\sum_{j\neq i}a^{2}_{ij}\pi^{2}_{i}\pi^{2}_{j},\qquad\tau_{n}=\frac{N_{n}}{S^{1/2}_{n}},
\]
for the systematic part of the numerator of $Q_{n}$, the quartic part of the plug-in bias \eqref{eq:biasplug}, and their ratio. Two properties of $\tau_{n}$ follow from the definitions. Both $N_{n}$ and $S^{1/2}_{n}$ are of second degree in $\pi$, so $\tau_{n}$ does not change when $\pi$ is replaced by $c\pi$ for any $c\neq0$. And if at most $k\ge2$ of the $\pi_{i}$ are nonzero, only the $k(k-1)$ pairs inside their support contribute, so Cauchy--Schwarz gives $|\tau_{n}|\le\{k(k-1)/2\}^{1/2}$. The next result shows that the plug-in statistic converges to $\tau_{n}$.

\begin{cor} \label{cor:sat}
Suppose the conditions of Theorem \ref{thm:alt} hold under $H_{1}$, and in addition that (i) $Q_{n}=N_{n}+o_{p}(S^{1/2}_{n})$ and (ii) the quartic part of the plug-in bias dominates, $V+B_{n}=S_{n}\{1+o(1)\}$. Then
\[
T_{n}=\bigl\{\tau_{n}+o_{p}(1)\bigr\}\bigl\{1+o_{p}(1)\bigr\},\qquad\frac{T^{\mathrm{cf}}_{n}}{T_{n}}\overset{p}{\to}\infty.
\]
Let the test reject when $|T_{n}|$ exceeds a fixed critical value $c>0$. If $\limsup_{n}|\tau_{n}|<c$ then $P(|T_{n}|>c)\to0$; if in addition $\liminf_{n}|\tau_{n}|>0$ then $P(|T^{\mathrm{cf}}_{n}|>c)\to1$.
\end{cor}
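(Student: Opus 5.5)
The plan is to combine the two conclusions of Theorem \ref{thm:alt} with assumptions (i) and (ii).

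\textbf{Limit of $T_{n}$.} Theorem \ref{thm:alt} gives $\hat{V}/(V+B_{n})\overset{p}{\to}1$. Condition (ii) gives $V+B_{n}=S_{n}\{1+o(1)\}$, so $\hat{V}=S_{n}\{1+o_{p}(1)\}$ and $\sqrt{\hat{V}}=S^{1/2}_{n}\{1+o_{p}(1)\}$. Dividing (i) by $S^{1/2}_{n}$ gives $Q_{n}/S^{1/2}_{n}=\tau_{n}+o_{p}(1)$. Hence
\[
T_{n}=\frac{Q_{n}/S^{1/2}_{n}}{\sqrt{\hat{V}/S_{n}}}=\{\tau_{n}+o_{p}(1)\}\{1+o_{p}(1)\},
\]
using that the reciprocal of $1+o_{p}(1)$ is again $1+o_{p}(1)$.

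\textbf{Ratio of the statistics.} By Corollary \ref{cor:pow}, $T^{\mathrm{cf}}_{n}/T_{n}=\sqrt{1+B_{n}/V}\{1+o_{p}(1)\}$, so it suffices to show $B_{n}/V\to\infty$. Since $S_{n}\le B_{n}$ (it is one nonnegative piece of \eqref{eq:biasplug}), (ii) rewrites as $V/S_{n}+B_{n}/S_{n}\to1$. The issue is to rule out $V/S_{n}$ staying away from zero. For this I would use the middle terms of \eqref{eq:biasplug}. By (ii), $V+B_{n}-S_{n}=2\sum_{i}\sum_{j\neq i}a^{2}_{ij}\sigma^{2}_{i}\sigma^{2}_{j}+4\sum_{i}\sum_{j\neq i}a^{2}_{ij}\pi^{2}_{i}\sigma^{2}_{j}=o(S_{n})$. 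The first sum on the right is exactly $V$, and the second is nonnegative. So $V=o(S_{n})$, hence $V=o(B_{n})$, i.e. $B_{n}/V\to\infty$ and $T^{\mathrm{cf}}_{n}/T_{n}\overset{p}{\to}\infty$. This is the step that needs care: (ii) must be read as absorbing $V$ as well as the quadratic bias terms, not only the latter.

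\textbf{Power.} If $\limsup|\tau_{n}|<c$, choose $\eta>0$ with $\limsup|\tau_{n}|<c-2\eta$. Then $|T_{n}|\le(|\tau_{n}|+o_{p}(1))(1+o_{p}(1))<c$ with probability approaching one, because $\tau_{n}$ is bounded eventually. If moreover $\liminf|\tau_{n}|>0$, then $|T_{n}|\ge\kappa>0$ with probability approaching one for some fixed $\kappa$. Since $T^{\mathrm{cf}}_{n}/T_{n}\overset{p}{\to}\infty$, for any $M$ we get $|T^{\mathrm{cf}}_{n}|\ge M\kappa$ with probability approaching one. Taking $M=c/\kappa+1$ gives $P(|T^{\mathrm{cf}}_{n}|>c)\to1$. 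The only subtlety is that $T_{n}\neq0$ is needed for the ratio to make sense, and the bound $|T_{n}|\ge\kappa$ ensures this with probability approaching one.
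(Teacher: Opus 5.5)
Your proposal is correct and follows the paper's proof essentially step for step: $\hat{V}=S_{n}\{1+o_{p}(1)\}$ from Theorem \ref{thm:alt} and (ii), the ratio via Corollary \ref{cor:pow} with $B_{n}\ge S_{n}$ forcing $V/S_{n}\to0$, and the power claims from the limit of $T_{n}$ together with the divergence of $T^{\mathrm{cf}}_{n}/T_{n}$. The step you flag as needing care is already settled by your own observation: $V/S_{n}+B_{n}/S_{n}\to1$ with $B_{n}/S_{n}\ge1$ gives $V/S_{n}\to0$ at once, and this is exactly how the paper argues.
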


Conditions (i) and (ii) hold along sequences of alternatives that move away from the null in a fixed direction. Write $\pi=t\pi^{0}$ for a fixed vector $\pi^{0}$ and a scalar $t\to\infty$. Then $S_{n}$ is of order $t^{4}$, while $V$ does not depend on $t$, the quadratic part of \eqref{eq:biasplug} is of order $t^{2}$, and the conditional variance of $Q_{n}-N_{n}$, which is $V+4\sum^{n}_{j=1}\sigma^{2}_{j}(\sum_{i\neq j}a_{ij}\pi_{i})^{2}$ up to the estimation error, is of order $t^{2}$ as well.

Section \ref{subsec:Testing-many-linear} varies how widely the systematic component is spread, and the size-corrected gain rises as it concentrates. The same contrast arises for the jackknife Anderson--Rubin statistic in \citet[Section 4.2]{MikushevaSun2022}, where the bias of the analogous plug-in variance is quartic in the departure and the shift it leaves saturates in the same way.

\section{Applications} \label{sec:app}

This section applies the general framework of Section~\ref{sec:gen} to three representative testing problems: specification testing, overidentification testing in instrumental variable models, and testing many linear restrictions in linear regression. Table~\ref{table:apps} collects the correspondence.
\begin{table}[t]
\caption{The three applications in the notation of Section~\ref{sec:gen}. In each case $\ell_{i}=m_{i}$ is the $i$th column of $M$ and $c_{ij}=2a^{2}_{ij}/(M_{ii}M_{jj}+M^{2}_{ij})$. The upper block maps each application into the general framework; the lower block highlights the application-specific components. Wherever $\ell^{\prime}_{i}H=0$, $g_{2i}=0$ and $\mathcal{R}_{n}=0$ in Lemma \ref{lem:exp}.}\label{table:apps}

\centering
\small
\begin{tabular}{lccc}
\hline
 & Specification & Overidentification & Many restrictions \\
 & (Section~\ref{sub:spec}) & (Section~\ref{sub:over}) & (Section~\ref{sub:F}) \\
\hline
$a_{ij}$ & $K^{-1/2}P_{ij}$ & $P_{ij}$ & $Q_{ij}$ \\
$M$ & $I-B(B'B)^{-1}B'$ & $I-Z(Z'Z)^{-1}Z'$ & $I-X(X'X)^{-1}X'$ \\
$\Delta$ & $B\alpha_{*}-X\gamma_{*}$ & $P\bar{u}$ & $M_{W}Z(\beta_{2}-\beta_{20})$ \\
$H_{i}$ & $x_{i}$ & $x_{i}$ & $w_{i}$ \\
\hline
$\omega_{i}$ & $\theta_{0}(x_{i})-\alpha^{\prime}_{*}b(x_{i})$ & $m^{\prime}_{i}\bar{u}$ & $0$ \\
$\lambda_{i}=m^{\prime}_{i}\omega$ & $o_{p}(b_{n}^{-3})$ & $\omega_{i}$ & $0$ \\
$\ell^{\prime}_{i}H$ & $0$ & $m^{\prime}_{i}v$ & $0$ \\
\hline
\end{tabular}
\end{table}

\subsection{Specification testing} \label{sub:spec}

We first consider the series-based specification test of \citet{SunLi2006}, which is a modified version of \citet{HongWhite1995} to avoid a non-zero centering term. Let $\{(y_{i},x_{i})\}^{n}_{i=1}$ be an i.i.d. sample with $x_{i}\in\mathbb{R}^{G}$. We test 
\[
H_{0}:\;\mathbb{E}[y\mid x]=\gamma_{0}'x\qquad\text{a.s.}
\]
against general alternatives. Let $b(x)$ be a $K$-dimensional vector of basis functions, and define 
\[
B=(b(x_{1}),\ldots,b(x_{n}))',\qquad P=B(B'B)^{-1}B',\qquad M=I-P,\qquad\hat{u}_{i}=y_{i}-\hat{\gamma}'x_{i},
\]
where $\hat{\gamma}$ is the OLS estimator. The test statistic of \citet{SunLi2006} is 
\[
T=\frac{K^{-1/2}\sum^{n}_{i=1}\sum_{j\neq i}P_{ij}\hat{u}_{i}\hat{u}_{j}}{\sqrt{\hat{V}}},\qquad\hat{V}=\frac{2}{K}\sum^{n}_{i=1}\sum_{j\neq i}P^{2}_{ij}\hat{u}^{2}_{i}\hat{u}^{2}_{j}.
\]

Replacing the plug-in variance estimator by its cross-fit analogue yields 
\[
\hat{V}_{\mathrm{cf}}=\frac{2}{K}\sum^{n}_{i=1}\sum_{j\neq i}\frac{P^{2}_{ij}}{M_{ii}M_{jj}+M^{2}_{ij}}\{\hat{u}_{i}(m_{i}'\hat{u})\}\{\hat{u}_{j}(m_{j}'\hat{u})\},
\]
where $m_{i}$ denotes the $i$th column of $M$. The specification-testing problem fits the framework of Section \ref{sec:gen} with 
\[
a_{ij}=K^{-1/2}P_{ij},\qquad c_{ij}=\frac{2}{K}\frac{P^{2}_{ij}}{M_{ii}M_{jj}+M^{2}_{ij}},\qquad\ell_{i}=m_{i}.
\]

Decompose the residual to locate the drift. Let $\theta_{0}(x)=\mathbb{E}[y\mid x]$, let $\alpha_{*}=\mathbb{E}[b(x)b(x)^{\prime}]^{-1}\mathbb{E}[b(x)y]$ be the coefficient of the best sieve approximation to it, and let $\gamma_{*}$ be the probability limit of $\hat{\gamma}$. Then
\[
\hat{u}_{i}=\underbrace{\alpha^{\prime}_{*}b(x_{i})-\gamma^{\prime}_{*}x_{i}}_{\Delta_{i}}+\underbrace{y_{i}-\theta_{0}(x_{i})}_{e_{i}}+\underbrace{\theta_{0}(x_{i})-\alpha^{\prime}_{*}b(x_{i})}_{\omega_{i}}+x^{\prime}_{i}\hat{\delta},\qquad\hat{\delta}=\gamma_{*}-\hat{\gamma},
\]
which telescopes to $\hat{u}_{i}=y_{i}-\hat{\gamma}^{\prime}x_{i}$ and is \eqref{eq:split} with $H_{i}=x_{i}$.

\begin{prop} \label{prop:spec}
Suppose Assumption \ref{as:spec} holds. Then the specification-testing problem satisfies the conditions of Theorems \ref{thm:null} and \ref{thm:alt}. Hence the cross-fit variance estimator is consistent for the asymptotic variance under the null hypothesis, and, under Assumption \ref{as:spec}(iv), under the alternative as well, while the plug-in estimator is not.
\end{prop}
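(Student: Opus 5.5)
The proof reduces Proposition \ref{prop:spec} to verifying, for the specification-testing design, the high-level conditions of Theorems \ref{thm:null} and \ref{thm:alt}. All statements are conditional on $\mathcal{F}=\sigma(x_{1},\ldots,x_{n})$ and hold in probability over the design. The structural fact that does most of the work is that $x_{i}$ lies in the span of the basis, as Assumption \ref{as:spec} will require, so $MX=0$.

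First, $MX=0$ gives $\ell_{i}^{\prime}H=m_{i}^{\prime}X=0$. By the last part of Lemma \ref{lem:exp}, $g_{2i}=0$, $g_{1i}=(m_{i}^{\prime}\check{u})x_{i}$ and $\mathcal{R}_{n}=0$.

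Second, $\ell_{i}^{\prime}\Delta=m_{i}^{\prime}(B\alpha_{*}-X\gamma_{*})=0$, since $MB=0$ and $MX=0$.

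Third, the leverage condition of Assumption \ref{as:gen} follows from $M_{ii}M_{jj}+M_{ij}^{2}\ge(1-\max_{i}P_{ii})^{2}$, which is bounded below because $\max_{i}P_{ii}\le\zeta_{K}^{2}K/n\to0$ under the usual sieve rate conditions. The rate conditions in Assumption \ref{as:gen} come from moment bounds. We take $b_{n}=n^{1/4}$, for example, using the fourth moments. Then $\max_{i}|m_{i}^{\prime}e|=O_{p}(b_{n})$ holds because $\mathbb{E}[(m_{i}^{\prime}e)^{4}]$ is bounded. The condition $\max_{i}|\omega_{i}|=o(b_{n}^{-3})$ is the sieve sup-norm approximation rate, and $\|\hat\delta\|=O_{p}(n^{-1/2})$ is standard OLS.

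The core step is the law of large numbers \eqref{eq:cond} together with \eqref{eq:condest}. For \eqref{eq:cond}, expand $e_{i}(m_{i}^{\prime}e)e_{j}(m_{j}^{\prime}e)$ as a fourth-order polynomial in independent errors, $\sum_{k,l}M_{ik}M_{jl}e_{i}e_{k}e_{j}e_{l}$. Subtract its mean $(M_{ii}M_{jj}+M_{ij}^{2})\sigma_{i}^{2}\sigma_{j}^{2}$ and compute the variance of the weighted sum. Using $|c_{ij}|\lesssim P_{ij}^{2}/K$, $\sum_{j}P_{ij}^{2}=P_{ii}$ and $\operatorname{tr}(P)=K$, every index pattern contributes a variance of order $K^{-2}\sum_{i,j}P_{ij}^{2}\cdot\max_{i}P_{ii}$ or $K^{-1}\max_{i}P_{ii}$, times bounded moments. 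Meanwhile $V\asymp K^{-1}\sum_{i\neq j}P_{ij}^{2}\asymp1$. This is the step I expect to be the main obstacle: a careful combinatorial case split over coincidences among $(i,j,k,l,i',j',k',l')$, with products of $M$ entries controlled by $\|M\|\le1$ and by $\max_{i\neq j}|M_{ij}|=\max|P_{ij}|\le\max P_{ii}$.

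For \eqref{eq:condest}, only $\mathcal{G}_{1}$ and $\mathcal{G}_{2}$ remain, namely $\mathcal{G}_{1}=\sum c_{ij}\{g_{0i}(m_{j}^{\prime}\check u)x_{j}+\cdots\}$ and $\mathcal{G}_{2}=\sum c_{ij}(m_{i}^{\prime}\check u)(m_{j}^{\prime}\check u)x_{i}x_{j}^{\prime}$. Their second moments are bounded by $C_{n}\lesssim A_{n}=O(1)$ times moment bounds, up to factors like $\sqrt{n}$ at worst. With $\|\hat\delta\|=n^{-1/2}$, the products are $o_{p}(1)$ once we show $\|\mathcal{G}_{1}\|=o_{p}(\sqrt n)$, which holds by Cauchy--Schwarz since $\sum_{j}|c_{ij}|\lesssim P_{ii}/K$.

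Under the alternative, the same machinery applies with $\check u=\pi+e$. Assumption \ref{as:spec}(iv) will take the drift $\pi_{i}=\Delta_{i}+\omega_{i}$ to be uniformly bounded (or local). Then \eqref{eq:condalt}(i)--(ii) follow from the same variance calculation, the $\pi$ terms being deterministic given $\mathcal{F}$. The estimation-error conditions go through as before. For the plug-in estimator, which lacks $\mathcal{R}_{n}=0$, we bound the cubic and quartic terms directly, using $\max_{i}\|x_{i}\|$ bounds.

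For $B_{n}^{\mathrm{cf}}=o(V)$, we use $\lambda_{i}=m_{i}^{\prime}\omega=o(b_{n}^{-3})$ from Table \ref{table:apps}, so the $\lambda$ terms vanish. The remaining terms in \eqref{eq:biascf} carry $(M\Sigma M)_{ij}$ or $M_{ij}$ with $i\neq j$. By Cauchy--Schwarz these are bounded by $\sum_{i\neq j}|c_{ij}||\pi_{i}||P_{ij}|\lesssim\max_{i}P_{ii}\cdot A_{n}\max|\pi|=o(1)$, handling $(M\Sigma M)_{ij}=-P_{ij}(\sigma_{i}^{2}+\sigma_{j}^{2})+(P\Sigma P)_{ij}$ similarly.

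Finally, plug-in inconsistency requires $\liminf A_{n}^{-1}\sum a_{ij}^{2}\pi_{i}^{2}>0$. This follows from $K^{-1}\sum_{i}\pi_{i}^{2}P_{ii}(1-o(1))$ being bounded below whenever $\pi$ is non-degenerate, for instance when $\mathbb{E}[\pi(x)^{2}]>0$ and the leverages are balanced, $P_{ii}\asymp K/n$. The last clause of Theorem \ref{thm:alt} then applies.
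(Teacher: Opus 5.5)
Your proposal is correct and follows the same skeleton as the paper's proof. You verify Assumption \ref{as:gen} from vanishing leverage, use $MB=0$ and $MX=0$ to get $m_i'\Delta=0$, $g_{2i}=0$ and $\mathcal{R}_n=0$, check \eqref{eq:cond} and \eqref{eq:condest}, and then check the conditions of Theorem \ref{thm:alt}. Where you differ is in how the technical steps are discharged.

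\emph{The null law of large numbers.} For \eqref{eq:cond} the paper does no index combinatorics. It observes that the sum is $2/K$ times the expression in \citet[Lemma 2]{MikushevaSun2022} and invokes that lemma, which needs only $K\to\infty$, $P_{ii}\le c_P<1$ and six conditional moments. Your direct variance computation is self-contained. Because it exploits $\max_iP_{ii}\to0$, it would give a variance of order $K^{-1}\max_iP_{ii}$. But it is the long part you have not written out. It also needs the sixth moments of Assumption \ref{as:spec}(i), since an index can occur six times in the second moment; the fourth moments you mention are not enough.

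\emph{The alternative.} The paper does not redo that calculation here. It checks the aggregate conditions \eqref{eq:condagg} of Lemma \ref{lem:lln}: $\mathcal{C}_2=o_p(K^{-1})$, $\max_j\bar c_j=o_p(K^{-1})$, $L_F\le\sqrt K$ and $S_F\le C\sqrt K$. It then gets $B_n^{\mathrm{cf}}=o(V)$ from Frobenius bounds such as $\xi_n^2\mathcal{C}_2^{1/2}S_F$. Your entrywise bound uses $\max_{i\neq j}|M_{ij}|\le\max_iP_{ii}$ and $|(P\Sigma P)_{ij}|\le C(P_{ii}P_{jj})^{1/2}$, multiplied by $C_n=O_p(1)$. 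It is shorter and valid here, but it relies on vanishing leverage. That is exactly what is unavailable in the overidentification and many-restrictions applications, where the Frobenius route is what carries over.

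\emph{The estimation error.} The paper combines $C_n=O_p(1)$ with $\max_jx_j^2=O_p(n^{1/3})$, which is where it uses $\mathbb{E}\|x_i\|^6<\infty$, to get $\|\mathcal{G}_1\|=O_p(n^{1/6})$. Your row-sum bound $\bar c_j\lesssim P_{jj}/K=o_p(K^{-1})$ gives $\|\mathcal{G}_1\|=o_p((n/K)^{1/2})$ and $\|\mathcal{G}_2\|=o_p(n/K)$. That rate is weaker but still suffices with $\|\hat\delta\|=O_p(n^{-1/2})$, and for the cross-fit terms it needs only $\mathbb{E}x^2<\infty$.

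\emph{Small corrections.}
\begin{itemize}
\item The leverage bound is $P_{ii}\le\zeta_K^2/\{n\lambda_{\min}(n^{-1}B'B)\}$. You therefore first need $\lambda_{\min}(n^{-1}B'B)\overset{p}{\to}1$, which follows from $\mathbb{E}\|n^{-1}B'B-I\|^2\le\zeta_K^2K/n$; $\zeta_K^2K/n$ is not by itself a bound on $P_{ii}$.
\item Take $b_n$, and the bounds on $\max_i|\omega_i|$ and $\max_i|m_i'\omega|$, from Assumption \ref{as:spec}(iii) rather than fixing $b_n=n^{1/4}$. That choice would demand $o(n^{-3/4})$ of the approximation error, whereas $b_n=n^{1/6}$ needs only $o(n^{-1/2})$.
\item The non-degeneracy condition is imposed directly in Assumption \ref{as:spec}(iv). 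Your balanced-leverage argument is therefore unnecessary, and it would add a lower bound $P_{ii}\gtrsim K/n$ that the assumption does not contain.
\end{itemize}
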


\subsubsection{Decomposition of $m^{\prime}_{i}\hat{u}$}

Applying $m^{\prime}_{i}$ to $\hat{u}_{i}$ gives the second factor of $\hat{u}_{i}(m^{\prime}_{i}\hat{u})$,
\[
m^{\prime}_{i}\hat{u}=m^{\prime}_{i}\Delta+m^{\prime}_{i}\omega+m^{\prime}_{i}e+m^{\prime}_{i}H\hat{\delta},
\]
where $m^{\prime}_{i}e$ is mean zero. We take the other three terms in turn.

The first term is the drift, and it is annihilated exactly: $m^{\prime}_{i}\Delta=0$ for every $i$. Because the basis spans the linear terms, both $B\alpha_{*}$ and $X\gamma_{*}$ lie in $\operatorname{col}(B)$, so $\Delta\in\operatorname{col}(B)$ and $M\Delta=(I-P)\Delta=0$.

The second term is $m^{\prime}_{i}\omega$, which is what the auxiliary factor leaves of the sieve approximation error. The systematic component is $\pi_{i}=\Delta_{i}+\omega_{i}=\theta_{0}(x_{i})-\gamma^{\prime}_{*}x_{i}$, the nonlinearity that the linear specification cannot represent; the sieve coefficient cancels, so $\pi$ does not depend on the basis. Applying the auxiliary factor to it leaves $m^{\prime}_{i}\pi=\lambda_{i}$, since $m^{\prime}_{i}\Delta=0$, and Assumption \ref{as:spec}(iii) makes $\max_{i}|\omega_{i}|$ and $\max_{i}|\lambda_{i}|$, with $\lambda_{i}=m^{\prime}_{i}\omega$, of order $o_{p}(b^{-3}_{n})$. The plug-in therefore squares $\pi_{i}$, which is of order one under the alternative, whereas the second factor retains only $\lambda_{i}$ of it.

The last term vanishes as well: the auxiliary vectors annihilate the estimation directions. Because the basis spans the linear terms, $X=B\alpha_{K}$ for some $\alpha_{K}$, and hence $\ell^{\prime}_{i}H=m^{\prime}_{i}X=0$. By Lemma \ref{lem:exp} this makes $g_{2i}=0$ and $\mathcal{R}_{n}=0$, so only $\mathcal{G}_{1}$ and $\mathcal{G}_{2}$ survive in \eqref{eq:condest}.

\subsection{Overidentification testing} \label{sub:over}

We next consider the overidentification test for linear instrumental variable models studied by \citet{ChaoHausmanNeweySwansonWoutersen2014}. Let
\[
y_{i}=\beta^{\prime}x_{i}+u_{i},\qquad x_{i}=\Upsilon_{i}+v_{i},
\]
for $i=1,\ldots,n$, where $x_{i}\in\mathbb{R}^{G}$ is endogenous and $z_{i}\in\mathbb{R}^{K}$ is a vector of instruments with $K>G$. Following \citet{ChaoHausmanNeweySwansonWoutersen2014}, the instruments $Z=(z_{1},\ldots,z_{n})^{\prime}$ and the reduced form $\Upsilon=(\Upsilon_{1},\ldots,\Upsilon_{n})^{\prime}$ are treated as nonrandom, with $\mathbb{E}[u_{i}]=0$ and $\mathbb{E}[v_{i}]=0$, and $\Upsilon$ is spanned by the instruments, $\Upsilon=Z\Psi$.\footnote{\label{fn:series}\citet{ChaoHausmanNeweySwansonWoutersen2014} also allow the reduced form to be approximated rather than spanned, $\Upsilon_{i}=f_{0}(w_{i})$ with $n^{-1}\sum_{i}\|\Upsilon_{i}-\Psi^{\prime}z_{i}\|^{2}\to0$ for $z_{i}=(p_{1K}(w_{i}),\ldots,p_{KK}(w_{i}))^{\prime}$. We impose the exact case in Assumption \ref{as:over}(ii), because the mean-square condition does not bound $\max_{i}\|m^{\prime}_{i}\Upsilon\|$, which the remainder term requires here.} Let
\[
P=Z(Z^{\prime}Z)^{-1}Z^{\prime},\qquad M=I-P,\qquad\hat{u}_{i}=y_{i}-\hat{\beta}^{\prime}x_{i},
\]
where $\hat{\beta}$ has probability limit $\beta_{*}$, equal to $\beta$ under the null. The choice of $\hat{\beta}$ can be the JIV estimators analyzed by \citet{ChaoSwansonHausmanNeweyWoutersen2012} or the HLIM and HFUL estimators of \citet{HausmanNeweyWoutersenChaoSwanson2012}, for example. \citet{ChaoHausmanNeweySwansonWoutersen2014} use the last one. Assumption \ref{as:over}(iii) asks no more of it than $\hat{\beta}\overset{p}{\to}\beta_{*}$.

Because $Z$ is nonrandom and $\mathbb{E}[u_{i}]=0$, the moment condition $\mathbb{E}[Z^{\prime}u]=0$ holds by construction, so it is not what the test can check. Write
\[
\bar{u}_{i}=\mathbb{E}[y_{i}-\beta^{\prime}_{*}x_{i}],\qquad\bar{u}=(\bar{u}_{1},\ldots,\bar{u}_{n})^{\prime},
\]
for the expected residual at the pseudo-true value. The null of correct specification is $H_{0}:\bar{u}=0$, tested against $H_{1}:\bar{u}\neq0$, which arises when the instruments affect the outcome directly or the reduced form is not explained by the structural equation. With $K>G$ the instruments impose $K-G$ overidentifying restrictions. The test statistic of \citet{ChaoHausmanNeweySwansonWoutersen2014} is
\[
J=\frac{\sum^{n}_{i=1}\sum_{j\neq i}P_{ij}\hat{u}_{i}\hat{u}_{j}}{\sqrt{\hat{\Phi}}}+K,\qquad\hat{\Phi}=\frac{1}{K}\sum^{n}_{i=1}\sum_{j\neq i}P^{2}_{ij}\hat{u}^{2}_{i}\hat{u}^{2}_{j}.
\]
The null hypothesis is rejected when $J$ exceeds the $(1-\alpha)$ quantile of the $\chi^{2}_{K-G}$ distribution.

Replacing the plug-in variance estimator by its cross-fit analogue yields 
\[
\hat{\Phi}_{\mathrm{cf}}=\frac{1}{K}\sum^{n}_{i=1}\sum_{j\neq i}\frac{P^{2}_{ij}}{M_{ii}M_{jj}+M^{2}_{ij}}\{\hat{u}_{i}(m_{i}'\hat{u})\}\{\hat{u}_{j}(m_{j}'\hat{u})\},
\]
where $m_{i}$ denotes the $i$th column of $M$. The overidentification-testing problem fits the framework of Section~\ref{sec:gen} with 
\[
a_{ij}=P_{ij},\qquad c_{ij}=\frac{2P^{2}_{ij}}{M_{ii}M_{jj}+M^{2}_{ij}},\qquad\ell_{i}=m_{i}.
\]
With these weights the variance estimators of Section~\ref{sec:gen} are exactly $2K$ times those in the statistic, $\hat{V}=2K\hat{\Phi}$ and $\hat{V}_{\mathrm{cf}}=2K\hat{\Phi}_{\mathrm{cf}}$. Because the conclusions of Section~\ref{sec:gen} are statements about ratios, they apply to $\hat{\Phi}$ and $\hat{\Phi}_{\mathrm{cf}}$ unchanged.

Decompose the residual to locate the drift. Since $\hat{u}_{i}=(y_{i}-\beta^{\prime}_{*}x_{i})+x^{\prime}_{i}(\beta_{*}-\hat{\beta})$ and $y_{i}-\beta^{\prime}_{*}x_{i}=\bar{u}_{i}+\{(y_{i}-\beta^{\prime}_{*}x_{i})-\bar{u}_{i}\}$, splitting $\bar{u}=P\bar{u}+M\bar{u}$ gives
\[
\hat{u}_{i}=\Delta_{i}+\underbrace{(y_{i}-\beta^{\prime}_{*}x_{i})-\bar{u}_{i}}_{e_{i}}+\underbrace{m^{\prime}_{i}\bar{u}+x^{\prime}_{i}(\beta_{*}-\hat{\beta})}_{\rho_{i}},\qquad\Delta=P\bar{u}.
\]
This is \eqref{eq:uhat} and \eqref{eq:split} with $\omega_{i}=m^{\prime}_{i}\bar{u}$, $H_{i}=x_{i}$ and $\hat{\delta}=\beta_{*}-\hat{\beta}$. The error $e_{i}$ has mean zero by construction and equals $u_{i}+v^{\prime}_{i}(\beta-\beta_{*})$ whenever the structural equation $y_{i}=\beta^{\prime}x_{i}+u_{i}$ holds. Under $H_{0}$, we have $e_{i}=u_{i}$.

\begin{prop} \label{prop:over}
Suppose Assumption \ref{as:over} holds. Then the overidentification-testing problem satisfies the conditions of Theorems \ref{thm:null} and \ref{thm:alt}. Hence the cross-fit variance estimator is consistent for the asymptotic variance under the null hypothesis, and, under Assumption \ref{as:over}(iv), under the alternative as well, while the plug-in estimator is not.
\end{prop}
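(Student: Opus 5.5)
The plan is to verify, term by term, the hypotheses of Theorems \ref{thm:null} and \ref{thm:alt} in the overidentification setting. We use the mapping $a_{ij}=P_{ij}$, $\ell_i=m_i$, $\Delta=P\bar u$, $\omega=M\bar u$, $H_i=x_i$ and $\hat\delta=\beta_*-\hat\beta$. Two structural facts come first.

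\emph{Structural facts.} We have $\ell_i'\Delta=m_i'P\bar u=0$, and $\lambda_i=m_i'\omega=m_i'M\bar u=\omega_i$ because $M$ is idempotent. Also $V=2\sum_{i}\sum_{j\neq i}P_{ij}^2\sigma_i^2\sigma_j^2\asymp A_n\asymp K$, using $\sum_{j\neq i}P_{ij}^2=P_{ii}-P_{ii}^2$ together with the bound $\max_i P_{ii}\le 1-c$ from Assumption \ref{as:over}. That same bound gives $M_{ii}M_{jj}+M_{ij}^2\ge c^2=:c_L$, which is the leverage condition of Assumption \ref{as:gen}. The moment bounds on $e_i$ and the choice of $b_n$ (for instance $b_n=n^{1/4}$ under bounded fourth moments, or larger if more moments are assumed) come directly from Assumption \ref{as:over}.

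\emph{Why this application is harder.} Unlike the specification case, $\ell_i'H=m_i'X=m_i'(Z\Psi+V)=m_i'v\neq0$. So $g_{2i}\neq0$, $\mathcal{R}_n\neq0$, and $H$ is not $\mathcal F$-measurable, since $v_i$ is correlated with $e_i$. The order of work below reflects this.

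\emph{Law of large numbers.} We verify \eqref{eq:cond} and the first two conditions of \eqref{eq:condalt}, invoking the aggregate bounds of Lemma \ref{lem:lln}. The centred double sum $\sum_i\sum_{j\neq i}c_{ij}\{e_i(m_i'e)e_j(m_j'e)-\mathbb E\}$ expands into a polynomial of degree four in independent errors, with coefficients built from $c_{ij}M_{ik}M_{jl}$. Its variance is bounded by $C\sum_{i}\sum_{j\neq i}c_{ij}^2$ plus cross terms controlled by $\|M\|\le1$. Since $\sum_i\sum_{j\neq i}c_{ij}^2\lesssim\max_{i\neq j}P_{ij}^2\,A_n$, this is $o(K^2)$ as long as $\max_{i\neq j}|P_{ij}|\to0$, which holds since $\max_iP_{ii}\to0$ or, failing that, by assumption.

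\emph{Estimation error.} We bound $\mathcal G_1,\dots,\mathcal G_4$ using $\|\hat\delta\|=O_p(r_n)$ and $\max_i\|x_i\|=O_p(b_n)$. The terms $h_i=m_i'V$ are written as $v$-linear forms with $\max_i\|h_i\|=O_p(b_n)$. A crude bound $|g_{0i}|\le O_p(b_n^2)$ gives $\|\mathcal G_1\|\le C_n\,O_p(b_n^4)=O_p(Kb_n^4)$. This is not good enough if $r_n=n^{-1/2}$ fails with many instruments. So for $\mathcal G_1$ I would instead compute its conditional mean and variance directly, exploiting that $\sum_j c_{ij}g_{1j}$ is an average over many $j$, and aim for $\|\mathcal G_1\|=O_p(K)$. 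Then $\|\hat\delta\|\|\mathcal G_1\|=O_p(Kr_n)=o_p(V)$, and the higher-degree terms follow a fortiori. The plug-in counterpart $\hat V-\check V=o_p(V+B_n)$ is handled the same way, with $\hat u_i^2=\check u_i^2+2\check u_ix_i'\hat\delta+(x_i'\hat\delta)^2$.

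\emph{Cross-fit bias.} We verify $B_n^{\mathrm{cf}}=o(V)$ from \eqref{eq:biascf} with $\lambda=\omega=M\bar u$. Under Assumption \ref{as:over}(iv), $\max_i|m_i'\bar u|=o(b_n^{-3})$, so the terms carrying $\lambda$ are $o(A_n)$. The remaining terms carry $\pi_i\pi_j(M\Sigma M)_{ij}$ and $\pi_iM_{ij}M_{jj}\mu_{3j}$. These we bound by $\max_{i\neq j}(|M_{ij}|+|(M\Sigma M)_{ij}|)\cdot C_n\max_i\pi_i^2$. Here the off-diagonal entries are $O(\max_{i\neq j}|P_{ij}|+\max_iP_{ii})\to0$, and this is where the assumed drift bound in (iv) enters.

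\emph{Plug-in inconsistency.} The non-degeneracy condition $\liminf A_n^{-1}\sum_{i}\sum_{j\neq i}P_{ij}^2\pi_i^2>0$ is then imposed or verified from (iv), and gives the inconsistency of the plug-in estimator. I expect the $\mathcal G_1$ bound, with its random, error-correlated $H$, to be the main obstacle.
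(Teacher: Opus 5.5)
Your verification of $B^{\mathrm{cf}}_{n}=o(V)$ has a genuine gap. You bound the surviving terms of \eqref{eq:biascf} by $\max_{i\neq j}\bigl(|M_{ij}|+|(M\Sigma M)_{ij}|\bigr)\,C_{n}\xi^{2}_{n}$ and assert that the maximum vanishes because $\max_{i}P_{ii}\to0$. That claim fails under Assumption~\ref{as:over}(ii), which allows $K/n\to\alpha>0$; then $\max_{i}P_{ii}\ge K/n$ does not vanish. Moreover, the only general bound on the off-diagonal entries of $P\Sigma P$ is $C(P_{ii}P_{jj})^{1/2}$. Nor does the assumption give $\max_{i\neq j}|P_{ij}|\to0$: part (iv) imposes only $\sum_{i}\sum_{j\neq i}P^{4}_{ij}=o(K)$, which tolerates a few pairs with $|P_{ij}|$ bounded away from zero. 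Since $C_{n}\asymp K\asymp V$, your bound delivers $O(V)$ rather than $o(V)$, exactly in the many-instrument regime the proposition is meant to cover.

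These terms must be controlled on average over pairs, not uniformly. By Cauchy--Schwarz, $\bigl|\sum_{i}\sum_{j\neq i}c_{ij}\pi_{i}\pi_{j}(M\Sigma M)_{ij}\bigr|\le\xi^{2}_{n}\mathcal{C}^{1/2}_{2}S_{F}$ and $\bigl|\sum_{i}\sum_{j\neq i}c_{ij}\pi_{i}M_{ij}M_{jj}\mu_{3j}\bigr|\le C\xi_{n}\mathcal{C}^{1/2}_{2}L_{F}$. Here $L^{2}_{F}\le K$, $S_{F}\le C\sqrt{K}$ (from $\|P\Sigma P\|^{2}_{F}\le C^{2}\operatorname{tr}(P)$), and $\mathcal{C}_{2}\le C\sum_{i}\sum_{j\neq i}P^{4}_{ij}=o(K)$. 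So the third condition of (iv) is needed here, not only the bound on $\bar{u}$.

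That bound is also the third condition of \eqref{eq:condagg}, which you need anyway when you invoke Lemma~\ref{lem:lln} for the two laws of large numbers under the alternative. You check only $\mathcal{C}_{2}=o(A^{2}_{n})$, which is automatic because $\sum_{i}\sum_{j\neq i}P^{4}_{ij}\le K$. Verifying \eqref{eq:condagg} in full, with $\max_{j}\bar{c}_{j}\le C\max_{j}P_{jj}=O(1)$, lets Lemma~\ref{lem:lln} deliver both laws of large numbers and $B^{\mathrm{cf}}_{n}=o(V)$ at once. This is how the paper proceeds; no separate sup-norm argument is needed.

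The step you single out as the main obstacle, the estimation error, is in fact routine. Bounded fourth moments of $u_{i}$ and $v_{i}$ carry over to $m^{\prime}_{i}u$ and $m^{\prime}_{i}v$, since $\sum_{k}M^{2}_{ik}\le1$. Together with $\max_{i}\|\Upsilon_{i}\|\le C$ and $\xi_{n}=O(1)$, this gives $\max_{k}\max_{i}\mathbb{E}\|g_{ki}\|^{2}\le C$. The weighted Cauchy--Schwarz inequality \eqref{eq:wcs} and Markov's inequality then yield $\mathcal{G}_{k}=O_{p}(C_{n})=O_{p}(K)$ for every $k=1,\ldots,4$. This holds under either hypothesis, and also for the plug-in expansion with $L=I$.

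Mere consistency of $\hat{\beta}$, which is all (iii) supplies, then makes every estimation-error term $o_{p}(V)$. The correlation between $v_{i}$ and $e_{i}$ is irrelevant, because only second moments of the $g_{ki}$ enter, and no conditional mean--variance calculation or rate $r_{n}$ is required. Your \emph{a fortiori} does not by itself cover $\mathcal{G}_{2},\mathcal{G}_{3},\mathcal{G}_{4}$ when $\hat{\delta}$ has no rate, but this moment argument does.

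Finally, the appeal to $\max_{i}P_{ii}\to0$ in your null law of large numbers is likewise false under (ii), but harmless there, since $\mathcal{C}_{2}\le CK=o(K^{2})$ regardless. The paper obtains \eqref{eq:cond} directly from Lemma~2 of Mikusheva and Sun.
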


The decomposition of $m^{\prime}_{i}\hat{u}$ is as in Section \ref{sub:spec}. The drift is annihilated because $\Delta=P\bar{u}$ lies in $\operatorname{col}(Z)$ and $M=I-P$ projects onto its orthogonal complement. The other two terms behave differently.

Here $\omega=M\bar{u}$ is the part of the systematic residual not spanned by the instruments, the overidentification analogue of the sieve error in the specification test. Because $M$ is idempotent, $m^{\prime}_{i}\omega=m^{\prime}_{i}\bar{u}=\omega_{i}$: the auxiliary factor removes nothing from $\omega$. In the notation of Section~\ref{sec:gen} this is $\lambda_{i}=\omega_{i}$, so the two maximal bounds on $\omega$ in Assumption \ref{as:gen} coincide. The systematic component is $\pi_{i}=\Delta_{i}+\omega_{i}=\bar{u}_{i}$, the expected residual itself, and since the drift is annihilated, the auxiliary factor leaves $m^{\prime}_{i}\pi=\omega_{i}$. The plug-in therefore squares $\bar{u}_{i}$, whereas the second factor retains only $\omega_{i}$.

The estimation directions are not annihilated, unlike in the specification test. Since $\Upsilon$ is spanned by the instruments but $v$ is not,
\[
\ell^{\prime}_{i}H=m^{\prime}_{i}x=m^{\prime}_{i}v\neq0,
\]
so $g_{2i}\neq0$ in Lemma \ref{lem:exp} and the remainder $\mathcal{R}_{n}$
has to be bounded. 

\begin{rem}
Since $m^{\prime}_{i}\omega$ does not vanish, Assumption \ref{as:over}(iii) has to bound $\max_{i}|\omega_{i}|$ directly, and we describe which departures from the null satisfy that bound. Write the departure as a term entering the outcome equation directly, $y_{i}=\beta^{\prime}x_{i}+h_{i}+u_{i}$. Then $\bar{u}=\Upsilon(\beta-\beta_{*})+h$, and $M\Upsilon=0$ because $\Upsilon=Z\Psi$, so $\omega=Mh$ whatever the pseudo-true value. Thus, Assumption \ref{as:over}(iii) asks that the part of the departure the instruments do not span be uniformly small, $\max_{i}|(Mh)_{i}|=o_{p}(b^{-3}_{n})$. Two situations justify the condition. First, $h$ may lie in $\operatorname{col}(Z)$, so that the instruments enter the outcome equation linearly and $Mh=0$. Second, $h$ may be a smooth function of the variables from which the instruments are built, as in the series construction of footnote \ref{fn:series}, in which case $Mh$ is an approximation error of the same kind as the sieve error of Section \ref{sub:spec}. The alternative against which \citet{ChaoHausmanNeweySwansonWoutersen2014} report the power of their test is of the first kind, a direct effect of one instrument on the outcome with the coefficient on it moved away from zero, and the design of Section \ref{subsec:sim-overid} follows that design. In contrast, Assumption \ref{as:over}(iii) rules out a case where $h_{i}$ is driven by a variable the instruments do not span; in this case, the systematic component $\pi_{i}=\Delta_{i}+\omega_{i}$ enters the second factor of $\hat{u}_{i}(m^{\prime}_{i}\hat{u})$ as well as the first. $\blacksquare$
\end{rem}

\subsection{Testing many linear restrictions} \label{sub:F}

We finally consider testing many linear restrictions in linear regression models, where variance inflation under the alternative becomes particularly important as the number of restrictions increases. Let $\{(y_{i},x_{i})\}^{n}_{i=1}$ be an i.i.d. sample with $x_{i}=(w_{i}',z_{i}')'$, where $w_{i}\in\mathbb{R}^{d-r}$ is fixed-dimensional, with $d=\dim(x_{i})$ the total number of regressors, and $z_{i}\in\mathbb{R}^{r}$, with the number of restrictions $r$ potentially large. Consider the heteroskedastic linear regression model 
\[
y_{i}=w_{i}'\beta_{1}+z_{i}'\beta_{2}+\varepsilon_{i},\qquad\mathbb{E}[\varepsilon_{i}\mid x_{i}]=0,
\]
and test $H_{0}:\beta_{2}=\beta_{20}$. Let $X=[W\;Z]$ and 
\[
M_{W}=I-W(W'W)^{-1}W',\qquad Q=M_{W}Z(Z'M_{W}Z)^{-1}Z'M_{W},\qquad M=I-X(X'X)^{-1}X'.
\]
Write $Q_{ij}$ and $M_{ij}$ for the $(i,j)$th entries of $Q$ and $M$, respectively, and let $m_{i}$ denote the $i$th column of $M$. Let $\hat{\sigma}^{2}_{\varepsilon}=\frac{1}{n-d}\sum^{n}_{i=1}(y_{i}-x_{i}'\hat{\beta})^{2}$ denote the unrestricted residual variance estimator, where $\hat{\beta}$ is the unrestricted OLS estimator. The conventional $F$ statistic is 
\[
F=\frac{(y-Z\beta_{20})^{\prime}Q(y-Z\beta_{20})}{r\hat{\sigma}^{2}_{\varepsilon}}.
\]
This is the statistic \citet{AnatolyevSolvsten2023} use for $H_{0}:R\beta=q$, taken at $R=[0\;I_{r}]$ and $q=\beta_{20}$; a general full-row-rank $R$ can be brought to $[0\;I_{r}]$ by a nonsingular transformation of the regressors, which leaves $F$ unchanged.

Following \citet{AnatolyevSolvsten2023}, consider critical values of the form 
\[
c_{\alpha}(E,V)=\frac{1}{r\hat{\sigma}^{2}_{\varepsilon}}(E+V^{1/2}\kappa_{\alpha}),\qquad\kappa_{\alpha}=\frac{q_{1-\alpha}(F_{r,n-d})-1}{\sqrt{2/r+2/(n-d)}},
\]
and $q_{1-\alpha}(F_{r,n-d})$ denotes the $(1-\alpha)$ quantile of the $F_{r,n-d}$ distribution. The null hypothesis is rejected whenever $F>c_{\alpha}(E,V)$. \citet{AnatolyevSolvsten2023} estimate $E$ using a leave-one-out estimator and $V$ using a leave-three-out estimator. 

\citet[Remark 4]{AnatolyevSolvsten2023} suggest estimating the error variances under the null, and write out the resulting leave-out estimates in their Appendix A.4. We follow that line in a narrower setting, where imposing the restriction leaves only a fixed-dimensional parameter to estimate, and we use the restricted residual itself in place of their leave-out construction. Consider the plug-in estimators 
\[
\hat{E}=\sum^{n}_{i=1}Q_{ii}\tilde{e}^{2}_{i},\qquad\hat{V}^{\mathrm{plug}}=2\sum^{n}_{i=1}\sum_{j\neq i}Q^{2}_{ij}\tilde{e}^{2}_{i}\tilde{e}^{2}_{j},
\]
where $\tilde{e}=y-W\tilde{\beta}_{1}-Z\beta_{20}$ is the restricted residual and 
\[
\tilde{\beta}_{1}=(W'W)^{-1}W'(y-Z\beta_{20})
\]
is the restricted OLS estimator under $H_{0}$. Keeping the centering estimator unchanged, we replace the plug-in variance estimator by
\[
\hat{V}^{\mathrm{cf}}=2\sum^{n}_{i=1}\sum_{j\neq i}\frac{Q^{2}_{ij}}{M_{ii}M_{jj}+M^{2}_{ij}}(\tilde{e}_{i}m^{\prime}_{i}\tilde{e})(\tilde{e}_{j}m^{\prime}_{j}\tilde{e}).
\]
The switch costs nothing in the statistic itself: $QW=0$, so $(y-Z\beta_{20})^{\prime}Q(y-Z\beta_{20})=\tilde{e}^{\prime}Q\tilde{e}$ and the numerator of $F$ is unchanged.

Subtracting the centering estimator removes the diagonal terms, $\tilde{e}^{\prime}Q\tilde{e}-\hat{E}=\sum^{n}_{i=1}\sum_{j\neq i}Q_{ij}\tilde{e}_{i}\tilde{e}_{j}$, so the centered statistic is exactly \eqref{eq:T} with $\hat{u}_{i}=\tilde{e}_{i}$. The many-restrictions testing problem therefore fits the framework of Section \ref{sec:gen} with 
\[
a_{ij}=Q_{ij},\qquad c_{ij}=2\frac{Q^{2}_{ij}}{M_{ii}M_{jj}+M^{2}_{ij}},\qquad\ell_{i}=m_{i}.
\]

Decompose the residual to locate the drift. By the definition of $\tilde{\beta}_{1}$, the restricted residual is $\tilde{e}=M_{W}(y-Z\beta_{20})$. Substituting $y_{i}=w^{\prime}_{i}\beta_{1}+z^{\prime}_{i}\beta_{2}+\varepsilon_{i}$ gives $y-Z\beta_{20}=W\beta_{1}+\Delta_{0}+\varepsilon$ with $\Delta_{0}=Z(\beta_{2}-\beta_{20})$, and $M_{W}W=0$ leaves the exact identity $\tilde{e}=M_{W}(\Delta_{0}+\varepsilon)$. Splitting it as
\[
\hat{u}_{i}=\tilde{e}_{i}=\underbrace{(M_{W}\Delta_{0})_{i}}_{\Delta_{i}}+\underbrace{\varepsilon_{i}}_{e_{i}}+\underbrace{-w^{\prime}_{i}(W^{\prime}W)^{-1}W^{\prime}\varepsilon}_{\rho_{i}}
\]
gives \eqref{eq:uhat} and \eqref{eq:split} with $\omega_{i}=0$, $H_{i}=w_{i}$ and $\hat{\delta}=-(W^{\prime}W)^{-1}W^{\prime}\varepsilon$. Here $\hat{\delta}=\mathbb{E}[\tilde{\beta}_{1}\mid X]-\tilde{\beta}_{1}$ is the sampling noise in the restricted estimator, free of the drift. The bound $\|\hat{\delta}\|=O_{p}(n^{-1/2})$, which Appendix \ref{app:F} derives from Assumption \ref{as:F}(i), therefore holds under the alternative as well as the null, with the rate coming from the fixed dimension of $w_{i}$. 

\begin{prop} \label{prop:F}
Suppose Assumption \ref{as:F} holds. Then the many-restrictions testing problem satisfies the conditions of Theorems \ref{thm:null} and \ref{thm:alt}. Hence the cross-fit variance estimator is consistent for the asymptotic variance under the null hypothesis, and, under Assumption \ref{as:F}(iv), under the alternative as well, while the plug-in estimator is not.
\end{prop}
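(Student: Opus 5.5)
We verify the conditions of Theorems \ref{thm:null} and \ref{thm:alt} for the many-restrictions problem with $a_{ij}=Q_{ij}$, $\ell_i=m_i$, $H_i=w_i$, $\omega=0$ and $\hat\delta=-(W'W)^{-1}W'\varepsilon$. The structure is simplest of the three applications.

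\textbf{Structural facts.} Three facts do most of the work, and we record them first.
\begin{itemize}
\item[(a)] $\ell_i'\Delta=m_i'M_WZ(\beta_2-\beta_{20})=0$. This holds because $MM_W=M$ and $MZ=0$.
\item[(b)] $\ell_i'H=m_i'W=0$. By Lemma \ref{lem:exp}, this gives $g_{2i}=0$, $g_{1i}=(m_i'\check u)w_i$ and $\mathcal{R}_n=0$.
\item[(c)] $\lambda=M\omega=0$ and $\pi=\Delta$.
\end{itemize}
Assumption \ref{as:gen} then reduces to moment bounds on $\varepsilon_i$ and the leverage condition. The leverage condition follows from $M_{ii}\ge 1-\max_i X_{ii}$, with the projection leverages bounded away from one (Assumption \ref{as:F}). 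The maximal bounds $\max|\varepsilon_i|,\max|m_i'\varepsilon|=O_p(b_n)$ come from a union bound with $b_n=n^{1/4}$ under fourth moments, or smaller under higher moments. Finally, $\|\hat\delta\|=O_p(n^{-1/2})$ follows from the fixed dimension of $w_i$.

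\textbf{Laws of large numbers and estimation error.} For \eqref{eq:cond} and the first two lines of \eqref{eq:condalt}, we invoke Lemma \ref{lem:lln}. Its aggregate bounds require $\max_{i}\sum_{j}Q_{ij}^2=\max_i Q_{ii}\to0$ relative to $A_n\asymp r$. They also require the usual bound $\sum_{i,j}c_{ij}^2\lesssim \sum Q_{ij}^4=o(r^2)$, using $\|Q\|=1$ so that $\operatorname{tr}(Q^4)=r$ and $\sum_{ij}Q_{ij}^4\le \max_i Q_{ii}\cdot r$.

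For the estimation-error conditions, only $\mathcal G_1$ and $\mathcal G_2$ matter on the cross-fit side. Write $\mathcal G_1=\sum_{i\neq j}c_{ij}\{g_{0i}(m_j'\check u)w_j+\cdots\}$. We bound it by its conditional second moment, using $\mathbb E\|w_j\|^2$ bounded and the same weight arithmetic, to get $\|\mathcal G_1\|=O_p(n\cdot(\text{per-pair size}))$. Since $\sum c_{ij}\lesssim r$, this should give $\|\hat\delta\|\|\mathcal G_1\|=O_p(r\,n^{-1/2}\cdot\sqrt{n}\,\max Q_{ii})$, or something similar, which is $o_p(r)$. Under the alternative, $\check u$ contains $\Delta$, so these moments pick up $\pi_i^2$. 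We control them by the Assumption \ref{as:F}(iv) bound on the drift.

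For the plug-in, $\hat V-\check V$ is a degree-four polynomial in $w_i'\hat\delta$ with nonvanishing cubic and quartic terms. Each is bounded by $\max_i|w_i'\hat\delta|^k$ times $\sum a_{ij}^2\check u_i^{2}\cdots$. This is $o_p(V+B_n)$ because $\max_i|w_i'\hat\delta|=O_p(n^{-1/2}\max\|w_i\|)\to0$.

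\textbf{Cross-fit bias and plug-in inconsistency.} With $\lambda\equiv0$, Remark \ref{rem:lambda0} gives
\[
B_n^{\mathrm{cf}}=\sum_{i\neq j}c_{ij}\{\Delta_i\Delta_j(M\Sigma M)_{ij}+2\Delta_iM_{ij}M_{jj}\mu_{3j}\}.
\]
We use Cauchy--Schwarz with $|c_{ij}|\lesssim Q_{ij}^2$. For the quadratic term, $|\sum c_{ij}\Delta_i\Delta_j(M\Sigma M)_{ij}|\lesssim \max_{i\ne j}|(M\Sigma M)_{ij}|\sum_{ij} Q_{ij}^2\Delta_i^2$. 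For the linear term, $|\sum c_{ij}\Delta_iM_{ij}\mu_{3j}|\lesssim (\sum_i\Delta_i^2 \sum_j Q_{ij}^2)^{1/2}(\sum_{ij}Q_{ij}^2M_{ij}^2)^{1/2}$. Both are $o(r)$ provided the off-diagonal entries of $M$ and $M\Sigma M$ are small, and $\sum_i\Delta_i^2Q_{ii}$ is controlled relative to $r$. This is presumably what Assumption \ref{as:F}(iv) imposes.

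\textbf{Main obstacle.} We expect the hardest step to be this bias bound. The off-diagonal entries of $M$ are not uniformly small; only $\sum_j M_{ij}^2=M_{ii}-M_{ii}^2$ is controlled. So we would first try to exploit $\sum_{j}M_{ij}^2\le 1-M_{ii}\le\max_i X_{ii}$, pairing it with $Q_{ij}^2\le Q_{ii}Q_{jj}$.

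Finally, plug-in inconsistency follows from Theorem \ref{thm:alt}'s last claim. This requires $\liminf r^{-1}\sum_{i}\Delta_i^2Q_{ii}(1-Q_{ii})>0$, which is the non-degeneracy half of Assumption \ref{as:F}(iv).
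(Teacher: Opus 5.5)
Your proposal has two genuine gaps, both at the steps that carry the real work in this application.

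The first is that \eqref{eq:cond} is never proved. You say you obtain it from Lemma \ref{lem:lln}, but \eqref{eq:cond} is a \emph{hypothesis} of that lemma. Its quartic group is disposed of by assuming it, so the appeal is circular: Lemma \ref{lem:lln} only carries the null law of large numbers over to the alternative. What is needed is a direct argument that $\sum_{i\neq j}\{c_{ij}\varepsilon_i(m_i'\varepsilon)\varepsilon_j(m_j'\varepsilon)-2Q_{ij}^2\sigma_i^2\sigma_j^2\}=o_p(r)$, a centered quartic form whose variance computation is the substantive step. The paper uses \citet[Lemma 2]{MikushevaSun2022}. That lemma is stated for weights $P_{ij}^2/(M_{ii}M_{jj}+M_{ij}^2)$ with $M=I-P$, whereas here the numerator is $Q_{ij}^2$ and $I-M=P_X=P_W+Q\neq Q$. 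The fix is to write $Q=P_X-P_W$ and expand $Q_{ij}^2=(P_X)_{ij}^2-2(P_X)_{ij}(P_W)_{ij}+(P_W)_{ij}^2$. The first piece is $2d$ times the Mikusheva--Sun expression for $P_X$ (rank $d\to\infty$, leverages at most $1-c_M$), hence $o_p(r)$. The other two have absolute weight sums $O(r^{1/2})$ and $O(1)$, because $\operatorname{tr}(P_W)=d-r$ is fixed. A first-moment bound therefore makes them $O_p(r^{1/2})$.

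The second gap is the cross-fit bias, which you leave open. A bound through $\max_{i\neq j}|(M\Sigma M)_{ij}|$ needs uniformly small off-diagonal entries of $P_X$, which, as you note yourself, Assumption \ref{as:F} does not supply. The row control $\sum_jM_{ij}^2\le(P_X)_{ii}$ is the right ingredient, but it needs a different partner than $Q_{ij}^2\le Q_{ii}Q_{jj}$, which is only an order-one bound per pair.

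Summed over $i$, the row control gives $L_F^2=\sum_{i\neq j}(P_X)_{ij}^2\le d$. Writing $M\Sigma M=\Sigma-P_X\Sigma-\Sigma P_X+P_X\Sigma P_X$ gives $S_F\le C\sqrt d$. Cauchy--Schwarz over all pairs at once then yields $|\sum_{i\neq j}c_{ij}\Delta_i\Delta_j(M\Sigma M)_{ij}|\le\xi_n^2\mathcal{C}_2^{1/2}S_F$ and $|\sum_{i\neq j}c_{ij}\Delta_iM_{ij}M_{jj}\mu_{3j}|\le C\xi_n\mathcal{C}_2^{1/2}L_F$.

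These are $o(r)$ only because $\mathcal{C}_2\le C\sum_{i\neq j}Q_{ij}^4=o(r)$, the third part of Assumption \ref{as:F}(iv). Your estimate $\sum Q_{ij}^4\le(\max_iQ_{ii})\,r$ gives only $O(r)$, which suffices for $\mathcal{C}_2=o(A_n^2)$ but not here. This is exactly the third condition of \eqref{eq:condagg}, which you did not check. Once it is checked, Lemma \ref{lem:lln} delivers $B_n^{\mathrm{cf}}=o(V)$ without a separate computation.

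Two smaller points. Your rate $r\,n^{-1/2}\sqrt{n}\max_iQ_{ii}$ for $\|\hat\delta\|\|\mathcal{G}_1\|$ is not $o(r)$. The vanishing factor should be $n^{-1/2}\max_j\|w_j\|=o_p(1)$; alternatively, \eqref{eq:wcs} gives $\|\mathcal{G}_1\|=O_p(\sqrt{rn})$. On the other hand, your treatment of $\hat V-\check V$ through $\max_i|w_i'\hat\delta|\le\|\hat\delta\|\max_i\|w_i\|=o_p(1)$ is sound and shorter than the paper's $\mathcal{G}_1,\dots,\mathcal{G}_4$ bookkeeping. Since $m_i'\hat u=m_i'\varepsilon$ exactly, the same device handles $\hat V_{\mathrm{cf}}-\check V_{\mathrm{cf}}$ as well.
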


The decomposition of $m^{\prime}_{i}\hat{u}$ is as in Section \ref{sub:spec}, with $\omega\equiv0$ here, so that $\pi_{i}=\Delta_{i}$ and $\lambda_{i}=0$. The drift is annihilated because $MW=0$ gives $MP_{W}=0$, and $\Delta_{0}\in\operatorname{col}(Z)\subseteq\operatorname{col}(X)$ gives $M\Delta_{0}=0$; hence $M\Delta=MM_{W}\Delta_{0}=0$. The auxiliary vectors annihilate the estimation directions as well, $\ell^{\prime}_{i}H=m^{\prime}_{i}W=0$, so $g_{2i}=0$ and $\mathcal{R}_{n}=0$ in Lemma \ref{lem:exp}, as in the specification test.

\begin{rem} \label{rem:appbias}
We close Section \ref{sec:app} by comparing the two biases across the three applications. Lemma \ref{lem:bias} gives them in each case on substituting $a_{ij}$ and $\ell_{i}=m_{i}$ from Table \ref{table:apps}. The plug-in bias \eqref{eq:biasplug} behaves the same way in all three. It is positive, and under condition (iv) of the corresponding application-specific assumption it does not vanish relative to the variance: $\liminf_{n}B_{n}/V>0$. The cross-fit bias differs across the three, according to whether $\lambda$ vanishes.

In the specification test $\lambda$ is negligible, and in the many-restrictions test $\lambda\equiv0$, so in both the cross-fit bias reduces to the three terms of Remark \ref{rem:lambda0}. Each carries an off-diagonal entry of $M$ or $M\Sigma M$. In the specification test that factor is bounded by the maximal leverage $\max_{i}P_{ii}$, which Assumption \ref{as:spec}(ii) makes $o_{p}(1)$; in the many-restrictions test it is small once the off-diagonal entries of $Q$ are not concentrated on a few pairs, which Assumption \ref{as:F}(iv) requires. Because the number of contaminating terms grows with the number of restrictions while the drift per observation stays bounded, the reduction matters most there.

In the overidentification test $\lambda_{i}=\omega_{i}$ does not vanish and the reduction fails. The terms of \eqref{eq:biascf} carrying $\pi_{i}\lambda_{i}=\bar{u}_{i}\omega_{i}$ survive, and the leading three,
\[
\sum^{n}_{i=1}\sum_{j\neq i}c_{ij}\bigl\{\bar{u}_{i}\omega_{i}\bar{u}_{j}\omega_{j}+\bar{u}_{i}\omega_{i}M_{jj}\sigma^{2}_{j}+M_{ii}\sigma^{2}_{i}\bar{u}_{j}\omega_{j}\bigr\},
\]
carry no off-diagonal factor: at most a diagonal entry of $M$, which is bounded away from zero. Accordingly $\max_{i}|\omega_{i}|$ has to be bounded directly there, and the gain in that application depends on how much of the misspecification the instruments span. $\blacksquare$
\end{rem}

\section{Simulation results} \label{sec:sim}

This section reports the finite-sample behavior of the cross-fit variance estimator in the three problems of Section \ref{sec:app}. In every design the test statistic is held fixed and the procedures differ only in the variance estimator, or, in Section \ref{subsec:Testing-many-linear}, in the critical value built from it, so the differences reported below are attributable to variance estimation alone. Rejection frequencies are at the nominal $5\%$ level, and the null rejection rates are reported with each design. They are not size-corrected except in two places, Table \ref{table:directions} and Appendix \ref{subsec:Size-corrected-power}, where each procedure is instead referred to the $95$th percentile of its own simulated null distribution.

\subsection{Specification testing} \label{subsec:Specification-testing-sim}

We first consider the specification-testing problem of Section \ref{sub:spec}. Following \citet{TripathiKitamura2003},
\[
y_{i}=\beta_{0}+\beta_{1}x_{i}+\frac{c}{\sqrt{2\pi}}\exp\!\left(-(x_{i}-2.5)^{2}\right)+u_{i},\qquad u_{i}=\varepsilon_{i}\sqrt{0.1+0.2x_{i}+0.3x^{2}_{i}},
\]
with $\beta_{0}=\beta_{1}=1$, $n=400$, $\log x_{i}\sim N(0,1)$ truncated at the upper $5\%$ tail, and $\varepsilon_{i}\sim N(0,1)$ independent of $x_{i}$. The parameter $c$ indexes the departure from linearity: $c=0$ is the null, under which the linear specification is correct, and $c>0$ the alternative, under which it is not.

The sieve is a cubic spline,
\[
b(x)=\bigl(1,\;x,\;x^{2},\;x^{3},\;(x-\tau_{1})^{3}_{+},\;\ldots,\;(x-\tau_{K-4})^{3}_{+}\bigr)',
\]
with the knots $\tau_{k}$ at the $k/(K-3)$ sample quantiles of $x$. The constant and the linear term belong to it, so it spans the linear specification, as Section \ref{sub:spec} assumes. At a given noncentrality the inflation of the plug-in variance is of order $\sqrt{K}/n$ relative to the variance. We therefore vary $K$, reporting $K\in\{8,20,40\}$, which spans $K/n$ from $0.02$ to $0.10$.

Table \ref{table:spec_test_power} reports the frequencies. Under the null the conventional test is mildly conservative at every $K$, while the cross-fit test is close to nominal at $K=8$ and mildly liberal at $K=40$, reaching $6.8\%$; the two effects act in the same direction, so part of the difference reported next is a difference in null rejection rates rather than in power. Under the alternative the cross-fit procedure dominates at every $c$ and every $K$, and the margin widens with $K/n$: the largest gap is $2.4$ percentage points at $K=8$, $3.5$ at $K=20$ and $4.6$ at $K=40$. Within each $K$ it is largest at intermediate $c$, where the contamination is big enough to cost power while both procedures still have room to gain.

\begin{table}[t]
\caption{Empirical rejection frequencies for the specification test}\label{table:spec_test_power}

\centering
\begin{tabular}{ccccccc}
\hline
& \multicolumn{2}{c}{$K=8$} & \multicolumn{2}{c}{$K=20$} & \multicolumn{2}{c}{$K=40$} \\
\cline{2-3} \cline{4-5} \cline{6-7}
$c$ & $T$ & $T_{\mathrm{cf}}$ & $T$ & $T_{\mathrm{cf}}$ & $T$ & $T_{\mathrm{cf}}$ \\
\hline
0.0 & 3.5 & 5.2 & 3.6 & 6.1 & 4.5 & 6.8 \\
1.5 & 14.2 & 16.6 & 12.2 & 15.2 & 9.9 & 13.7 \\
2.0 & 25.0 & 26.6 & 20.0 & 23.5 & 16.4 & 20.7 \\
2.5 & 39.0 & 41.2 & 32.2 & 35.7 & 25.7 & 30.3 \\
3.0 & 55.5 & 57.1 & 46.8 & 49.9 & 37.8 & 42.3 \\
3.5 & 71.1 & 72.0 & 62.0 & 64.8 & 51.7 & 55.4 \\
4.0 & 82.9 & 83.1 & 76.7 & 77.8 & 65.9 & 69.0 \\
\hline
\end{tabular}

\smallskip

{\footnotesize Notes: Rejection frequencies in percent at the nominal $5\%$ level, $n=400$, 5,000 Monte Carlo replications. $c=0$ is the null; $c>0$ the alternative. $T$ is the test of \citet{SunLi2006}; $T_{\mathrm{cf}}$ replaces $\hat{V}$ by $\hat{V}_{\mathrm{cf}}$. The cross-fit variance estimate is negative in at most $0.8\%$ of replications; those replications are counted as non-rejections.}{\footnotesize\par}
\end{table}

\subsection{Overidentification testing} \label{subsec:sim-overid}

We next consider the overidentification test of Section \ref{sub:over}. The model is
\[
y_{i}=\beta^{\prime}x_{i}+h_{i}+\varepsilon_{i},\qquad x_{i}=(1,\Upsilon_{i}+v_{i})^{\prime},\qquad\Upsilon_{i}=z^{\prime}_{i}\psi,\qquad\varepsilon_{i}=\rho v_{i}+\sqrt{1-\rho^{2}}\,s_{i}e_{i},
\]
with $\beta=0$, $G=2$, $\rho=0.6$, $s^{2}_{i}=(1+z^{2}_{i1})/2$, $v_{i},e_{i}\sim N(0,1)$ and $n=400$, so that the conditional variance is driven by the first instrument. The instruments are $z_{i}=(1,z_{i1},\ldots,z_{i,K-1})^{\prime}$, where $z_{i2}$ is a dummy equal to one with probability $0.05$ and the rest are standard normal; $\psi$ is spread evenly over the non-constant instruments with the concentration parameter held at $n\psi^{\prime}\psi=400$, so identification does not weaken as $K$ grows. We use the HFUL estimator of \citet{HausmanNeweyWoutersenChaoSwanson2012} as \citet{ChaoHausmanNeweySwansonWoutersen2014} do.

The misspecification is $h_{i}=\eta z_{i3}$, a direct effect of one instrument on the outcome. It lies in $\operatorname{col}(Z)$, so $\omega=M\bar{u}=0$ and the whole of $\bar{u}$ enters the statistic as drift, $\Delta=P\bar{u}=\bar{u}$.

Table \ref{table:overid_power} reports the frequencies. Both statistics are close to nominal under the null, with the cross-fit version at most one percentage point above the conventional one. The cross-fit statistic has rejection frequencies at least as large at every $\eta$, and the margin grows with $K/n$: $0.5$ percentage points at $0.025$, $1.9$ at $0.10$ and $3.9$ at $0.30$.

\begin{table}[t]
\caption{Empirical rejection frequencies for the overidentification test}\label{table:overid_power}

\centering
\begin{tabular}{ccccccc}
\hline
& \multicolumn{2}{c}{$K=10$} & \multicolumn{2}{c}{$K=40$} & \multicolumn{2}{c}{$K=120$} \\
& \multicolumn{2}{c}{$K/n=0.025$} & \multicolumn{2}{c}{$K/n=0.10$} & \multicolumn{2}{c}{$K/n=0.30$} \\
\cline{2-3} \cline{4-5} \cline{6-7}
$\eta$ & $J$ & $J_{\mathrm{cf}}$ & $J$ & $J_{\mathrm{cf}}$ & $J$ & $J_{\mathrm{cf}}$ \\
\hline
0.00 & 5.1 & 5.4 & 5.6 & 6.0 & 4.9 & 5.9 \\
0.05 & 8.9 & 9.1 & 7.0 & 7.6 & 5.5 & 6.5 \\
0.10 & 21.3 & 21.7 & 11.4 & 12.8 & 7.2 & 8.7 \\
0.15 & 46.2 & 46.7 & 23.8 & 25.4 & 11.4 & 13.6 \\
0.20 & 75.2 & 75.3 & 44.2 & 46.1 & 19.9 & 23.4 \\
0.30 & 99.4 & 99.4 & 90.0 & 90.8 & 51.7 & 55.6 \\
\hline
\end{tabular}

\smallskip

{\footnotesize Notes: Rejection frequencies in percent at the nominal $5\%$ level, $n=400$, 2,000 Monte Carlo replications, first step estimated by HFUL. $\eta=0$ is the null. $J$ is the statistic of \citet{ChaoHausmanNeweySwansonWoutersen2014}; $J_{\mathrm{cf}}$ replaces $\hat{\Phi}$ by $\hat{\Phi}_{\mathrm{cf}}$. The cross-fit variance estimate is positive in every replication.}{\footnotesize\par}
\end{table}

\subsection{Testing many linear restrictions} \label{subsec:Testing-many-linear}

We finally consider the many-restrictions testing problem of Section \ref{sub:F}. We draw $n=300$ independent observations from
\[
y_{i}=w^{\prime}_{i}\beta_{1}+z^{\prime}_{i}\beta_{2}+\varepsilon_{i},\qquad\mathbb{E}[\varepsilon_{i}\mid x_{i}]=0,\qquad\mathbb{E}[\varepsilon^{2}_{i}\mid x_{i}]=(1+5|z_{i,1}|)^{2},
\]
without an intercept, where $x_{i}=(w^{\prime}_{i},z^{\prime}_{i})^{\prime}\in\mathbb{R}^{1+r}$ is standard normal, $w_{i}$ is scalar and $\beta_{1}=1$. The conditional variance is driven by the first restricted regressor.

We test $H_{0}:\beta_{2}=0$ against $\beta_{2}=(\eta/\sqrt{r})\mathbf{1}_{r}$, where $\eta\geq0$ indexes the signal strength and $\eta=0$ is the null. The alternative is composite, any $\beta_{2}\neq0$, and the departure enters the statistic through the quadratic form $\sum^{n}_{i=1}\sum_{j\neq i}Q_{ij}\Delta_{i}\Delta_{j}$, which is unchanged when $\beta_{2}$ is replaced by $-\beta_{2}$: the rejection frequency depends on the magnitude of $\beta_{2}$ and on its direction in $\mathbb{R}^{r}$, not on its sign. Taking $\beta_{2}$ proportional to $\mathbf{1}_{r}$ fixes that direction at the one spreading the departure evenly over the $r$ coefficients, and $\eta$ varies only the magnitude. The normalization by $\sqrt{r}$ holds the magnitude of the coefficient vector fixed as $r$ varies, so the three values $r\in\{60,120,180\}$ are comparable and span $r/n$ from $0.20$ to $0.60$. With standard normal regressors this design has $\max_{i}|\Delta_{i}|$ growing like $\sqrt{\log n}$, so it meets the boundedness condition of Assumption \ref{as:F}(iv) only up to a logarithmic factor; the calibrated design of Section \ref{sec:calibrated}, where the covariate is truncated, meets it exactly. Four procedures share the same $F$ statistic and differ only in the critical value: the conventional $F$ value, the procedure of \citet{AnatolyevSolvsten2023}, the plug-in procedure based on restricted residuals, and the proposed cross-fit procedure.

Figure \ref{fig:matsushita} reports the frequencies. Under the null the cross-fit and Anatolyev--S{ø}lvsten procedures are close to the nominal level at all three $r$, the plug-in procedure is conservative and becomes more so as $r$ grows, from $3.8\%$ at $r=60$ to $2.8\%$ at $r=180$, and the conventional $F$ test is mildly oversized. Under the alternative the cross-fit procedure rejects more often than the plug-in one at every $\eta$, and the margin widens sharply with $r/n$: the largest gap is $4.1$ percentage points at $r=60$, $7.9$ at $r=120$ and $13.6$ at $r=180$. The cross-fit procedure also tracks the Anatolyev--S{ø}lvsten procedure closely throughout, exceeding it by at most $2.5$ percentage points, at $r=180$.

Remark \ref{rem:gain} and Corollary \ref{cor:sat} say where the gain should appear. Holding $\|\beta_{2}\|=\eta$ fixed, we place the departure in three ways: spread over the $r$ coefficients, as above; on the first restricted coefficient, whose regressor drives the error variance; and on a coefficient whose regressor is a standardized Bernoulli$(p)$ dummy, which concentrates the drift on a $p$ fraction of the observations and leaves it independent of the error variance. Table \ref{table:directions} reports the largest gap over the $\eta$ grid.

The variance driver makes almost no difference. At $r=60$ the corrected gain is 0.2 percentage points when the departure is spread over the $r$ coefficients and 0.3 when it falls on the driver. As mentioned in Remark \ref{rem:gain}, the gain depends on where the departure falls relative to the weights, and moving it onto the driver changes only its position relative to the error variances. Concentration makes a large difference. The corrected gain is 1.6 points when the dummy fires on $5\%$ of the observations and 27.0 when it fires on $1\%$, as the kurtosis of the drift rises from 3 to 19 to 118. Using the notation of Corollary \ref{cor:sat}, the saturation level $\tau_{n}$ falls as the departure concentrates, so the plug-in statistic stops rising sooner. In the first three rows the raw gaps are much larger than the corrected ones, because the two procedures reject at different rates under the null; in the last row most of the gap survives the correction.

\begin{table}[t]
\caption{Largest rejection-frequency gap between the cross-fit and plug-in procedures, by placement of the departure}\label{table:directions}

\centering
\small
\begin{tabular}{lccccc}
\hline
 &  & \multicolumn{2}{c}{$r=60$} & \multicolumn{2}{c}{$r=180$}\\
\cline{3-4} \cline{5-6}
Departure & kurtosis & raw & corrected & raw & corrected\\
\hline
spread over the $r$ coefficients & 3.0 & 4.6 & 0.2 & 14.1 & 0.7\\
on the variance driver & 3.0 & 3.7 & 0.3 & 12.4 & 0.8\\
on a rare dummy, $p=0.05$ & 19.1 & 5.6 & 1.6 & 13.8 & 1.1\\
on a rare dummy, $p=0.01$ & 118.2 & 32.3 & 27.0 & 26.0 & 14.4\\
\hline
\end{tabular}
\par\smallskip

Notes: Largest gap over the $\eta$ grid, in percentage points, at $n=300$ with 2,000 replications and $\|\beta_{2}\|$ held fixed across rows. Corrected refers each procedure to the 95th percentile of its own simulated null distribution. Kurtosis is that of the drift $\Delta_{i}$ across observations.
\end{table}

\begin{figure}[t]
\centering \begin{tikzpicture}
\begin{groupplot}[
    group style={
        group size=3 by 2,
        horizontal sep=0.7cm,
        vertical sep=0.6cm,
        xlabels at=edge bottom,
        xticklabels at=edge bottom,
    },
    width=0.29\textwidth,
    xmin=-0.05,
    xmax=3.05,
    grid=major,
    grid style={gray!20},
    legend style={
        font=\scriptsize,
        draw=gray!50,
        fill=white,
        fill opacity=0.95,
        text opacity=1
    },
    every axis plot/.append style={very thick},
    tick label style={font=\scriptsize},
    label style={font=\small},
    title style={font=\small},
]


\nextgroupplot[
    height=5.0cm,
    title={$r=60$ ($d=61$)},
    ylabel={rejection frequency},
    ylabel style={font=\scriptsize},
    ymin=0,
    ymax=1.02,
    ytick={0,0.2,0.4,0.6,0.8,1.0},
    legend pos=north west,
]
\addplot[blue!70!black, mark=triangle*, mark size=1.3pt]
coordinates {
(0.000,0.050) (0.200,0.052) (0.400,0.064) (0.600,0.084)
(0.800,0.114) (1.000,0.164) (1.200,0.239) (1.400,0.347)
(1.600,0.471) (1.800,0.606) (2.000,0.727) (2.200,0.834)
(2.400,0.910) (2.600,0.956) (2.800,0.986) (3.000,0.995)
};
\addlegendentry{AS23}

\addplot[orange!90!black, mark=square*, mark size=1.1pt]
coordinates {
(0.000,0.037) (0.200,0.043) (0.400,0.053) (0.600,0.071)
(0.800,0.100) (1.000,0.147) (1.200,0.215) (1.400,0.320)
(1.600,0.444) (1.800,0.570) (2.000,0.699) (2.200,0.815)
(2.400,0.896) (2.600,0.948) (2.800,0.982) (3.000,0.994)
};
\addlegendentry{plug-in}

\addplot[red!85!black, mark=*, mark size=1.1pt]
coordinates {
(0.000,0.050) (0.200,0.054) (0.400,0.066) (0.600,0.086)
(0.800,0.118) (1.000,0.169) (1.200,0.251) (1.400,0.358)
(1.600,0.484) (1.800,0.611) (2.000,0.737) (2.200,0.841)
(2.400,0.914) (2.600,0.959) (2.800,0.988) (3.000,0.995)
};
\addlegendentry{cross-fit}

\draw[dotted, gray!70]
(axis cs:-0.05,0.05) -- (axis cs:3.05,0.05);

\nextgroupplot[
    height=5.0cm,
    title={$r=120$ ($d=121$)},
    ymin=0,
    ymax=1.02,
    ytick={0,0.2,0.4,0.6,0.8,1.0},
]
\addplot[blue!70!black, mark=triangle*, mark size=1.3pt]
coordinates {
(0.000,0.050) (0.200,0.054) (0.400,0.060) (0.600,0.069)
(0.800,0.090) (1.000,0.117) (1.200,0.151) (1.400,0.209)
(1.600,0.272) (1.800,0.371) (2.000,0.474) (2.200,0.584)
(2.400,0.681) (2.600,0.782) (2.800,0.857) (3.000,0.915)
};

\addplot[orange!90!black, mark=square*, mark size=1.1pt]
coordinates {
(0.000,0.035) (0.200,0.036) (0.400,0.042) (0.600,0.050)
(0.800,0.062) (1.000,0.089) (1.200,0.113) (1.400,0.164)
(1.600,0.228) (1.800,0.312) (2.000,0.412) (2.200,0.521)
(2.400,0.635) (2.600,0.733) (2.800,0.822) (3.000,0.892)
};

\addplot[red!85!black, mark=*, mark size=1.1pt]
coordinates {
(0.000,0.056) (0.200,0.059) (0.400,0.066) (0.600,0.074)
(0.800,0.096) (1.000,0.119) (1.200,0.163) (1.400,0.220)
(1.600,0.288) (1.800,0.383) (2.000,0.490) (2.200,0.597)
(2.400,0.696) (2.600,0.789) (2.800,0.868) (3.000,0.920)
};

\draw[dotted, gray!70]
(axis cs:-0.05,0.05) -- (axis cs:3.05,0.05);

\nextgroupplot[
    height=5.0cm,
    title={$r=180$ ($d=181$)},
    ymin=0,
    ymax=1.02,
    ytick={0,0.2,0.4,0.6,0.8,1.0},
]
\addplot[blue!70!black, mark=triangle*, mark size=1.3pt]
coordinates {
(0.000,0.048) (0.200,0.050) (0.400,0.053) (0.600,0.059)
(0.800,0.069) (1.000,0.087) (1.200,0.107) (1.400,0.136)
(1.600,0.179) (1.800,0.225) (2.000,0.290) (2.200,0.359)
(2.400,0.435) (2.600,0.522) (2.800,0.608) (3.000,0.692)
};

\addplot[orange!90!black, mark=square*, mark size=1.1pt]
coordinates {
(0.000,0.028) (0.200,0.030) (0.400,0.033) (0.600,0.036)
(0.800,0.042) (1.000,0.052) (1.200,0.066) (1.400,0.085)
(1.600,0.113) (1.800,0.147) (2.000,0.197) (2.200,0.265)
(2.400,0.336) (2.600,0.411) (2.800,0.506) (3.000,0.595)
};

\addplot[red!85!black, mark=*, mark size=1.1pt]
coordinates {
(0.000,0.053) (0.200,0.056) (0.400,0.059) (0.600,0.069)
(0.800,0.081) (1.000,0.097) (1.200,0.120) (1.400,0.148)
(1.600,0.192) (1.800,0.243) (2.000,0.306) (2.200,0.374)
(2.400,0.460) (2.600,0.546) (2.800,0.627) (3.000,0.716)
};

\draw[dotted, gray!70]
(axis cs:-0.05,0.05) -- (axis cs:3.05,0.05);


\nextgroupplot[
    height=4.0cm,
    xlabel={$\eta$ (signal strength)},
    ylabel={rejection-frequency gap (pp)},
    ylabel style={font=\scriptsize},
    ymin=-0.04,
    ymax=0.16,
    ytick={0,0.05,0.10,0.15},
    yticklabels={0,5,10,15},
    legend pos=north west,
]
\addplot[red!85!black, mark=*, mark size=1.0pt]
coordinates {
(0.000,0.013) (0.200,0.012) (0.400,0.012) (0.600,0.015)
(0.800,0.019) (1.000,0.022) (1.200,0.036) (1.400,0.038)
(1.600,0.041) (1.800,0.041) (2.000,0.038) (2.200,0.026)
(2.400,0.018) (2.600,0.011) (2.800,0.006) (3.000,0.001)
};
\addlegendentry{cross-fit $-$ plug-in}

\addplot[blue!70!black, mark=triangle*, mark size=1.2pt, dashed]
coordinates {
(0.000,0.001) (0.200,0.003) (0.400,0.002) (0.600,0.003)
(0.800,0.004) (1.000,0.005) (1.200,0.012) (1.400,0.011)
(1.600,0.013) (1.800,0.005) (2.000,0.010) (2.200,0.007)
(2.400,0.004) (2.600,0.003) (2.800,0.002) (3.000,0.000)
};
\addlegendentry{cross-fit $-$ AS23}

\draw[dotted, gray!70]
(axis cs:-0.05,0) -- (axis cs:3.05,0);

\nextgroupplot[
    height=4.0cm,
    xlabel={$\eta$ (signal strength)},
    ymin=-0.04,
    ymax=0.16,
    ytick={0,0.05,0.10,0.15},
    yticklabels={0,5,10,15},
]
\addplot[red!85!black, mark=*, mark size=1.0pt]
coordinates {
(0.000,0.020) (0.200,0.022) (0.400,0.024) (0.600,0.024)
(0.800,0.034) (1.000,0.030) (1.200,0.050) (1.400,0.056)
(1.600,0.060) (1.800,0.071) (2.000,0.079) (2.200,0.076)
(2.400,0.061) (2.600,0.056) (2.800,0.046) (3.000,0.028)
};

\addplot[blue!70!black, mark=triangle*, mark size=1.2pt, dashed]
coordinates {
(0.000,0.006) (0.200,0.005) (0.400,0.005) (0.600,0.005)
(0.800,0.006) (1.000,0.002) (1.200,0.012) (1.400,0.010)
(1.600,0.016) (1.800,0.012) (2.000,0.016) (2.200,0.013)
(2.400,0.015) (2.600,0.007) (2.800,0.011) (3.000,0.005)
};

\draw[dotted, gray!70]
(axis cs:-0.05,0) -- (axis cs:3.05,0);

\nextgroupplot[
    height=4.0cm,
    xlabel={$\eta$ (signal strength)},
    ymin=-0.04,
    ymax=0.16,
    ytick={0,0.05,0.10,0.15},
    yticklabels={0,5,10,15},
]
\addplot[red!85!black, mark=*, mark size=1.0pt]
coordinates {
(0.000,0.025) (0.200,0.026) (0.400,0.026) (0.600,0.033)
(0.800,0.038) (1.000,0.044) (1.200,0.054) (1.400,0.063)
(1.600,0.079) (1.800,0.096) (2.000,0.109) (2.200,0.109)
(2.400,0.124) (2.600,0.136) (2.800,0.121) (3.000,0.121)
};

\addplot[blue!70!black, mark=triangle*, mark size=1.2pt, dashed]
coordinates {
(0.000,0.005) (0.200,0.006) (0.400,0.006) (0.600,0.010)
(0.800,0.011) (1.000,0.010) (1.200,0.013) (1.400,0.013)
(1.600,0.013) (1.800,0.018) (2.000,0.016) (2.200,0.016)
(2.400,0.025) (2.600,0.024) (2.800,0.019) (3.000,0.024)
};

\draw[dotted, gray!70]
(axis cs:-0.05,0) -- (axis cs:3.05,0);

\end{groupplot}
\end{tikzpicture}

\caption{Empirical rejection frequencies under $\beta_{2}=(\eta/\sqrt{r})\mathbf{1}_{r}$ for $r\in\{60,120,180\}$, based on $n=300$ and 2,000 Monte Carlo replications. The regressors are independent standard normal and $\sigma^{2}_{i}=(1+5|z_{i,1}|)^{2}$. The top row reports the rejection frequencies of AS23, the plug-in procedure, and the proposed cross-fit procedure. The bottom row reports the corresponding rejection-frequency gaps between the cross-fit procedure and the other two procedures. At $\eta=0$, the rejection frequencies of the conventional $F$/AS23/plug-in/cross-fit procedures are, respectively, 6.1/5.0/3.8/5.0\% for $r=60$, 5.9/5.0/3.5/5.5\% for $r=120$, and 5.3/4.8/2.8/5.3\% for $r=180$.}
\label{fig:matsushita} 
\end{figure}

\section{Empirical application: OHIE joint HTE test} \label{sec:emp}

We illustrate the many-restrictions procedures on data from the Oregon Health Insurance Experiment (OHIE), a lottery-based expansion of access to Medicaid in Oregon in 2008. \citet{Finkelstein2012}, hereafter F12, estimate intent-to-treat (ITT) effects of lottery selection on a broad range of health-care utilization and health outcomes, and report treatment-effect estimates for several prespecified subgroups while emphasizing the limited precision of those comparisons. We complement their analysis with a formal joint test of heterogeneity in the ITT effect: we interact lottery selection with a sieve expansion of the baseline covariates and test whether all interaction coefficients are jointly zero. Such a test differs from inspecting a collection of individual subgroup coefficients because it accounts explicitly for the dependence among a potentially large number of restrictions.

Rejection would indicate that the data are inconsistent with a constant conditional ITT effect relative to the specified sieve alternative. Non-rejection means that the data do not detect such heterogeneity; it does not establish exact homogeneity outside the alternative class, nor rule out economically meaningful heterogeneity that is estimated imprecisely.

\subsection{Setup}

For each observation $i$, we estimate the linear probability model
\begin{equation}
y_{i}=w_{i}'\beta_{1}+z_{i}'\beta_{2}+\varepsilon_{i},\qquad H_{0}:\beta_{2}=0,\label{eq:ohie}
\end{equation}
where $y_{i}$ indicates whether individual $i$ had at least one non-childbirth overnight hospital admission during the preceding six months. The vector $w_{i}$ contains a constant, the lottery-selection indicator $D_{i}$, the baseline covariates and corresponding missing-value indicators, and survey-wave-by-household-size fixed effects. The vector $z_{i}=D_{i}b(X_{i})$ contains interactions between lottery selection and a sieve expansion $b(X_{i})$ of the baseline covariates.

We use lottery-list participants who completed both the baseline survey and the 12-month survey and for whom the hospital-admission outcome is available, yielding $n=16{,}455$ observations. Missing continuous baseline covariates are replaced by their sample medians and missing binary covariates by their sample modes. Corresponding missing-value indicators are included among the controls.\footnote{This imputation scheme is used only to retain observations in the joint regression. The resulting test should therefore be interpreted relative to the baseline variables and missing-value controls included in the specified model.}

The sieve contains a rich polynomial expansion of the baseline covariates. It includes main effects, polynomials in age and household income up to degree five, all pairwise interactions among the binary covariates, and interactions between the binary covariates and the polynomial terms in the continuous covariates. The full-sample specification contains $r=136$ restrictions.

Following F12, we also conduct separate analyses for individuals reporting fair or poor baseline health and for those reporting good baseline health. Within each subgroup, sieve terms that become collinear with the controls are removed, leaving $r=112$ restrictions.\footnote{The subgroup samples exclude observations that do not belong to the corresponding self-rated-health category. Their sample sizes therefore need not sum to the full-sample size.}

We compare the conventional $F$ critical value, the plug-in critical value based on restricted residuals, and the proposed cross-fit critical value. The leave-three-out procedure of \citet{AnatolyevSolvsten2023} is not reported because its computational cost is prohibitive for the full OHIE specifications considered here.

\subsection{Results}

Table~\ref{tab:ohie} reports the results. None of the three procedures rejects the null hypothesis at the $5\%$ level, in the full sample or in either self-rated-health subgroup.

The plug-in and cross-fit critical values are nearly identical in this application. This is consistent with a setting in which the realized drift contamination is small at the estimated specification. Because the empirical result is a single realization, however, the similarity of the two critical values does not by itself reveal how the procedures behave under alternatives. 

\begin{table}[t]
\caption{OHIE joint tests of heterogeneity in the ITT effect}
\label{tab:ohie} \centering {\small{}%
\begin{tabular}{lrrrrrrrc}
\hline 
{\small Configuration } & {\small$n$ } & {\small$r$ } & {\small$r/n$ } & {\small$F$ } & {\small$c^{F}_{\alpha}$ } & {\small$c^{\mathrm{plug}}_{\alpha}$ } & {\small$c^{\mathrm{cf}}_{\alpha}$ } & {\small Rejection }\tabularnewline
\hline 
{\small Full sample } & {\small 16,455 } & {\small 136 } & {\small 0.83\% } & {\small 1.009 } & {\small 1.209 } & {\small 1.091 } & {\small 1.090 } & {\small None }\tabularnewline
{\small Fair/poor baseline health } & {\small 6,365 } & {\small 112 } & {\small 1.76\% } & {\small 0.839 } & {\small 1.588 } & {\small 1.032 } & {\small 1.031 } & {\small None }\tabularnewline
{\small Good baseline health } & {\small 9,610 } & {\small 112 } & {\small 1.17\% } & {\small 0.880 } & {\small 1.588 } & {\small 1.105 } & {\small 1.104 } & {\small None }\tabularnewline
\hline 
\end{tabular}}{\small\par}

{\small\medskip{}
}{\small{}%
\begin{minipage}[c]{0.94\textwidth}%
 {\footnotesize Notes: $c^{F}_{\alpha}$, $c^{\mathrm{plug}}_{\alpha}$,
and $c^{\mathrm{cf}}_{\alpha}$ denote the 5\% critical values based
on the conventional $F$ approximation, the restricted-residual plug-in
variance estimator, and the proposed cross-fit variance estimator,
respectively. The procedure of \citet{AnatolyevSolvsten2023} is not
reported because of its computational cost in these specifications. }%
\end{minipage}}{\small\par}
\end{table}

\subsection{Calibrated simulation} \label{sec:calibrated}

We next consider a simulation design motivated by the empirical application. The purpose is not to reproduce the OHIE data exactly, but to study the relative performance of the competing procedures in a design with a binary outcome, treatment interactions, a moderately large sieve, and heterogeneous treatment effects concentrated in the tails of a baseline covariate. We simulate from \eqref{eq:ohie} with $\mathbb{E}[\varepsilon_{i}\mid x_{i}]=0$ and test the same null, $H_{0}:\beta_{2}=0$.

Each observation contains a binary treatment indicator $D_{i}$ and baseline covariates 
\[
X_{i}=(\mathrm{age}_{i},\mathrm{income}_{i},\mathrm{female}_{i},\mathrm{white}_{i},\mathrm{urban}_{i})'.
\]
The continuous covariates $\mathrm{age}_{i}$ and $\mathrm{income}_{i}$ are generated independently from $N(0,0.25)$ truncated to $[-2,2]$. The binary covariates are generated independently from Bernoulli distributions with success probabilities $0.55$, $0.70$, and $0.75$, respectively. We set $w_{i}=(1,D_{i},X_{i}')'$ and $z_{i}=D_{i}b(X_{i})$, where $b(X_{i})$ is a polynomial sieve containing main effects, higher-order polynomial terms, and interactions. The coefficients in $\beta_{1}$ are calibrated to approximately match the magnitudes observed in the empirical application.

Under the alternative, the conditional mean contains the heterogeneity term
\[
\Delta_{0i}=z_{i}'\beta_{2}=\eta D_{i}\phi(X_{i}),
\]
so that the drift of Section \ref{sub:F} is $\Delta=M_{W}\Delta_{0}$, where $\eta\geq0$ controls the magnitude of treatment-effect heterogeneity. We consider $\phi(X_{i})=\mathrm{age}^{3}_{i}$ and $\phi(X_{i})=\mathrm{age}^{5}_{i}$. The null hypothesis corresponds to $\eta=0$. These alternatives concentrate heterogeneous treatment effects in the tails of the age distribution, with the fifth-order alternative generating a more uneven drift pattern.

The baseline simulation uses $n=1{,}500$, with $r=27$ restrictions under the cubic alternative and $r=28$ under the fifth-order one.

We report rejection frequencies at the nominal $5\%$ level. At the baseline sample size the Anatolyev--Sølvsten, plug-in and cross-fit procedures reject between $3.6\%$ and $4.4\%$ under the null, while the conventional $F$ procedure rejects between $8.0\%$ and $8.8\%$. Holding the sieve dimension fixed and reducing $n$ to $1{,}000$, $750$ and $500$ raises $r/n$ from $1.8\%$ to $5.6\%$, and Table~\ref{tab:rawsize} shows that the pattern persists: the $F$ test is oversized throughout, from $8.0\%$ to $12.4\%$, while the other three stay close to nominal, the cross-fit procedure becoming mildly liberal at the smaller sample sizes and reaching $6.2\%$ under the fifth-order basis at $n=500$.

\begin{table}[t]
\caption{Null rejection frequencies as $r/n$ increases}
\label{tab:rawsize} \centering {\small{}%
\begin{tabular}{lrrrrr}
\hline 
{\small Design } & {\small$r/n$ } & {\small Conventional $F$ } & {\small AS23 } & {\small Plug-in } & {\small Cross-fit }\tabularnewline
\hline 
{\small$\mathrm{age}^{3},\ n=1{,}500$ } & {\small 1.8\% } & {\small 0.080 } & {\small 0.042 } & {\small 0.040 } & {\small 0.042 }\tabularnewline
{\small$\mathrm{age}^{3},\ n=1{,}000$ } & {\small 2.7\% } & {\small 0.080 } & {\small 0.050 } & {\small 0.048 } & {\small 0.052 }\tabularnewline
{\small$\mathrm{age}^{3},\ n=750$ } & {\small 3.6\% } & {\small 0.110 } & {\small 0.046 } & {\small 0.044 } & {\small 0.056 }\tabularnewline
{\small$\mathrm{age}^{3},\ n=500$ } & {\small 5.4\% } & {\small 0.110 } & {\small 0.038 } & {\small 0.040 } & {\small 0.050 }\tabularnewline
\hline 
{\small$\mathrm{age}^{5},\ n=1{,}500$ } & {\small 1.8\% } & {\small 0.088 } & {\small 0.036 } & {\small 0.038 } & {\small 0.040 }\tabularnewline
{\small$\mathrm{age}^{5},\ n=1{,}000$ } & {\small 2.8\% } & {\small 0.092 } & {\small 0.032 } & {\small 0.032 } & {\small 0.038 }\tabularnewline
{\small$\mathrm{age}^{5},\ n=750$ } & {\small 3.7\% } & {\small 0.100 } & {\small 0.036 } & {\small 0.038 } & {\small 0.038 }\tabularnewline
{\small$\mathrm{age}^{5},\ n=500$ } & {\small 5.6\% } & {\small 0.124 } & {\small 0.048 } & {\small 0.046 } & {\small 0.062 }\tabularnewline
\hline 
\end{tabular}}{\small\par}

{\small\medskip{}
}{\small{}%
\begin{minipage}[c]{0.92\textwidth}%
{\footnotesize Notes: Entries are rejection frequencies under $\eta=0$ at the nominal 5\% significance level, based on 500 Monte Carlo replications. }%
\end{minipage}}{\small\par}
\end{table}

Figure~\ref{fig:phase1b} reports raw rejection frequencies under the alternatives. The cross-fit procedure rejects more often than the plug-in procedure at every point of the grid. The largest gap is $5.6$ percentage points under the cubic alternative and $7.6$ under the fifth-order one, where it stays between $4$ and $8$ points over the intermediate range $\eta\in[0.075,0.2]$; against the Anatolyev--Sølvsten procedure the largest gap is about $4$ points. The gaps close as $\eta$ grows and all three procedures approach rejection probability one.

\begin{figure}[t]
\centering \begin{tikzpicture}
\begin{groupplot}[
    group style={
        group size=2 by 2,
        horizontal sep=1.4cm,
        vertical sep=0.6cm,
        xlabels at=edge bottom,
        xticklabels at=edge bottom
    },
    width=0.43\textwidth,
    xmin=-0.01,
    xmax=0.31,
    grid=major,
    grid style={gray!20},
    legend style={
        font=\scriptsize,
        draw=gray!50,
        fill=white,
        fill opacity=0.95,
        text opacity=1
    },
    every axis plot/.append style={very thick},
    tick label style={font=\scriptsize},
    label style={font=\small},
    title style={font=\small}
]

\nextgroupplot[
    height=5.5cm,
    title={\(\phi(X)=\mathrm{age}^{3}\)},
    ylabel={rejection frequency (\%)},
    ylabel style={font=\scriptsize},
    ymin=0,
    ymax=1.02,
    ytick={0,0.2,0.4,0.6,0.8,1.0},
    yticklabels={0,20,40,60,80,100},
    legend pos=north west
]
\addplot[blue!70!black,mark=triangle*]
coordinates {
(0.000,0.042)
(0.025,0.058)
(0.050,0.082)
(0.075,0.122)
(0.100,0.184)
(0.125,0.308)
(0.150,0.470)
(0.200,0.720)
(0.300,0.950)
};
\addlegendentry{AS23}

\addplot[orange!90!black,mark=square*]
coordinates {
(0.000,0.040)
(0.025,0.054)
(0.050,0.080)
(0.075,0.116)
(0.100,0.174)
(0.125,0.298)
(0.150,0.442)
(0.200,0.692)
(0.300,0.946)
};
\addlegendentry{plug-in}

\addplot[red!85!black,mark=*]
coordinates {
(0.000,0.042)
(0.025,0.060)
(0.050,0.092)
(0.075,0.142)
(0.100,0.200)
(0.125,0.336)
(0.150,0.498)
(0.200,0.744)
(0.300,0.954)
};
\addlegendentry{cross-fit}

\draw[dotted,gray!70]
(axis cs:-0.01,0.05) -- (axis cs:0.31,0.05);

\nextgroupplot[
    height=5.5cm,
    title={\(\phi(X)=\mathrm{age}^{5}\)},
    ymin=0,
    ymax=1.02,
    ytick={0,0.2,0.4,0.6,0.8,1.0},
    yticklabels={0,20,40,60,80,100}
]
\addplot[blue!70!black,mark=triangle*]
coordinates {
(0.000,0.036)
(0.025,0.050)
(0.050,0.134)
(0.075,0.302)
(0.100,0.466)
(0.125,0.606)
(0.150,0.684)
(0.200,0.848)
(0.300,0.974)
};

\addplot[orange!90!black,mark=square*]
coordinates {
(0.000,0.038)
(0.025,0.042)
(0.050,0.120)
(0.075,0.264)
(0.100,0.434)
(0.125,0.568)
(0.150,0.644)
(0.200,0.812)
(0.300,0.974)
};

\addplot[red!85!black,mark=*]
coordinates {
(0.000,0.040)
(0.025,0.058)
(0.050,0.158)
(0.075,0.340)
(0.100,0.478)
(0.125,0.644)
(0.150,0.710)
(0.200,0.864)
(0.300,0.986)
};

\draw[dotted,gray!70]
(axis cs:-0.01,0.05) -- (axis cs:0.31,0.05);

\nextgroupplot[
    height=4cm,
    xlabel={\(\eta\)},
    ylabel={rejection-frequency gap (pp)},
    ylabel style={font=\scriptsize},
    ymin=-0.02,
    ymax=0.10,
    ytick={0.00,0.04,0.08},
    yticklabels={0,4,8},
    legend pos=north west
]
\addplot[orange!90!black,mark=square*]
coordinates {
(0.000,0.002)
(0.025,0.006)
(0.050,0.012)
(0.075,0.026)
(0.100,0.026)
(0.125,0.038)
(0.150,0.056)
(0.200,0.052)
(0.300,0.008)
};
\addlegendentry{cross-fit \(-\) plug-in}

\addplot[blue!70!black,mark=triangle*,dashed]
coordinates {
(0.000,0.000)
(0.025,0.002)
(0.050,0.010)
(0.075,0.020)
(0.100,0.016)
(0.125,0.028)
(0.150,0.028)
(0.200,0.024)
(0.300,0.004)
};
\addlegendentry{cross-fit \(-\) AS23}

\draw[dotted,gray!70]
(axis cs:-0.01,0) -- (axis cs:0.31,0);

\nextgroupplot[
    height=4cm,
    xlabel={\(\eta\)},
    ymin=-0.02,
    ymax=0.10,
    ytick={0.00,0.04,0.08},
    yticklabels={0,4,8}
]
\addplot[orange!90!black,mark=square*]
coordinates {
(0.000,0.002)
(0.025,0.016)
(0.050,0.038)
(0.075,0.076)
(0.100,0.044)
(0.125,0.076)
(0.150,0.066)
(0.200,0.052)
(0.300,0.012)
};

\addplot[blue!70!black,mark=triangle*,dashed]
coordinates {
(0.000,0.004)
(0.025,0.008)
(0.050,0.024)
(0.075,0.038)
(0.100,0.012)
(0.125,0.038)
(0.150,0.026)
(0.200,0.016)
(0.300,0.012)
};

\draw[dotted,gray!70]
(axis cs:-0.01,0) -- (axis cs:0.31,0);

\end{groupplot}
\end{tikzpicture}

\caption{Raw rejection frequencies in the OHIE-calibrated simulation. The columns correspond to $\phi(X)=\mathrm{age}^{3}$ ($r=27$) and $\phi(X)=\mathrm{age}^{5}$ ($r=28$). The sample size is $n=1{,}500$, and results are based on 500 Monte Carlo replications for each value of $\eta$. The top row reports raw rejection frequencies for the Anatolyev--S{ø}lvsten, plug-in, and cross-fit procedures. The bottom row reports the differences between the cross-fit rejection frequency and those of the other two procedures. Rejection frequencies are computed over replications in which the corresponding variance estimate is nonnegative. Across the 4,500 replications for each alternative direction, the Anatolyev--S{ø}lvsten variance estimate is negative in 29 replications under the cubic alternative and 137 replications under the fifth-order alternative. The cross-fit variance estimate is never negative across the 9,000 replications.}
\label{fig:phase1b} 
\end{figure}

The gain depends on how the drift is distributed. It is larger under the tail-concentrated fifth-order alternative than under the cubic one throughout, and it persists as $r/n$ grows. Appendix~\ref{subsec:Size-corrected-power} reports size-corrected frequencies across the four sample sizes, where the cubic gap falls from $5.0$ percentage points at $r/n=1.8\%$ to about zero at $5.4\%$ while the fifth-order gap stays between $4.0$ and $7.0$ points. This is consistent with the mechanism of Section~\ref{sec:gen}, in which the gain is governed by $B_{n}/V$ and a concentrated systematic component makes that ratio large. The comparison at the largest $r/n$ should be read with care, since the asymptotic approximation behind the cross-fit procedure becomes less accurate there.

\subsubsection{Numerical stability}

The simulations also reveal a substantial difference in numerical stability. Across the two baseline designs, the proposed cross-fit variance estimate is never negative in 9,000 replications. The Anatolyev--S{ø}lvsten variance estimate is negative in 29 replications under the cubic alternative and 137 replications under the fifth-order alternative, with most occurrences arising at larger values of $\eta$. The same holds in the smaller samples of Appendix~\ref{subsec:Size-corrected-power}: it is negative in as many as 50 of 500 replications at a given grid point, whereas the cross-fit estimate is never negative in any reported design.

These results provide finite-sample evidence that the proposed cross-fit estimator is numerically more stable in the designs considered here. They should not, however, be interpreted as implying that the cross-fit variance estimator is algebraically guaranteed to be nonnegative. Although $M_{ii}M_{jj}+M^{2}_{ij}>0$ under a full-rank design, the products entering the numerator of the cross-fit estimator can have either sign.

\newpage{}

\appendix

\section{Proofs for Section \ref{sec:gen}}

\subsection{Proof of Lemma \ref{lem:exp}}

By \eqref{eq:uhat} and \eqref{eq:split}, $\hat{u}_{i}=\check{u}_{i}+H^{\prime}_{i}\hat{\delta}$ and $\ell^{\prime}_{i}\hat{u}=(\ell^{\prime}_{i}\check{u})+h_{i}\hat{\delta}$. Multiplying the two factors gives
\[
\hat{u}_{i}(\ell^{\prime}_{i}\hat{u})=\check{u}_{i}(\ell^{\prime}_{i}\check{u})+\{\check{u}_{i}h_{i}+(\ell^{\prime}_{i}\check{u})H^{\prime}_{i}\}\hat{\delta}+(\hat{\delta}^{\prime}H_{i})(h_{i}\hat{\delta}).
\]
The coefficient in braces is a $1\times p$ row vector whose transpose is $g_{1i}$. In the last term both factors are scalars, so it equals $\hat{\delta}^{\prime}(H_{i}h_{i})\hat{\delta}$; since $\hat{\delta}^{\prime}A\hat{\delta}=\hat{\delta}^{\prime}(A+A^{\prime})\hat{\delta}/2$ for any $p\times p$ matrix $A$, and $(H_{i}h_{i})^{\prime}=h^{\prime}_{i}H^{\prime}_{i}$, it equals $\hat{\delta}^{\prime}g_{2i}\hat{\delta}$. This proves the first assertion. 

Substituting this identity into \eqref{eq:Vcf} and collecting terms by their degree in $\hat{\delta}$ gives \eqref{eq:expansion}, with $\mathcal{G}_{1}$ and $\mathcal{G}_{2}$ as stated and with
\[
\mathcal{R}_{n}=\sum^{n}_{i=1}\sum_{j\neq i}c_{ij}\bigl\{(g^{\prime}_{1i}\hat{\delta})(\hat{\delta}^{\prime}g_{2j}\hat{\delta})+(g^{\prime}_{1j}\hat{\delta})(\hat{\delta}^{\prime}g_{2i}\hat{\delta})+(\hat{\delta}^{\prime}g_{2i}\hat{\delta})(\hat{\delta}^{\prime}g_{2j}\hat{\delta})\bigr\}.
\]

Each term of the last display is homogeneous in $\hat{\delta}$: the first two are $\|\hat{\delta}\|^{3}$ times the corresponding term at $\hat{\delta}/\|\hat{\delta}\|$, and the third is $\|\hat{\delta}\|^{4}$ times it. Since $\hat{\delta}/\|\hat{\delta}\|$ is a unit vector, taking absolute values and then the supremum over unit vectors gives the stated bound.

Finally, if $\ell^{\prime}_{i}H=0$ for every $i$ then $h_{i}=0$, so $g_{2i}=0$ and $g_{1i}=(\ell^{\prime}_{i}\check{u})H_{i}$. Every term in $\mathcal{R}_{n}$ carries a factor $g_{2i}$ or $g_{2j}$, so $\mathcal{R}_{n}=0$. 

\subsection{Proof of Theorem \ref{thm:null}}

Under $H_{0}$ we have $\Delta_{i}=0$, so $\check{u}_{i}=e_{i}+\omega_{i}$ in the notation of Lemma \ref{lem:exp}, and \eqref{eq:expansion} gives
\[
\hat{V}_{\mathrm{cf}}=\sum^{n}_{i=1}\sum_{j\neq i}c_{ij}g_{0i}g_{0j}+\mathcal{G}^{\prime}_{1}\hat{\delta}+\hat{\delta}^{\prime}\mathcal{G}_{2}\hat{\delta}+\mathcal{R}_{n}.
\]
The last three terms are $o_{p}(V)$: by the Cauchy--Schwarz inequality $|\mathcal{G}^{\prime}_{1}\hat{\delta}|\le\|\hat{\delta}\|\,\|\mathcal{G}_{1}\|$ and $|\hat{\delta}^{\prime}\mathcal{G}_{2}\hat{\delta}|\le\|\hat{\delta}\|^{2}\|\mathcal{G}_{2}\|$, and \eqref{eq:condest} bounds the first two. For the third, Lemma \ref{lem:exp} gives $|\mathcal{R}_{n}|\le\|\hat{\delta}\|^{3}\mathcal{G}_{3}+\|\hat{\delta}\|^{4}\mathcal{G}_{4}$, which is $o_{p}(V)$ by \eqref{eq:condest}. It therefore suffices to show that $\sum^{n}_{i=1}\sum_{j\neq i}c_{ij}g_{0i}g_{0j}=V+o_{p}(V)$.

Write $g_{0i}=e_{i}(\ell^{\prime}_{i}e)+R_{i}$ with
\[
R_{i}=\omega_{i}(\ell^{\prime}_{i}e)+e_{i}(\ell^{\prime}_{i}\omega)+\omega_{i}(\ell^{\prime}_{i}\omega).
\]
By Assumption \ref{as:gen}, $\max_{i}|e_{i}(\ell^{\prime}_{i}e)|=O_{p}(b^{2}_{n})$ and
\begin{eqnarray*}
\max_{i}|R_{i}| & \le & \max_{i}|\omega_{i}|\max_{i}|\ell^{\prime}_{i}e|+\max_{i}|e_{i}|\max_{i}|\ell^{\prime}_{i}\omega|+\max_{i}|\omega_{i}|\max_{i}|\ell^{\prime}_{i}\omega|\\
 & = & o_{p}(b^{-2}_{n}),
\end{eqnarray*}
so that $\max_{i,j}\{|e_{i}(\ell^{\prime}_{i}e)R_{j}|+|R_{i}e_{j}(\ell^{\prime}_{j}e)|+|R_{i}R_{j}|\}=o_{p}(1)$. By \eqref{eq:V},
\begin{align*}
\sum^{n}_{i=1}\sum_{j\neq i}c_{ij}g_{0i}g_{0j}-V & =\sum^{n}_{i=1}\sum_{j\neq i}\{c_{ij}e_{i}(\ell^{\prime}_{i}e)e_{j}(\ell^{\prime}_{j}e)-2a^{2}_{ij}\mathbb{E}[e^{2}_{i}]\mathbb{E}[e^{2}_{j}]\}\\
 & \quad+\sum^{n}_{i=1}\sum_{j\neq i}c_{ij}(e_{i}(\ell^{\prime}_{i}e)R_{j}+R_{i}e_{j}(\ell^{\prime}_{j}e)+R_{i}R_{j}).
\end{align*}
The first term is $o_{p}(V)$ by \eqref{eq:cond}. For the second, Assumption \ref{as:gen} gives $C_{n}=O(A_{n})$, so
\begin{eqnarray*}
\left|\sum^{n}_{i=1}\sum_{j\neq i}c_{ij}(e_{i}(\ell^{\prime}_{i}e)R_{j}+R_{i}e_{j}(\ell^{\prime}_{j}e)+R_{i}R_{j})\right| & \le & C_{n}\max_{i,j}\{|e_{i}(\ell^{\prime}_{i}e)R_{j}|+|R_{i}e_{j}(\ell^{\prime}_{j}e)|+|R_{i}R_{j}|\}\\
 & = & O(A_{n})o_{p}(1)=o_{p}(A_{n}),
\end{eqnarray*}
which is $o_{p}(V)$ because $V\ge2c^{2}A_{n}$. The conclusion follows.

\subsection{Proof of Lemma \ref{lem:bias}}

Under $H_{1}$ we have $\check{u}_{i}=\pi_{i}+e_{i}$ with $\pi_{i}=\Delta_{i}+\omega_{i}$, so the leading term of Lemma \ref{lem:exp} is $g_{0i}=\check{u}_{i}(\ell^{\prime}_{i}\check{u})=\check{A}_{i}$ and $\sum^{n}_{i=1}\sum_{j\neq i}c_{ij}g_{0i}g_{0j}=\check{V}_{\mathrm{cf}}$.

For the plug-in estimator, $\check{u}^{2}_{i}=\pi^{2}_{i}+2\pi_{i}e_{i}+e^{2}_{i}$ gives $\mathbb{E}[\check{u}^{2}_{i}]=\pi^{2}_{i}+\sigma^{2}_{i}$, and for $i\neq j$ the independence of $e_{i}$ and $e_{j}$ gives $\mathbb{E}[\check{u}^{2}_{i}\check{u}^{2}_{j}]=\mathbb{E}[\check{u}^{2}_{i}]\mathbb{E}[\check{u}^{2}_{j}]$. Subtracting $V=2\sum^{n}_{i=1}\sum_{j\neq i}a^{2}_{ij}\sigma^{2}_{i}\sigma^{2}_{j}$ yields \eqref{eq:biasplug}. 

For the cross-fit estimator, since $\ell^{\prime}_{i}\Delta=0$, we have $\ell^{\prime}_{i}\check{u}=\lambda_{i}+\ell^{\prime}_{i}e$ with $\lambda_{i}=\ell^{\prime}_{i}\omega$, so 
\[
\check{A}_{i}=\pi_{i}\lambda_{i}+\pi_{i}(\ell^{\prime}_{i}e)+\lambda_{i}e_{i}+e_{i}(\ell^{\prime}_{i}e).
\]
Multiplying $\check{A}_{i}$ by $\check{A}_{j}$ for $i\neq j$ gives sixteen products, which we take in order of their degree in $e$, counting $\ell^{\prime}_{i}e$ as degree one. The one of degree zero is $\pi_{i}\lambda_{i}\pi_{j}\lambda_{j}$. The four of degree one carry a single mean-zero factor and vanish. Of the six of degree two, $\lambda_{i}\lambda_{j}\mathbb{E}[e_{i}e_{j}]=0$, and the rest follow from
\[
\mathbb{E}[(\ell^{\prime}_{i}e)(\ell^{\prime}_{j}e)]=(L\Sigma L^{\prime})_{ij},\quad\mathbb{E}[(\ell^{\prime}_{i}e)e_{j}]=L_{ij}\sigma^{2}_{j},\quad\text{and }\mathbb{E}[e_{i}(\ell^{\prime}_{i}e)]=L_{ii}\sigma^{2}_{i}.
\]
A product of degree three survives only if its three error factors can share one index. In $\lambda_{i}e_{i}\cdot e_{j}(\ell^{\prime}_{j}e)$ two of them are already fixed at $i$ and $j$, so it vanishes, and the same holds for $e_{i}(\ell^{\prime}_{i}e)\cdot\lambda_{j}e_{j}$; in $\pi_{i}(\ell^{\prime}_{i}e)\cdot e_{j}(\ell^{\prime}_{j}e)$ only $e_{j}$ is fixed and the two free indices can both be set to $j$, giving $\mathbb{E}[(\ell^{\prime}_{i}e)e_{j}(\ell^{\prime}_{j}e)]=L_{ij}L_{jj}\mu_{3j}$, and symmetrically $\mathbb{E}[e_{i}(\ell^{\prime}_{i}e)(\ell^{\prime}_{j}e)]=L_{ji}L_{ii}\mu_{3i}$. The single product of degree four is
\[
\mathbb{E}[e_{i}(\ell^{\prime}_{i}e)e_{j}(\ell^{\prime}_{j}e)]=(L_{ii}L_{jj}+L_{ij}L_{ji})\sigma^{2}_{i}\sigma^{2}_{j},
\]
where only second moments enter because $i\neq j$. Collecting the nine surviving products, $\mathbb{E}[\check{A}_{i}\check{A}_{j}]$ is the sum of the eight terms in braces in \eqref{eq:biascf} and $(L_{ii}L_{jj}+L_{ij}L_{ji})\sigma^{2}_{i}\sigma^{2}_{j}$. Multiplying by $c_{ij}$ and summing over $i\neq j$ gives $\mathbb{E}[\check{V}_{\mathrm{cf}}]$, in which the last term contributes exactly $V$ by the definition of $c_{ij}$; subtracting $V$ leaves \eqref{eq:biascf}.

\subsection{Proof of Theorem \ref{thm:alt}}

Decompose 
\[
\hat{V}_{\mathrm{cf}}-V=\bigl(\hat{V}_{\mathrm{cf}}-\check{V}_{\mathrm{cf}}\bigr)+\bigl(\check{V}_{\mathrm{cf}}-\mathbb{E}[\check{V}_{\mathrm{cf}}]\bigr)+B^{\mathrm{cf}}_{n}.
\]
The first term is $o_{p}(V)$ by the third condition of \eqref{eq:condalt}, the second is $o_{p}(V)$ by the first, and the third is $o(V)$ by the last. Hence $\hat{V}_{\mathrm{cf}}/V\overset{p}{\to}1$.

Similarly 
\[
\hat{V}-(V+B_{n})=\bigl(\hat{V}-\check{V}\bigr)+\bigl(\check{V}-\mathbb{E}[\check{V}]\bigr),
\]
both terms being $o_{p}(V+B_{n})$ by the fourth and second conditions of \eqref{eq:condalt}. Hence $\hat{V}/(V+B_{n})\overset{p}{\to}1$.

For the last assertion, every summand of \eqref{eq:biasplug} is nonnegative, so discarding the quartic term and using $\sigma^{2}_{i}\ge c$ together with the symmetry of the weights,
\[
B_{n}\ge2\sum^{n}_{i=1}\sum_{j\neq i}a^{2}_{ij}\bigl(\pi^{2}_{i}\sigma^{2}_{j}+\sigma^{2}_{i}\pi^{2}_{j}\bigr)\ge4c\sum^{n}_{i=1}\sum_{j\neq i}a^{2}_{ij}\pi^{2}_{i}.
\]
Since $V\le2C^{2}A_{n}$, 
\[
\frac{B_{n}}{V}\ge\frac{2c}{C^{2}}\cdot\frac{1}{A_{n}}\sum^{n}_{i=1}\sum_{j\neq i}a^{2}_{ij}\pi^{2}_{i},
\]
which is bounded away from zero by assumption. The cross-fit bias \eqref{eq:biascf} admits no matching lower bound: its terms carry $\pi_{i}\lambda_{i}$ and $\mu_{3i}$, whose signs are unrestricted.

\subsection{Proof of Corollary \ref{cor:pow}}

By Theorem \ref{thm:alt}, $\hat{V}_{\mathrm{cf}}/V\overset{p}{\to}1$ and $\hat{V}/(V+B_{n})\overset{p}{\to}1$. Writing 
\[
\frac{\hat{V}}{\hat{V}_{\mathrm{cf}}}=\frac{\hat{V}}{V+B_{n}}\cdot\frac{V+B_{n}}{V}\cdot\frac{V}{\hat{V}_{\mathrm{cf}}}
\]
and rearranging gives $\hat{V}/\hat{V}_{\mathrm{cf}}=(1+B_{n}/V)(1+o_{p}(1))$. The two statistics share the numerator $Q_{n}$, so $T^{\mathrm{cf}}_{n}/T_{n}=(\hat{V}/\hat{V}_{\mathrm{cf}})^{1/2}$, and taking square roots gives the first assertion. If $\liminf_{n}B_{n}/V\ge\gamma>0$ then $P(\hat{V}\ge\hat{V}_{\mathrm{cf}})\to1$, and on that event $|T^{\mathrm{cf}}_{n}|=|Q_{n}|/\sqrt{\hat{V}_{\mathrm{cf}}}\ge|Q_{n}|/\sqrt{\hat{V}}=|T_{n}|$, whatever the sign of $Q_{n}$. A two-sided test rejecting when $|T_{n}|$ exceeds a common critical value therefore rejects no less often when $\hat{V}_{\mathrm{cf}}$ is used.

\subsection{Proof of Corollary \ref{cor:sat}}

By Theorem \ref{thm:alt}, $\hat{V}=(V+B_{n})\{1+o_{p}(1)\}$, and by assumption $V+B_{n}=S_{n}\{1+o(1)\}$, so $\hat{V}=S_{n}\{1+o_{p}(1)\}$. With $Q_{n}=N_{n}+o_{p}(S^{1/2}_{n})$,
\[
T_{n}=\frac{Q_{n}}{\hat{V}^{1/2}}=\frac{N_{n}+o_{p}(S^{1/2}_{n})}{S^{1/2}_{n}\{1+o_{p}(1)\}}=\bigl\{\tau_{n}+o_{p}(1)\bigr\}\bigl\{1+o_{p}(1)\bigr\},
\]
which is the first assertion. For the second, Corollary \ref{cor:pow} gives $T^{\mathrm{cf}}_{n}/T_{n}=(1+B_{n}/V)^{1/2}\{1+o_{p}(1)\}$. Every term of \eqref{eq:biasplug} is nonnegative, so $B_{n}\ge S_{n}$, and $V+B_{n}=S_{n}\{1+o(1)\}$ forces $V/S_{n}\to0$, hence $B_{n}/V\to\infty$ and the ratio diverges.

Finally, suppose $\limsup_{n}|\tau_{n}|<c$. The first assertion gives $|T_{n}|\le\bigl\{|\tau_{n}|+o_{p}(1)\bigr\}\bigl\{1+o_{p}(1)\bigr\}$, so $P(|T_{n}|>c)\to0$. If in addition $\liminf_{n}|\tau_{n}|>0$, then $|T_{n}|$ is bounded away from zero with probability approaching one, and $|T^{\mathrm{cf}}_{n}|=|T_{n}|\,(T^{\mathrm{cf}}_{n}/T_{n})\overset{p}{\to}\infty$, so $P(|T^{\mathrm{cf}}_{n}|>c)\to1$.

\subsection{Sufficient conditions for Theorem \ref{thm:alt}} \label{app:lln}

The first two conditions of \eqref{eq:condalt} and the requirement $B^{\mathrm{cf}}_{n}=o(V)$ are statements about weighted sums over pairs of observations. This appendix reduces them to a few aggregate quantities that each application can bound directly, which keeps the verifications below short. Recall $\pi_{i}=\Delta_{i}+\omega_{i}$, $\lambda_{i}=\ell^{\prime}_{i}\omega$, and write
\[
\xi_{n}=\max_{i}\max(|\pi_{i}|,|\lambda_{i}|),\quad\mathcal{C}_{2}=\sum^{n}_{i=1}\sum_{j\neq i}c^{2}_{ij},\quad\bar{c}_{j}=\sum_{i\neq j}|c_{ij}|,
\]
\[
L^{2}_{F}=\sum^{n}_{i=1}\sum_{j\neq i}L^{2}_{ij},\qquad S^{2}_{F}=\sum^{n}_{i=1}\sum_{j\neq i}(L\Sigma L^{\prime})^{2}_{ij}.
\]
Let $\tilde{C}$ be the matrix with off-diagonal entries $c_{ij}$ and zero diagonal, and $D_{v}=\operatorname{diag}(v)$. Note that $\sum_{j}\bar{c}_{j}=C_{n}$ using the notation of Section \ref{sec:gen}.
\begin{lem} \label{lem:lln}
Suppose Assumption \ref{as:gen} holds, that $\ell^{\prime}_{i}\Delta=0$, $\|L\|\le1$, $\max_{i}\mathbb{E}[e^{6}_{i}]\le C$, and that \eqref{eq:cond} holds. If $\xi_{n}=O(1)$ and
\begin{equation}
\mathcal{C}_{2}=o(A^{2}_{n}),\qquad\max_{j}\bar{c}_{j}=o(A_{n}),\qquad\mathcal{C}^{1/2}_{2}(L_{F}+S_{F})=o(A_{n}),\label{eq:condagg}
\end{equation}
then the first two conditions of \eqref{eq:condalt} hold and $B^{\mathrm{cf}}_{n}=o(V)$.
\end{lem}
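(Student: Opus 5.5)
The plan is to use $V\asymp A_{n}$ throughout, since $2c^{2}A_{n}\le V\le2C^{2}A_{n}$ by Assumption \ref{as:gen}. Each of the three claims then reduces to showing that some weighted pair sum is $o(A_{n})$, or, for the stochastic parts, that its variance is $o(A^{2}_{n})$ so that Chebyshev applies.

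Two preliminary facts serve all three claims.
\begin{itemize}
\item Since $\|L\|\le1$, every entry satisfies $|L_{ij}|\le1$. Hence $L_{ii}L_{jj}+L_{ij}L_{ji}\le2$, so $2a^{2}_{ij}\le 2|c_{ij}|$, that is, $a^{2}_{ij}\le|c_{ij}|$.
\item Consequently $\sum_{i\neq j}a^{4}_{ij}\le\mathcal{C}_{2}$ and $\max_{j}\sum_{i\neq j}a^{2}_{ij}\le\max_{j}\bar{c}_{j}$. This lets \eqref{eq:condagg} control the plug-in sums as well as the cross-fit ones.
\end{itemize}
Although only $\xi_{n}=O(1)$ is assumed, Assumption \ref{as:gen} also gives $\max_{i}|\lambda_{i}|=\max_{i}|\ell^{\prime}_{i}\omega|=o_{p}(b^{-3}_{n})=o(1)$. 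Every term carrying $\lambda$ will be handled with this.

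\textbf{Bias $B^{\mathrm{cf}}_{n}=o(V)$.} I take the eight terms of \eqref{eq:biascf} from Lemma \ref{lem:bias} in three groups.
\begin{itemize}
\item The first three terms and the two terms $\pi_{i}\lambda_{j}L_{ij}\sigma^{2}_{j}$, $\lambda_{i}\pi_{j}L_{ji}\sigma^{2}_{i}$ each carry a factor $\lambda$. Bounding them by $C_{n}\,\xi_{n}\max_{i}|\lambda_{i}|\,O(1)=o(A_{n})$ uses $C_{n}\le(2/c_{L})A_{n}$ and $|L_{jj}|\le1$.
\item The term $\sum c_{ij}\pi_{i}\pi_{j}(L\Sigma L^{\prime})_{ij}$ is bounded by Cauchy--Schwarz over pairs: it is at most $\xi^{2}_{n}\mathcal{C}^{1/2}_{2}S_{F}$.
\item The two third-moment terms are bounded by $\xi_{n}C\,\mathcal{C}^{1/2}_{2}L_{F}$, using $|L_{jj}|\le1$ and $|\mu_{3j}|\le C$.
\end{itemize}
The last two bounds are $o(A_{n})$ by the third condition of \eqref{eq:condagg}.

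\textbf{First condition of \eqref{eq:condalt} (cross-fit LLN).} I expand
\[
\check{A}_{i}=\pi_{i}\lambda_{i}+\pi_{i}(\ell^{\prime}_{i}e)+\lambda_{i}e_{i}+e_{i}(\ell^{\prime}_{i}e),
\]
so that the centered sum $\sum c_{ij}\{\check{A}_{i}\check{A}_{j}-\mathbb{E}\check{A}_{i}\check{A}_{j}\}$ splits into products of types (linear)$\times$(linear), (linear)$\times$(quadratic), and (quadratic)$\times$(quadratic).
\begin{itemize}
\item The pure quartic piece is exactly \eqref{eq:cond} after subtracting its mean, which that condition already controls.
\item Every piece carrying $\lambda$ is handled with $\max|\lambda_{i}|=o(1)$. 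A crude bound $C_{n}\cdot o(1)\cdot O_{p}(\cdot)$ works for pieces without a large random factor. For the others I compute the second moment as below.
\item The piece $\sum c_{ij}\pi_{i}\pi_{j}\{(\ell^{\prime}_{i}e)(\ell^{\prime}_{j}e)-\mathbb{E}\}$ is a centered quadratic form $e^{\prime}L^{\prime}D_{\pi}\tilde{C}D_{\pi}Le$. Its variance is $O(\|L^{\prime}D_{\pi}\tilde{C}D_{\pi}L\|^{2}_{F})=O(\xi^{4}_{n}\mathcal{C}_{2})=o(A^{2}_{n})$, using $\|L\|\le1$ and fourth moments.
\item The cubic piece $\sum c_{ij}\pi_{i}(\ell^{\prime}_{i}e)e_{j}(\ell^{\prime}_{j}e)$ and its mirror are centered, and their second moments involve sixth moments of $e$. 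Here the assumption $\max\mathbb{E}e^{6}_{i}\le C$ enters.
\end{itemize}
To bound the cubic piece, I write it as $\sum_{j}e_{j}(\ell^{\prime}_{j}e)\,\eta_{j}$ with $\eta_{j}=\sum_{i\neq j}c_{ij}\pi_{i}(\ell^{\prime}_{i}e)=(\tilde{C}D_{\pi}Le)_{j}$. Expanding its square, I would separate the diagonal index coincidences, bounded by $\max_{j}\bar{c}_{j}\cdot C_{n}\,\xi^{2}_{n}$, from the remaining pairings, bounded by $\mathcal{C}_{2}$ times operator norms of $L$. Both are $o(A^{2}_{n})$ by \eqref{eq:condagg}. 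This bookkeeping of index coincidences in a sixth-order moment is where I expect the main obstacle. I would first try to write the piece as a sum of a degenerate U-statistic part and a linear part via a Hoeffding-type decomposition in $e$, and bound each part's variance separately.

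\textbf{Second condition of \eqref{eq:condalt} (plug-in LLN).} Set $f_{i}=\check{u}^{2}_{i}$. These are independent with $\mathbb{E}f^{2}_{i}=O(1)$, since $\xi_{n}=O(1)$ and fourth moments are bounded. The Hoeffding decomposition of $\sum a^{2}_{ij}\{f_{i}f_{j}-\mathbb{E}f_{i}\mathbb{E}f_{j}\}$ has two parts.
\begin{itemize}
\item A linear part $2\sum_{i}(f_{i}-\mathbb{E}f_{i})\sum_{j\neq i}a^{2}_{ij}\mathbb{E}f_{j}$. Its variance is $O\bigl(\sum_{i}(\sum_{j}a^{2}_{ij})^{2}\bigr)\le O(\max_{j}\bar{c}_{j}\,C_{n})=o(A^{2}_{n})$.
\item A degenerate part with variance $O\bigl(\sum a^{4}_{ij}\bigr)\le O(\mathcal{C}_{2})=o(A^{2}_{n})$.
\end{itemize}
Both are therefore $o_{p}(A_{n})$, which is $o_{p}(V)$ and hence $o_{p}(V+B_{n})$ since $B_{n}\ge0$.
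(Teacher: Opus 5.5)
Most of your proposal is correct, and several steps take a simpler route than the paper's.

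\begin{itemize}
\item You center the whole product $\check A_i\check A_j$ rather than each factor. This makes the pure quartic piece literally the left side of \eqref{eq:cond}. The paper works with $\nu_i=e_i(\ell_i'e)-L_{ii}\sigma_i^2$, so it must split off a separate linear term $T_1$ and a correction $2\sum_j\bar w_j\nu_j$.
\item Every piece carrying a $\lambda$ has a random factor with bounded first absolute moment. A single Markov bound of order $C_n\max_i|\lambda_i|=o(A_n)$ therefore disposes of all of them, and no second-moment computation is needed there. The paper instead bounds these pieces through variances, using only $\xi_n=O(1)$.
\item Your Hoeffding decomposition of the plug-in sum, via $a_{ij}^2\le c_{ij}$, is more transparent than the paper's Step 4.
\item Your bias bounds are correct. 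For the fifth and sixth terms of \eqref{eq:biascf} you use $\max_i|\lambda_i|=o(1)$ where the paper uses $\mathcal C_2^{1/2}L_F$.
\end{itemize}

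The gap is the cubic piece $\sum_{i\ne j}c_{ij}\pi_i(\ell_i'e)e_j(\ell_j'e)$ and its mirror. You state the outcome of the index bookkeeping but do not carry it out. This is the step that needs the sixth moments, and the paper gives it a separate lemma (Lemma \ref{lem:cubic}). Until it is done, the first condition of \eqref{eq:condalt} is asserted rather than proved.

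To close it, write the piece as $\sum_{k,j,m}\gamma_{kjm}e_ke_je_m$ with $\gamma_{kjm}=U_{jk}L_{jm}$ and $U=\tilde C D_\pi L$, and split by index coincidence.
\begin{itemize}
\item The distinct-index part, and the parts containing a centered square $e_k^2-\sigma_k^2$, have variance at most $C\|\gamma\|_F^2\le C\|U\|_F^2\le C\xi_n^2\mathcal C_2$.
\item The all-equal part $\sum_k U_{kk}L_{kk}(e_k^3-\mu_{3k})$ has variance at most $C\sum_k U_{kk}^2$. Cauchy--Schwarz on $U_{kk}=\sum_{i\ne k}c_{ki}\pi_iL_{ik}$ gives $\sum_k U_{kk}^2\le\xi_n^2\mathcal C_2$.
\item The coincidence $k=j$ produces a linear form whose coefficient vector has squared norm at most $C\sum_k U_{kk}^2$.
\item The coincidence $k=m$ produces a linear form with squared coefficient norm at most $C\xi_n^2\mathcal C_2$. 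This uses Cauchy--Schwarz with $(L\Sigma L'L\Sigma L')_{jj}\le C^2$.
\item The coincidence $j=m$ produces the linear form $e'L'D_\pi\tilde C w$ with $w_j=L_{jj}\sigma_j^2$. Its variance is at most $C\xi_n^2\sum_i\bar c_i^2\le C\xi_n^2(\max_j\bar c_j)\,C_n$.
\end{itemize}
So your guessed split is right. The condition on $\max_j\bar c_j$ enters only through that last linear form, which the paper places in $T_1$; everything else is controlled by $\xi_n^2\mathcal C_2$.
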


\begin{proof}
The proof has five steps. Steps 1--3 give the first condition of \eqref{eq:condalt}, Step 4 the second, and Step 5 the bound $B^{\mathrm{cf}}_{n}=o(V)$. We use repeatedly that $\operatorname{Var}(e^{\prime}Be)\le C\|B\|^{2}_{F}$ for symmetric $B$ (for general $B$, apply it to $(B+B^{\prime})/2$, whose Frobenius norm is no larger) and independent mean-zero errors with four bounded moments, and that $\|B_{1}B_{2}\|_{F}\le\min(\|B_{1}\|\,\|B_{2}\|_{F},\|B_{1}\|_{F}\|B_{2}\|)$. Since $\|L\|\le1$ we have $(LL^{\prime})_{jj}\le1$ for every $j$.

\subsubsection*{Step 1. Decomposition into $T_{1}$ and $T_{2}$}

Put $\nu_{i}=e_{i}(\ell^{\prime}_{i}e)-L_{ii}\sigma^{2}_{i}$, $\alpha_{i}=\mathbb{E}[\check{A}_{i}]=\pi_{i}\lambda_{i}+L_{ii}\sigma^{2}_{i}$ and $\eta_{i}=\check{A}_{i}-\alpha_{i}=\pi_{i}(\ell^{\prime}_{i}e)+\lambda_{i}e_{i}+\nu_{i}$, and let $\bar{\alpha}_{j}=\sum_{i\neq j}c_{ij}\alpha_{i}$. Because $c_{ij}=c_{ji}$,
\[
\sum^{n}_{i=1}\sum_{j\neq i}c_{ij}\bigl\{\check{A}_{i}\check{A}_{j}-\mathbb{E}[\check{A}_{i}\check{A}_{j}]\bigr\}=2\sum_{j}\bar{\alpha}_{j}\eta_{j}+\sum^{n}_{i=1}\sum_{j\neq i}c_{ij}\bigl\{\eta_{i}\eta_{j}-\mathbb{E}[\eta_{i}\eta_{j}]\bigr\}=:T_{1}+T_{2}.
\]

\subsubsection*{Step 2. The term $T_{1}$}

We show that $T_{1}=o_{p}(V)$. Since $V\ge2c^{2}A_{n}$, it is sufficient by Chebyshev's inequality to show $\operatorname{Var}(T_{1})=o(A^{2}_{n})$, which we now derive from the hypothesis $\max_{j}\bar{c}_{j}=o(A_{n})$. Substituting $\eta_{j}=\pi_{j}(\ell^{\prime}_{j}e)+\lambda_{j}e_{j}+\nu_{j}$ into $T_{1}=2\sum_{j}\bar{\alpha}_{j}\eta_{j}$ yields
\[
\tfrac{1}{2}T_{1}=\underbrace{\sum_{j}\bar{\alpha}_{j}\pi_{j}(\ell^{\prime}_{j}e)}_{(L^{\prime}g)^{\prime}e}+\underbrace{\sum_{j}\bar{\alpha}_{j}\lambda_{j}e_{j}}_{(D_{\bar{\alpha}}\lambda)^{\prime}e}+\underbrace{\sum_{j}\bar{\alpha}_{j}\nu_{j}}_{e^{\prime}D_{\bar{\alpha}}Le-\mathbb{E}[e^{\prime}D_{\bar{\alpha}}Le]},
\]
where $g_{j}=\bar{\alpha}_{j}\pi_{j}$ and $D_{\bar{\alpha}}=\operatorname{diag}(\bar{\alpha}_{1},\ldots,\bar{\alpha}_{n})$. The first identity holds because $\ell^{\prime}_{j}e$ is the $j$th entry of $Le$, and the third because the constants inside the $\nu_{j}$ sum to $\sum_{j}\bar{\alpha}_{j}L_{jj}\sigma^{2}_{j}=\mathbb{E}[e^{\prime}D_{\bar{\alpha}}Le]$. Collecting the first two gives
\[
\tfrac{1}{2}T_{1}=h^{\prime}e+\{e^{\prime}Ne-\mathbb{E}[e^{\prime}Ne]\},\quad h\coloneqq L^{\prime}g+D_{\bar{\alpha}}\lambda,\quad N\coloneqq D_{\bar{\alpha}}L.
\]
The linear form has variance at most $C\|h\|^{2}$. Since $e^{\prime}Ne=e^{\prime}N_{s}e$ with $N_{s}=(N+N^{\prime})/2$ symmetric, the quadratic form has variance at most $C\|N_{s}\|^{2}_{F}$. Since $\operatorname{Var}(X+Y)\le2\operatorname{Var}X+2\operatorname{Var}Y$, it follows that $\operatorname{Var}(T_{1})\le C\{\|h\|^{2}+\|N_{s}\|^{2}_{F}\}$. Now 
\[
\|h\|^{2}\le2\|g\|^{2}+2\sum_{j}\bar{\alpha}^{2}_{j}\lambda^{2}_{j}\le C\sum_{j}\bar{\alpha}^{2}_{j}(1+\pi^{2}_{j}+\lambda^{2}_{j}),
\]
where the first inequality follows from $\|L^{\prime}g\|\le\|L\|\|g\|$ and $\|L\|\le1$, and the second from $\|g\|^{2}=\sum_{j}\bar{\alpha}^{2}_{j}\pi^{2}_{j}$. Moreover 
\[
\|N_{s}\|^{2}_{F}\le\|N\|^{2}_{F}=\sum_{j}\bar{\alpha}^{2}_{j}(LL^{\prime})_{jj}\le\sum_{j}\bar{\alpha}^{2}_{j},
\]
where the first inequality is the triangle inequality for $\|\cdot\|_{F}$ together with $\|N^{\prime}\|_{F}=\|N\|_{F}$, and the last is $(LL^{\prime})_{jj}\le1$. Since $\xi_{n}=O(1)$ and $\sigma^{2}_{i}\le C$ and $|L_{ii}|\le\|L\|\le1$ we have $|\alpha_{i}|\le C$, hence $|\bar{\alpha}_{j}|\le C\bar{c}_{j}$ and
\[
\sum_{j}\bar{\alpha}^{2}_{j}(1+\pi^{2}_{j}+\lambda^{2}_{j})\le C\sum_{j}\bar{c}^{2}_{j}\le C\Bigl(\max_{j}\bar{c}_{j}\Bigr)C_{n}\le C\Bigl(\max_{j}\bar{c}_{j}\Bigr)A_{n}=o(A^{2}_{n}),
\]
where the second inequality follows from $\sum_{j}\bar{c}_{j}=C_{n}$, and the last follows from $C_{n}\le(2/c_{L})A_{n}$; the same bound covers $\|N_{s}\|^{2}_{F}$, so $\operatorname{Var}(T_{1})=o(A^{2}_{n})$.

\subsubsection*{Step 3. The term $T_{2}$}

We show that $T_{2}=o_{p}(V)$. Expanding $\eta_{i}=\pi_{i}(\ell^{\prime}_{i}e)+\lambda_{i}e_{i}+\nu_{i}$ gives
\begin{align*}
\eta_{i}\eta_{j} & =\pi_{i}\pi_{j}(\ell^{\prime}_{i}e)(\ell^{\prime}_{j}e)+\lambda_{i}\lambda_{j}e_{i}e_{j}+\pi_{i}\lambda_{j}(\ell^{\prime}_{i}e)e_{j}+\lambda_{i}\pi_{j}e_{i}(\ell^{\prime}_{j}e)\\
 & \quad+\pi_{i}(\ell^{\prime}_{i}e)\nu_{j}+\pi_{j}(\ell^{\prime}_{j}e)\nu_{i}+\lambda_{i}e_{i}\nu_{j}+\lambda_{j}e_{j}\nu_{i}+\nu_{i}\nu_{j},
\end{align*}
which splits $T_{2}$ into nine groups, four quadratic in $e$, four cubic and one quartic. The quartic group is covered by the hypothesis \eqref{eq:cond}; each of the other eight is shown to have variance $o(A^{2}_{n})$, which with $V\ge2c^{2}A_{n}$ and Chebyshev's inequality makes it $o_{p}(V)$.

The quartic group $\sum^{n}_{i=1}\sum_{j\neq i}c_{ij}\{\nu_{i}\nu_{j}-\mathbb{E}[\nu_{i}\nu_{j}]\}$ involves neither $\pi$ nor $\lambda$. Substituting $e_{i}(\ell^{\prime}_{i}e)=\nu_{i}+L_{ii}\sigma^{2}_{i}$ and using $c_{ij}=c_{ji}$,
\[
\sum^{n}_{i=1}\sum_{j\neq i}c_{ij}e_{i}(\ell^{\prime}_{i}e)e_{j}(\ell^{\prime}_{j}e)=\sum^{n}_{i=1}\sum_{j\neq i}c_{ij}\{\nu_{i}\nu_{j}-\mathbb{E}[\nu_{i}\nu_{j}]\}+2\sum_{j}\bar{w}_{j}\nu_{j}+\sum^{n}_{i=1}\sum_{j\neq i}c_{ij}\bigl(L_{ii}L_{jj}+L_{ij}L_{ji}\bigr)\sigma^{2}_{i}\sigma^{2}_{j}.
\]
Here $\bar{w}_{j}=\sum_{i\neq j}c_{ij}L_{ii}\sigma^{2}_{i}$, and the centering uses $\mathbb{E}[\nu_{i}\nu_{j}]=L_{ij}L_{ji}\sigma^{2}_{i}\sigma^{2}_{j}$ for $i\neq j$. By the definition of $c_{ij}$ the third sum equals $2\sum^{n}_{i=1}\sum_{j\neq i}a^{2}_{ij}\sigma^{2}_{i}\sigma^{2}_{j}=V$, so subtracting $V$ from both sides gives
\[
\sum^{n}_{i=1}\sum_{j\neq i}\bigl\{ c_{ij}e_{i}(\ell^{\prime}_{i}e)e_{j}(\ell^{\prime}_{j}e)-2a^{2}_{ij}\sigma^{2}_{i}\sigma^{2}_{j}\bigr\}=\sum^{n}_{i=1}\sum_{j\neq i}c_{ij}\{\nu_{i}\nu_{j}-\mathbb{E}[\nu_{i}\nu_{j}]\}+2\sum_{j}\bar{w}_{j}\nu_{j},
\]
whose left-hand side is that of \eqref{eq:cond}. As in the treatment of $T_{1}$, $\sum_{j}\bar{w}_{j}\nu_{j}=e^{\prime}D_{\bar{w}}Le-\mathbb{E}[e^{\prime}D_{\bar{w}}Le]$, and $|L_{ii}\sigma^{2}_{i}|\le C$ gives $|\bar{w}_{j}|\le C\bar{c}_{j}$, so its variance is at most $C\sum_{j}\bar{w}^{2}_{j}\le C\bigl(\max_{j}\bar{c}_{j}\bigr)A_{n}=o(A^{2}_{n})$. The quartic group is therefore $o_{p}(V)$ by \eqref{eq:cond}.

Four groups are quadratic in $e$. The two carrying one $\pi$ and one $\lambda$ are equal by the symmetry of $c$, so the four give three centered quadratic forms,
\begin{align*}
\sum^{n}_{i=1}\sum_{j\neq i}c_{ij}\pi_{i}\pi_{j}\{(\ell^{\prime}_{i}e)(\ell^{\prime}_{j}e)-\mathbb{E}[(\ell^{\prime}_{i}e)(\ell^{\prime}_{j}e)]\} & =e^{\prime}B_{1}e-\mathbb{E}[e^{\prime}B_{1}e], & B_{1} & \coloneqq L^{\prime}D_{\pi}\tilde{C}D_{\pi}L,\\
\sum^{n}_{i=1}\sum_{j\neq i}c_{ij}\lambda_{i}\lambda_{j}\{e_{i}e_{j}-\mathbb{E}[e_{i}e_{j}]\} & =e^{\prime}B_{2}e-\mathbb{E}[e^{\prime}B_{2}e], & B_{2} & \coloneqq D_{\lambda}\tilde{C}D_{\lambda},\\
2\sum^{n}_{i=1}\sum_{j\neq i}c_{ij}\pi_{i}\lambda_{j}\{(\ell^{\prime}_{i}e)e_{j}-\mathbb{E}[(\ell^{\prime}_{i}e)e_{j}]\} & =2\bigl\{ e^{\prime}B_{3}e-\mathbb{E}[e^{\prime}B_{3}e]\bigr\}, & B_{3} & \coloneqq L^{\prime}D_{\pi}\tilde{C}D_{\lambda}.
\end{align*}
By $\|L\|\le1$, their Frobenius norms are at most $\|D_{\pi}\tilde{C}D_{\pi}\|_{F}$, $\|D_{\lambda}\tilde{C}D_{\lambda}\|_{F}$ and $\|D_{\pi}\tilde{C}D_{\lambda}\|_{F}$. Each of these three matrices has zero diagonal and $(i,j)$ entry bounded in modulus by $\xi^{2}_{n}|c_{ij}|$, so each norm is at most $\xi^{2}_{n}\mathcal{C}^{1/2}_{2}$ and each group has variance $O(\xi^{4}_{n}\mathcal{C}_{2})=o(A^{2}_{n})$.

The remaining four groups are cubic in $e$. They involve the products $\pi_{i}(\ell^{\prime}_{i}e)\nu_{j}$, $\pi_{j}(\ell^{\prime}_{j}e)\nu_{i}$, $\lambda_{i}e_{i}\nu_{j}$ and $\lambda_{j}e_{j}\nu_{i}$. Relabelling $i$ and $j$ and using $c_{ij}=c_{ji}$ turns the second group into the first and the fourth into the third, so only two distinct sums arise,
\[
\sum^{n}_{i=1}\sum_{j\neq i}c_{ij}\pi_{i}(\ell^{\prime}_{i}e)\nu_{j}\quad\text{and}\quad\sum^{n}_{i=1}\sum_{j\neq i}c_{ij}\lambda_{i}e_{i}\nu_{j},
\]
each appearing twice. By Lemma \ref{lem:cubic} below, the variance of each sum is at most $C\xi^{2}_{n}\mathcal{C}_{2}$: take $G=L$ and $v=\pi$ for the first, $G=I$ and $v=\lambda$ for the second. Since $\xi_{n}=O(1)$ and $\mathcal{C}_{2}=o(A^{2}_{n})$, both are $o(A^{2}_{n})$. In the second, $U=\tilde{C}D_{\lambda}$ has zero diagonal, so the pieces bounded by $\sum_{k}U^{2}_{kk}$ vanish. This exhausts the nine terms in $T_{2}$, so $T_{2}=o_{p}(V)$, and with $T_{1}$ the first condition of \eqref{eq:condalt} follows.

\subsubsection*{Step 4. The plug-in estimator}

The second condition of \eqref{eq:condalt} is the same statement for the plug-in estimator. The argument just given yields it at $L=I$, $\lambda=\pi$ and weights $2a^{2}_{ij}$ in place of $c_{ij}$. Since $\|L\|\le1$ gives $L_{ii}L_{jj}+L_{ij}L_{ji}\le2$ and hence $c_{ij}\ge a^{2}_{ij}$, the plug-in analogues of $\mathcal{C}_{2}$ and of $\max_{j}\bar{c}_{j}$ are bounded up to constants by the quantities that \eqref{eq:condagg} controls. At $L=I$ the two Frobenius quantities vanish, $L_{F}=S_{F}=0$. The target is $V+B_{n}$, which is no smaller than $V$, so a bound of $o(A^{2}_{n})$ still suffices. The one group that does not transfer is the quartic one, since \eqref{eq:cond} is a hypothesis about the cross-fit weights. But at $L=I$ the variables $\nu_{i}=e^{2}_{i}-\sigma^{2}_{i}$ are independent across $i$, so
\[
\operatorname{Var}\Bigl(\sum^{n}_{i=1}\sum_{j\neq i}2a^{2}_{ij}\{\nu_{i}\nu_{j}-\mathbb{E}[\nu_{i}\nu_{j}]\}\Bigr)=2\sum^{n}_{i=1}\sum_{j\neq i}(2a^{2}_{ij})^{2}\mathbb{E}[\nu^{2}_{i}]\mathbb{E}[\nu^{2}_{j}]\le C\mathcal{C}_{2}=o(A^{2}_{n}),
\]
which needs only four moments. 

\subsubsection*{Step 5. The cross-fit bias}

It remains to show the third conclusion, $B^{\mathrm{cf}}_{n}=o(V)$. We bound the eight terms of \eqref{eq:biascf}. Assumption \ref{as:gen} gives $\max_{i}|\lambda_{i}|=o_{p}(1)$, so the first three terms are at most $C\xi_{n}C_{n}(\max_{i}|\lambda_{i}|)(1+\xi_{n}(\max_{i}|\lambda_{i}|))=o(A_{n})$. By the Cauchy--Schwarz inequality the fourth is at most 
\[
(\sum^{n}_{i=1}\sum_{j\neq i}c^{2}_{ij}\pi^{2}_{i}\pi^{2}_{j})^{1/2}S_{F}\le\xi^{2}_{n}\mathcal{C}^{1/2}_{2}S_{F},
\]
the fifth and sixth are at most $C\xi^{2}_{n}\mathcal{C}^{1/2}_{2}L_{F}$, and the last two are at most $C\xi_{n}\mathcal{C}^{1/2}_{2}L_{F}$, the third moments satisfying $|\mu_{3i}|\le C$. All are $o(A_{n})$ by \eqref{eq:condagg}, and $V\ge2c^{2}A_{n}$, so $B^{\mathrm{cf}}_{n}=o(V)$. 
\end{proof}

\begin{lem} \label{lem:cubic}
Suppose Assumption \ref{as:gen} holds, that $\|L\|\le1$ and $\max_{i}\mathbb{E}[e^{6}_{i}]\le C$, and put $\nu_{i}=e_{i}(\ell^{\prime}_{i}e)-L_{ii}\sigma^{2}_{i}$. Let $G$ be a matrix with rows $g^{\prime}_{i}$ and $\|G\|\le1$, and let $v$ be a vector with $\max_{i}|v_{i}|\le\xi_{n}$. Then
\[
\operatorname{Var}\Bigl(\sum^{n}_{i=1}\sum_{j\neq i}c_{ij}v_{i}(g^{\prime}_{i}e)\nu_{j}\Bigr)\le C\xi^{2}_{n}\mathcal{C}_{2}.
\]
\end{lem}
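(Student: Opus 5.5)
The plan is to write the statistic as a cubic polynomial in the independent errors and bound its variance by the squared Frobenius norm of its coefficient array. Put $U=\tilde{C}D_{v}$, so that $U_{ji}=c_{ij}v_{i}$ for $i\neq j$, and define the vector $w=UGe$, whose entries are $w_{j}=\sum_{i\neq j}c_{ij}v_{i}(g^{\prime}_{i}e)$. The statistic becomes
\[
S=\sum_{j}w_{j}\nu_{j}=\sum_{j}\sum_{k,l,m}U_{jk}G_{kl}L_{jm}\,e_{l}\,\bigl(e_{j}e_{m}-\mathbb{E}[e_{j}e_{m}]\bigr),
\]
using $\nu_{j}=e_{j}(\ell^{\prime}_{j}e)-L_{jj}\sigma^{2}_{j}$. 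I would then split the index configurations by coincidences among $l$, $j$ and $m$. This reduces $S$ to a fully off-diagonal cubic chaos in $(e_{l},e_{j},e_{m})$ with distinct indices, plus lower-order pieces in which two or three indices coincide. The lower-order pieces are linear forms in $e_{l}$ or in $e^{2}_{j}-\sigma^{2}_{j}$, or products such as $e_{j}(e^{2}_{j}-\sigma^{2}_{j})$.

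\textbf{Step 1 (off-diagonal chaos).} Variables $e_{a}e_{b}e_{c}$ with distinct indices are orthogonal across unordered triples. The variance is therefore bounded by $C$ times the sum of squared coefficients, which by Cauchy--Schwarz over the at most six permutations is at most $C\sum_{j,l,m}\bigl((UG)_{jl}L_{jm}\bigr)^{2}$. This sum equals $\sum_{j}\|(UG)_{j\cdot}\|^{2}(LL^{\prime})_{jj}\le\|UG\|^{2}_{F}\le\|U\|^{2}_{F}\|G\|^{2}$. Since $\|L\|\le1$ and $\|G\|\le1$, it is at most $\|U\|^{2}_{F}\le\xi^{2}_{n}\mathcal{C}_{2}$. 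Only second moments enter here.

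\textbf{Step 2 (coincident indices).} When $l=j\neq m$, the coefficient is $(UG)_{jj}L_{jm}$. When $l=m\neq j$, it is $(UG)_{jm}L_{jm}$ and yields a term $e_{j}(e^{2}_{m}-\sigma^{2}_{m})+\sigma^{2}_{m}e_{j}$. When $l=j=m$, it is $(UG)_{jj}L_{jj}$ times $e^{3}_{j}-\sigma^{2}_{j}e_{j}$. Terms of the form $e_{a}\,\phi(e_{b})$ with $a\neq b$ are again orthogonal across pairs after centering, so their variances are bounded, using fourth moments, by sums of squared coefficients. The linear pieces are bounded through $\|\cdot\|$ operator estimates; for example, $\sum_{m}\sigma^{2}_{m}(UG)_{jm}L_{jm}$ is a diagonal of $UG\Sigma L^{\prime}$, whose squared entries sum to at most $C\|UG\|^{2}_{F}$. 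The diagonal-only pieces have variance at most $C\sum_{j}(UG)^{2}_{jj}\le C\|UG\|^{2}_{F}$ by the sixth moment, which is exactly where $\max_{i}\mathbb{E}[e^{6}_{i}]\le C$ is needed. Every such bound reduces to $C\|UG\|^{2}_{F}$ or $C\sum_{k}(UG)^{2}_{kk}$ times factors $\le1$ from $\|L\|\le1$.

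\textbf{Main obstacle.} The hard part is organising these diagonal cases carefully, so that no bound picks up a factor like $\|U\|^{2}$ times $n$. The remedy is to always collapse sums with the identity $\sum_{m}L^{2}_{jm}=(LL^{\prime})_{jj}\le1$, or to use $\|UG\Sigma L^{\prime}\|_{F}\le C\|UG\|_{F}$, before bounding. I would then conclude with $\operatorname{Var}(X_{1}+\dots+X_{k})\le k\sum\operatorname{Var}X_{r}$ and $\|UG\|^{2}_{F}\le\xi^{2}_{n}\mathcal{C}_{2}$. Finally, I would note that when $G=I$ the diagonal pieces involve $U_{kk}=0$ and vanish, as used in Step 3.
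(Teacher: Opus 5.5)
Your proposal is correct and is essentially the paper's proof: your $UG$ is the paper's $U=\tilde{C}D_{v}G$, your coefficients $(UG)_{jl}L_{jm}$ are its $\gamma_{ljm}$, and the split by index coincidence, with bounds by $\|UG\|_{F}^{2}$ and $\sum_{k}(UG)_{kk}^{2}$ and sixth moments used only for the triple diagonal, is the same; centering $\nu_{j}$ inside the sum just spares you the paper's $q$ and $\ell^{(3)}-q$ bookkeeping. When you write it out, list the fourth configuration $j=m\neq l$, which gives $(UG)_{jl}L_{jj}\,e_{l}(e^{2}_{j}-\sigma^{2}_{j})$ and is handled like your other $e_{a}\phi(e_{b})$ terms, and take ``orthogonal across pairs'' to mean unordered pairs, since $e_{a}(e^{2}_{b}-\sigma^{2}_{b})$ and $e_{b}(e^{2}_{a}-\sigma^{2}_{a})$ have covariance $\mu_{3a}\mu_{3b}$, which Cauchy--Schwarz absorbs.
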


\begin{proof}
The argument has three parts. First, the sum is written as a cubic form in $e$ plus a linear form. Second, the cubic form is split by index coincidence into seven pieces. Third, the variance of each piece is bounded by $\|\gamma\|^{2}_{F}$, by $\sum_{k}U^{2}_{kk}$ or directly by $\xi^{2}_{n}\mathcal{C}_{2}$, the first two being themselves at most $\xi^{2}_{n}\mathcal{C}_{2}$. 

First, substituting $g^{\prime}_{i}e=\sum_{k}G_{ik}e_{k}$ and then $\nu_{j}=e_{j}\sum_{m}L_{jm}e_{m}-L_{jj}\sigma^{2}_{j}$ gives 
\begin{align}
\sum^{n}_{i=1}\sum_{j\neq i}c_{ij}v_{i}(g^{\prime}_{i}e)\nu_{j} & =\sum_{j}\sum_{k}\Bigl(\sum_{i\neq j}c_{ij}v_{i}G_{ik}\Bigr)e_{k}\nu_{j}\nonumber \\
 & =\sum_{j}\sum_{k}U_{jk}e_{k}\Bigl(e_{j}\sum_{m}L_{jm}e_{m}-L_{jj}\sigma^{2}_{j}\Bigr)\nonumber \\
 & =\sum_{k,j,m}\gamma_{kjm}e_{k}e_{j}e_{m}-q^{\prime}e,\label{eq:cubicid}
\end{align}
where $U=\tilde{C}D_{v}G$. The second equality uses that its $(j,k)$ entry is $\sum_{i\neq j}c_{ij}v_{i}G_{ik}$, where we use the symmetry of $c$ and the zero diagonal of $\tilde{C}$. The third collects the cubic terms with $\gamma_{kjm}=L_{jm}U_{jk}$ and the linear ones with $q_{k}=\sum_{j}\gamma_{kjj}\sigma^{2}_{j}$; write $\|\gamma\|^{2}_{F}=\sum_{k,j,m}\gamma^{2}_{kjm}$.

Second, the cubic form $\sum_{k,j,m}\gamma_{kjm}e_{k}e_{j}e_{m}-\sum_{k,j,m}\mathbb{E}[\gamma_{kjm}e_{k}e_{j}e_{m}]$ can be decomposed into
\begin{equation}
P_{3}+Q_{1}+Q_{2}+Q_{3}+\sum_{k}\gamma_{kkk}(e^{3}_{k}-\mu_{3k})+(\ell^{(1)}+\ell^{(2)}+\ell^{(3)})^{\prime}e,\label{eq:cubicsplit}
\end{equation}
where $P_{3}$ collects the terms with $k$, $j$, $m$ distinct, the $Q_{r}$ are the three two-index-coincidence sums 
\begin{align*}
Q_{1} & =\sum^{n}_{k=1}\sum_{m\neq k}\gamma_{kkm}(e^{2}_{k}-\sigma^{2}_{k})e_{m},\\
Q_{2} & =\sum^{n}_{k=1}\sum_{j\neq k}\gamma_{kjk}(e^{2}_{k}-\sigma^{2}_{k})e_{j},\\
Q_{3} & =\sum^{n}_{k=1}\sum_{j\neq k}\gamma_{kjj}(e^{2}_{j}-\sigma^{2}_{j})e_{k},
\end{align*}
one for each of the three ways two of the indices of $\gamma_{kjm}$ can coincide. In each of them the error whose index is repeated appears squared and centered, and the remaining error appears linearly. The three linear coefficients are $\ell^{(1)}_{m}=\sum_{k\neq m}\gamma_{kkm}\sigma^{2}_{k}$, $\ell^{(2)}_{j}=\sum_{k\neq j}\gamma_{kjk}\sigma^{2}_{k}$ and $\ell^{(3)}_{k}=\sum_{j\neq k}\gamma_{kjj}\sigma^{2}_{j}$.

Combining (\ref{eq:cubicid}), (\ref{eq:cubicsplit}) and $\mathbb{E}[q^{\prime}e]=0$ gives
\[
\sum^{n}_{i=1}\sum_{j\neq i}c_{ij}v_{i}\{(g^{\prime}_{i}e)\nu_{j}-\mathbb{E}[(g^{\prime}_{i}e)\nu_{j}]\}=P_{3}+Q_{1}+Q_{2}+Q_{3}+\sum_{k}\gamma_{kkk}(e^{3}_{k}-\mu_{3k})+(\ell^{(1)}+\ell^{(2)}+\ell^{(3)})^{\prime}e-q^{\prime}e.
\]
Note that the two vectors $\ell^{(3)}_{k}=\sum_{j\neq k}\gamma_{kjj}\sigma^{2}_{j}$ and $q_{k}=\sum_{j}\gamma_{kjj}\sigma^{2}_{j}$ differ only in the summand $j=k$, so $\ell^{(3)}-q=-(\gamma_{kkk}\sigma^{2}_{k})_{k}$. Then the last display is 
\begin{equation}
P_{3}+Q_{1}+Q_{2}+Q_{3}+\sum_{k}\gamma_{kkk}(e^{3}_{k}-\mu_{3k})-\sum_{k}\gamma_{kkk}\sigma^{2}_{k}e_{k}+(\ell^{(1)}+\ell^{(2)})^{\prime}e.\label{eq:cubic_centered_collecion}
\end{equation}

Finally, each piece of (\ref{eq:cubic_centered_collecion}) has variance at most $C\xi^{2}_{n}\mathcal{C}_{2}$. The bounds below use two quantities, both at most $\xi^{2}_{n}\mathcal{C}_{2}$. The first is $\|\gamma\|^{2}_{F}$:
\[
\|\gamma\|^{2}_{F}=\sum_{j}(LL^{\prime})_{jj}\|U_{j\cdot}\|^{2}\le\|U\|^{2}_{F}\le\|\tilde{C}D_{v}\|^{2}_{F}\|G\|^{2}\le\sum^{n}_{i=1}\sum_{j\neq i}c^{2}_{ij}v^{2}_{i}\le\xi^{2}_{n}\mathcal{C}_{2},
\]
where the equality uses $\gamma_{kjm}=L_{jm}U_{jk}$ and $\sum_{m}L^{2}_{jm}=(LL^{\prime})_{jj}$, the first inequality $(LL^{\prime})_{jj}\le\|L\|^{2}\le1$, the second $U=\tilde{C}D_{v}G$ together with $\|B_{1}B_{2}\|_{F}\le\|B_{1}\|_{F}\|B_{2}\|$, the third $\|G\|\le1$, and the last $v^{2}_{i}\le\xi^{2}_{n}$. The second quantity is $\sum_{k}U^{2}_{kk}$, for which $U_{kk}=\sum_{i\neq k}c_{ki}v_{i}G_{ik}$ and $(G^{\prime}G)_{kk}\le\|G\|^{2}\le1$ give $\sum_{k}U^{2}_{kk}\le\xi^{2}_{n}\mathcal{C}_{2}$ by the Cauchy--Schwarz inequality.

Now take the pieces in turn. \\
$\operatorname{Var}(P_{3})$: Expanding it pairs two coefficients $\gamma_{kjm}$ and $\gamma_{k^{\prime}j^{\prime}m^{\prime}}$, and the expectation vanishes unless $(k^{\prime},j^{\prime},m^{\prime})$ is a permutation of $(k,j,m)$, since the errors are independent with mean zero and the three indices of a triple are distinct. Each of the six permutations contributes at most $\|\gamma\|^{2}_{F}$ by the Cauchy--Schwarz inequality, so $\operatorname{Var}(P_{3})\le C\|\gamma\|^{2}_{F}$.
\\
$\operatorname{Var}(Q_{r})$: The same bound holds for all three, where the paired coefficients carry the same unordered pair of indices. For example,
\[
\operatorname{Var}(Q_{1})=\sum^{n}_{k=1}\sum_{m\neq k}\gamma^{2}_{kkm}(\mathbb{E}[e^{4}_{k}]-\sigma^{4}_{k})\sigma^{2}_{m}+\sum^{n}_{k=1}\sum_{m\neq k}\gamma_{kkm}\gamma_{mmk}\mu_{3k}\mu_{3m},
\]
in which the second sum pairs $\gamma_{kkm}$ with $\gamma_{mmk}$: the multisets differ, but the index pair $\{k,m\}$ is the same. Since $|\mu_{3k}|\le(\mathbb{E}[e^{4}_{k}])^{3/4}\le C$ and $2|\gamma_{kkm}\gamma_{mmk}|\le\gamma^{2}_{kkm}+\gamma^{2}_{mmk}$, that sum is at most $C\|\gamma\|^{2}_{F}$; it vanishes altogether when the errors are symmetric. \\
$\operatorname{Var}(\sum_{k}\gamma_{kkk}(e^{3}_{k}-\mu_{3k}))$: The diagonal piece has variance 
\[
\sum_{k}\gamma^{2}_{kkk}\operatorname{Var}(e^{3}_{k})\le C\sum_{k}U^{2}_{kk},
\]
and this is the only place the sixth moments are needed. \\ $\operatorname{Var}(\sum_{k}\gamma_{kkk}\sigma^{2}_{k}e_{k})$ and $\operatorname{Var}((\ell^{(1)}+\ell^{(2)})^{\prime}e)$: A linear form $c^{\prime}e$ has variance at most $C\|c\|^{2}$, so it is enough to bound $\sum_{k}\gamma^{2}_{kkk}\sigma^{4}_{k}$, $\|\ell^{(1)}\|^{2}$ and $\|\ell^{(2)}\|^{2}$. The first is bounded by 
\[
\sum_{k}L^{2}_{kk}U^{2}_{kk}\sigma^{4}_{k}\le C\sum_{k}U^{2}_{kk},
\]
where the inequality follows from $|L_{kk}|\le\|L\|\le1$ and $\sigma^{2}_{k}\le C$. For $\ell^{(1)}$, put $d_{k}=U_{kk}$, so that $\ell^{(1)}_{m}=\sum_{k\neq m}L_{km}U_{kk}\sigma^{2}_{k}=(L^{\prime}\Sigma d)_{m}-L_{mm}U_{mm}\sigma^{2}_{m}$. Then
\[
\|\ell^{(1)}\|^{2}\le2\|L^{\prime}\Sigma d\|^{2}+2\sum_{m}L^{2}_{mm}U^{2}_{mm}\sigma^{4}_{m}\le C\sum_{k}U^{2}_{kk},
\]
where the first inequality is $\|a-b\|^{2}\le2\|a\|^{2}+2\|b\|^{2}$ and the second follows from $\|L\|\le1$, $|L_{mm}|\le\|L\|\le1$ and $\sigma^{2}_{k}\le C$. For $\ell^{(2)}$, the entries are $\ell^{(2)}_{j}=\sum_{i\neq j}c_{ij}v_{i}(L\Sigma G^{\prime})_{ji}-\gamma_{jjj}\sigma^{2}_{j}$. Splitting the two parts as before and applying the Cauchy--Schwarz inequality to the first, $\sum_{i}(L\Sigma G^{\prime})^{2}_{ji}=(L\Sigma G^{\prime}G\Sigma L^{\prime})_{jj}\le C^{2}$ gives $\|\ell^{(2)}\|^{2}\le C\xi^{2}_{n}\mathcal{C}_{2}$. The seven pieces are dependent, so $\operatorname{Var}(\sum^{7}_{r=1}X_{r})\le7\sum_{r}\operatorname{Var}(X_{r})$ bounds the variance of the sum by $C\xi^{2}_{n}\mathcal{C}_{2}$.
\end{proof}

\section{Assumptions and proof of Proposition \ref{prop:spec} } \label{app:spec}

Let $\theta_{0}(x)=\mathbb{E}[y\mid x]$. 
\begin{assumption}
\label{as:spec} The following conditions hold, in the notation of Section \ref{sub:spec}.
\begin{description}
\item [{(i)}] $\{(y_{i},x_{i})\}^{n}_{i=1}$ are i.i.d. and $\mathbb{E}[e_{i}\mid x_{i}]=0$ with $e_{i}=y_{i}-\theta_{0}(x_{i})$. Under $H_{0}$, $\theta_{0}(x)=\gamma^{\prime}_{0}x$ for some $\gamma_{0}$. The conditional variances satisfy $0<c\le\sigma^{2}_{i}:=\mathbb{E}[e^{2}_{i}|x_{i}]\le C<\infty$ uniformly in $i$, and $\max_{i}\mathbb{E}[|e_{i}|^{6}|x_{i}]\le C$. The regressors have six unconditional moments, $\mathbb{E}\|x_{i}\|^{6}<\infty$. 
\item [{(ii)}] The basis dimension satisfies $K\to\infty$ and $K/n\to0$. Let $B=(b(x_{1}),\ldots,b(x_{n}))^{\prime}$, $P=B(B^{\prime}B)^{-1}B^{\prime}$, and $M=I-P$. The basis is normalized so that $\mathbb{E}[b(x)b(x)^{\prime}]=I_{K}$, and there is a sequence $\{\zeta_{K}\}$ with $\sup_{x}\|b(x)\|\le\zeta_{K}$ and $\zeta^{2}_{K}K/n\to0$. In addition, the basis spans the linear terms: there exists a $K\times G$ matrix $\alpha_K$ with $b(x)^{\prime}\alpha_K=x^{\prime}$ almost surely. In particular $X=B\alpha_K$, so $MX=0$.
\item [{(iii)}] There is a deterministic sequence $1\le b_{n}=o(n)$ such that, conditionally on $X=(x_{1},\ldots,x_{n})'$,
\[
\max_{i}|e_{i}|=O_{p}(b_{n}),\quad\max_{i}|m^{\prime}_{i}e|=O_{p}(b_{n}),\quad\max_{i}|\omega_{i}|+\max_{i}|m^{\prime}_{i}\omega|=o_{p}(b^{-3}_{n}),\quad\|\hat{\delta}\|=O_{p}(n^{-1/2}),
\]
where $m_{i}$ is the $i$th column of $M$.
\item [{(iv)}] Under the alternative the systematic component is bounded and non-degenerate:
\[
\max_{i}|\pi_{i}|=O_{p}(1),\qquad\liminf_{n}A^{-1}_{n}\sum^{n}_{i=1}\sum_{j\neq i}a^{2}_{ij}\pi^{2}_{i}>0,
\]
where $\pi_{i}=\Delta_{i}+\omega_{i}=\theta_{0}(x_{i})-\gamma^{\prime}_{*}x_{i}$.
\end{description}
\end{assumption}

The normalization in (ii) is without loss of generality: $P$, $M$ and $\hat{V}_{\mathrm{cf}}$ are invariant to nonsingular linear transformations of $b$, so one may always pass to the transformed basis. The condition asks that the normalized basis satisfy the growth condition on $\zeta_{K}$, as in \citet[Assumption 3]{Newey1997}. In (iii) one may always take $b_{n}=n^{1/6}$, since $\mathbb{E}[\max_{i}|e_{i}|^{6}\mid X]\le\sum_{i}\mathbb{E}[|e_{i}|^{6}\mid x_{i}]\le Cn$ by (i), and $m^{\prime}_{i}e$ has bounded sixth moments by Rosenthal's inequality; the restriction $b_{n}=o(n)$ therefore costs nothing, and the force of (iii) is the bound it places on $\omega$. With $b_{n}=n^{1/6}$ that bound needs to be $\max_{i}|\omega_{i}|+\max_{i}|m^{\prime}_{i}\omega|=o_{p}(n^{-1/2})$. For splines and power series, the approximation condition in \citet[Assumption 3]{Newey1997} holds with rate $K^{-s/G}$ when the conditional expectation has $s$ continuous derivatives, so the requirement is $\sqrt{n}K^{-s/G}\to0$. The first condition of (iv) asks that the deviation of the conditional expectation from its best linear predictor be bounded, which is vacuous under the null, where the deviation is zero, and it enters through $\xi_{n}=O_{p}(1)$ in Lemma \ref{lem:lln} and through the moment bounds on $\check{A}_{i}$ and $\check{u}_{i}$ in the estimation-error step under the alternative. The second makes the plug-in estimator inconsistent as required in Theorem \ref{thm:alt}.
\begin{proof}
For the specification testing problem, 
\[
a_{ij}=K^{-1/2}P_{ij},\qquad c_{ij}=\frac{2}{K}\frac{P^{2}_{ij}}{M_{ii}M_{jj}+M^{2}_{ij}},\qquad\ell_{i}=m_{i},
\]
where $M=I-P$ and $m_{i}$ denotes the $i$-th column of $M$. We verify the conditions of Theorems \ref{thm:null} and \ref{thm:alt}. All arguments are conditional on $X=(x_{1},\ldots,x_{n})^{\prime}$. To simplify the presentation we give the argument for $G=1$, that is, for scalar $x$; the vector case only replaces the products $x_{i}x_{j}$ by $x_{i}x^{\prime}_{j}$.

\subsubsection*{Step 1. The maintained assumption}

We begin by verifying Assumption \ref{as:gen}. By (ii), $\mathbb{E}\|n^{-1}B^{\prime}B-I\|^{2}\le\zeta^{2}_{K}K/n\to0$, so $\lambda_{\min}(n^{-1}B^{\prime}B)\overset{p}{\to}1$ \citep[proof of Theorem 1]{Newey1997}, and
\[
P_{ii}\le\lambda^{-1}_{\min}(n^{-1}B^{\prime}B)\,n^{-1}\max_{i}\|b(x_{i})\|^{2}\le\lambda^{-1}_{\min}(n^{-1}B^{\prime}B)\,\zeta^{2}_{K}/n=o_{p}(1).
\]
Two consequences are used below. Since $P$ is symmetric and idempotent, $\sum_{i,j}P^{2}_{ij}=\operatorname{tr}(P)=K$, while $\sum_{i}P^{2}_{ii}\le(\max_{i}P_{ii})K$, so that
\[
\sum^{n}_{i=1}\sum_{j\neq i}P^{2}_{ij}=K-\sum_{i}P^{2}_{ii}=K+o_{p}(K);
\]
and since $|P_{ij}|\le\sqrt{P_{ii}P_{jj}}$ for a projection, also
\[
\sum^{n}_{i=1}\sum_{j\neq i}P^{4}_{ij}\le\Bigl(\max_{i}P_{ii}\Bigr)^{2}\sum^{n}_{i=1}\sum_{j\neq i}P^{2}_{ij}=o_{p}(K)=o_{p}(K^{2}).
\]
Consequently $M_{ii}=1-P_{ii}$ is bounded away from zero uniformly in $i$, say $\min_{i}M_{ii}\ge c_{M}>0$, with probability approaching one. Since $M_{ij}=-P_{ij}$ for $i\neq j$, $M_{ii}M_{jj}+M^{2}_{ij}\ge c^{2}_{M}$ uniformly over $i\neq j$ with probability approaching one. This is the leverage condition of Assumption \ref{as:gen} with $c_{L}=c^{2}_{M}$. First, 
\[
A_{n}=\sum^{n}_{i=1}\sum_{j\neq i}a^{2}_{ij}=K^{-1}\sum^{n}_{i=1}\sum_{j\neq i}P^{2}_{ij}=1+o_{p}(1).
\]
Moreover, 
\[
2c^{2}A_{n}\le V=2\sum^{n}_{i=1}\sum_{j\neq i}a^{2}_{ij}\sigma^{2}_{i}\sigma^{2}_{j}\le2C^{2}A_{n},
\]
because $\sigma^{2}_{i}$ is bounded above and away from zero. Also,
\[
C_{n}=\sum^{n}_{i=1}\sum_{j\neq i}|c_{ij}|\le\frac{C}{K}\sum^{n}_{i=1}\sum_{j\neq i}P^{2}_{ij}=O_{p}(1)=O_{p}(A_{n}).
\]
The maximal bounds for $e_{i}$, $m^{\prime}_{i}e$, $\omega_{i}$, and $m^{\prime}_{i}\omega$ are imposed in Assumption~\ref{as:spec}(iii). This establishes Assumption \ref{as:gen}.

\subsubsection*{Step 2. Orthogonality of the drift}

We next verify the orthogonality condition in Theorem \ref{thm:alt}. Under the alternative, by the decomposition in Section \ref{sub:spec}, $\Delta=B\alpha_{*}-X\gamma_{*}$, which lies in $\operatorname{col}(B)$ by Assumption~\ref{as:spec}(ii). As $M=I-P$ projects onto the orthogonal complement of $\operatorname{col}(B)$, we have $M\Delta=0$, hence $m^{\prime}_{i}\Delta=0$ for every $i$, and the systematic component is
\[
\pi_{i}=\Delta_{i}+\omega_{i}=\theta_{0}(x_{i})-\gamma^{\prime}_{*}x_{i},\qquad\lambda_{i}=m^{\prime}_{i}\omega.
\]

\subsubsection*{Step 3. The null hypothesis}

First we check condition \eqref{eq:cond}. Under $H_{0}$, $\Delta=0$ and $\hat{u}_{i}=e_{i}+\rho_{i}$. Letting $B_{i}=e_{i}(m^{\prime}_{i}e)$, it holds $\mathbb{E}[B_{i}B_{j}|X]=\mathbb{E}[e_{i}(m^{\prime}_{i}e)e_{j}(m^{\prime}_{j}e)|X]$ for $i\neq j$. Since 
\[
m^{\prime}_{i}e=M_{ii}e_{i}+\sum_{k\neq i}M_{ki}e_{k},\qquad m^{\prime}_{j}e=M_{jj}e_{j}+\sum_{\ell\neq j}M_{\ell j}e_{\ell},
\]
independence and mean-zero errors imply $\mathbb{E}[B_{i}B_{j}|X]=\sigma^{2}_{i}\sigma^{2}_{j}(M_{ii}M_{jj}+M^{2}_{ij})$, and thus 
\[
\mathbb{E}\left[\left.\sum^{n}_{i=1}\sum_{j\neq i}c_{ij}B_{i}B_{j}\right|X\right]=\frac{2}{K}\sum^{n}_{i=1}\sum_{j\neq i}P^{2}_{ij}\sigma^{2}_{i}\sigma^{2}_{j}=V.
\]
Let $S_{n}=\sum^{n}_{i=1}\sum_{j\neq i}c_{ij}\{B_{i}B_{j}-\mathbb{E}[B_{i}B_{j}|X]\}$. Since $V\ge2c^{2}A_{n}=2c^{2}+o_{p}(1)$, it suffices to show $S_{n}=o_{p}(1)$. Writing $\tilde{P}^{2}_{ij}=P^{2}_{ij}/(M_{ii}M_{jj}+M^{2}_{ij})$, we have $c_{ij}=\frac{2}{K}\tilde{P}^{2}_{ij}$ and $2a^{2}_{ij}=\frac{2}{K}P^{2}_{ij}$, so that
\[
\sum^{n}_{i=1}\sum_{j\neq i}\bigl\{ c_{ij}B_{i}B_{j}-2a^{2}_{ij}\sigma^{2}_{i}\sigma^{2}_{j}\bigr\}=\frac{2}{K}\sum^{n}_{i=1}\sum_{j\neq i}\bigl\{\tilde{P}^{2}_{ij}e_{i}(m^{\prime}_{i}e)e_{j}(m^{\prime}_{j}e)-P^{2}_{ij}\sigma^{2}_{i}\sigma^{2}_{j}\bigr\}.
\]
Apart from the factor $2/K$, this is the expression of \citet[Lemma 2]{MikushevaSun2022} with $P$ the sieve projection here, $M=I-P$ and $U^{(1)}_{i}=U^{(2)}_{i}=U^{(3)}_{i}=e_{i}$; their weight is $\tilde{P}^{2}_{ij}$ and their normalization is $K^{-1}$. Their conditions hold here: $K\to\infty$ by (ii), $P_{ii}\le c_{P}<1$ for some $c_{P}$ because $\max_{i}P_{ii}=o_{p}(1)$ as shown above, and the errors are independent with mean zero and bounded sixth conditional moments by (i). The lemma therefore gives $S_{n}=o_{p}(1)$.

To see where the sixth moments of Assumption \ref{as:spec}(i) enter, substitute $e_{i}(m^{\prime}_{i}e)=\sum_{k}M_{ik}e_{i}e_{k}$, which makes $S_{n}$ a centered quartic form in $e$,
\[
S_{n}=\sum^{n}_{i=1}\sum_{j\neq i}\sum_{k,l}\Gamma_{ijkl}\bigl\{ e_{i}e_{j}e_{k}e_{l}-\mathbb{E}[e_{i}e_{j}e_{k}e_{l}|X]\bigr\},\qquad\Gamma_{ijkl}=c_{ij}M_{ik}M_{jl}.
\]
Since $i\neq j$, no index occurs more than three times in a quartet, and hence no more than six times in $\mathbb{E}[S^{2}_{n}|X]$; six is attained by the terms with $i^{\prime}=i$ and $j^{\prime}=j$, so no moment above order six enters. Since $V\ge2c^{2}A_{n}$, the conclusion $S_{n}=o_{p}(1)$ of the lemma gives 
\[
\sum^{n}_{i=1}\sum_{j\neq i}\left\{ c_{ij}e_{i}(m_{i}'e)e_{j}(m_{j}'e)-2a^{2}_{ij}\sigma^{2}_{i}\sigma^{2}_{j}\right\} =o_{p}(V).
\]
This is \eqref{eq:cond}.

Next we check condition \eqref{eq:condest}. Under $H_{0}$ we have $\theta_{0}(x)=\gamma^{\prime}_{0}x$ by (i), so $\mathbb{E}[b(x)y]=\mathbb{E}[b(x)x^{\prime}]\gamma_{0}$; the representation $b(x)^{\prime}\alpha_{K}=x^{\prime}$ of (ii) turns the right-hand side into $\mathbb{E}[b(x)b(x)^{\prime}]\alpha_{K}\gamma_{0}$, so that $\alpha_{*}=\alpha_{K}\gamma_{0}$ and $\alpha^{\prime}_{*}b(x)=\gamma^{\prime}_{0}x$. With $\hat{\gamma}\overset{p}{\to}\gamma_{0}$ this gives $\gamma_{*}=\gamma_{0}$, hence $\Delta_{i}=\omega_{i}=0$, so $\check{u}_{i}=e_{i}$ in the notation of Lemma \ref{lem:exp} and $\ell^{\prime}_{i}H=m^{\prime}_{i}X=0$ by (ii). Then $g_{2i}=0$ and $\mathcal{R}_{n}=0$, while $g_{0i}=e_{i}(m^{\prime}_{i}e)$ and $g_{1i}=(m^{\prime}_{i}e)x_{i}$, so that, $c_{ij}$ being symmetric,
\[
\mathcal{G}_{1}=2\sum^{n}_{i=1}\sum_{j\neq i}c_{ij}e_{i}(m^{\prime}_{i}e)x_{j}(m^{\prime}_{j}e),\qquad\mathcal{G}_{2}=\sum^{n}_{i=1}\sum_{j\neq i}c_{ij}x_{i}x_{j}(m^{\prime}_{i}e)(m^{\prime}_{j}e).
\]
Since $\|\hat{\delta}\|=O_{p}(n^{-1/2})$ and $V\ge2c^{2}A_{n}=2c^{2}+o_{p}(1)$, it suffices to show $|\mathcal{G}_{1}|=o_{p}(n^{1/2})$ and $|\mathcal{G}_{2}|=o_{p}(n)$.

To see $|\mathcal{G}_{1}|=o_{p}(n^{1/2})$, let $\bar{c}_{i}=\sum_{j\neq i}|c_{ij}|$, so that $\sum_{i}\bar{c}_{i}=C_{n}$. Applying the Cauchy--Schwarz inequality over the pairs $i\neq j$ gives
\begin{equation}
\Bigl|\sum^{n}_{i=1}\sum_{j\neq i}c_{ij}\alpha_{i}\beta_{j}\Bigr|=\Bigl|\sum^{n}_{i=1}\sum_{j\neq i}(|c_{ij}|^{1/2}\alpha_{i})(|c_{ij}|^{1/2}\beta_{j})\operatorname{sgn}(c_{ij})\Bigr|\le\Bigl(\sum^{n}_{i=1}\bar{c}_{i}\alpha^{2}_{i}\Bigr)^{1/2}\Bigl(\sum^{n}_{j=1}\bar{c}_{j}\beta^{2}_{j}\Bigr)^{1/2}\label{eq:wcs}
\end{equation}
for any $\alpha$ and $\beta$. Take $\alpha_{i}=e_{i}(m^{\prime}_{i}e)$ and $\beta_{j}=x_{j}(m^{\prime}_{j}e)$. Since $m^{\prime}_{i}e=M_{ii}e_{i}+\sum_{k\neq i}M_{ik}e_{k}$ and the errors are independent with mean zero,
\[
\mathbb{E}[\alpha^{2}_{i}|X]=M^{2}_{ii}\mathbb{E}[e^{4}_{i}|X]+\sigma^{2}_{i}\sum_{k\neq i}M^{2}_{ik}\sigma^{2}_{k}\le C,\qquad\mathbb{E}[(m^{\prime}_{j}e)^{2}|X]=\sum_{k}M^{2}_{jk}\sigma^{2}_{k}\le CM_{jj}\le C,
\]
where we used $\sum_{k}M^{2}_{ik}=M_{ii}\le1$ and Assumption \ref{as:spec} (i). Hence $\mathbb{E}[\sum_{i}\bar{c}_{i}\alpha^{2}_{i}|X]\le CC_{n}$ and $\mathbb{E}[\sum_{j}\bar{c}_{j}\beta^{2}_{j}|X]\le C\max_{j}x^{2}_{j}\,C_{n}$. By (i) the regressors have six moments, so $\max_{j}x^{2}_{j}=O_{p}(n^{1/3})$, and $C_{n}=O_{p}(1)$ was shown above; Markov's inequality therefore gives
\[
\sum^{n}_{i=1}\bar{c}_{i}\alpha^{2}_{i}=O_{p}(1),\qquad\sum^{n}_{j=1}\bar{c}_{j}\beta^{2}_{j}=O_{p}(n^{1/3}).
\]
Consequently $\|\mathcal{G}_{1}\|=O_{p}(n^{1/6})=o_{p}(n^{1/2})$. Using the same argument with $\alpha=\beta$ gives $\|\mathcal{G}_{2}\|=O_{p}(n^{1/3})=o_{p}(n)$. Since $g_{2i}=0$ the remaining two forms vanish, $\mathcal{G}_{3}=\mathcal{G}_{4}=0$, so all four conditions in \eqref{eq:condest} hold.

\subsubsection*{Step 4. The alternative}

We turn to Theorem \ref{thm:alt}. We first check the hypotheses of Lemma \ref{lem:lln}, which give the first two conditions of \eqref{eq:condalt} and $B^{\mathrm{cf}}_{n}=o(V)$, then the two estimation-error conditions, then the last condition of the theorem. Here $L=M$ is symmetric and idempotent, so $\|L\|\le1$; the sixth moments are Assumption~\ref{as:spec}(i); \eqref{eq:cond} was verified above; and $\xi_{n}=O_{p}(1)$ because Assumption~\ref{as:spec}(iv) bounds $\max_{i}|\pi_{i}|$ while $\max_{i}|\lambda_{i}|=\max_{i}|m^{\prime}_{i}\omega|=o_{p}(1)$ by (iii).

For \eqref{eq:condagg}, recall $a^{2}_{ij}=P^{2}_{ij}/K$ and $c_{ij}=(2/K)\tilde{P}^{2}_{ij}$ with $M_{ii}M_{jj}+M^{2}_{ij}\ge c^{2}_{M}$. Since $\sum^{n}_{i=1}\sum_{j\neq i}P^{4}_{ij}=o_{p}(K)$ and $A_{n}=1+o_{p}(1)$,
\[
\mathcal{C}_{2}\le\frac{C}{K^{2}}\sum^{n}_{i=1}\sum_{j\neq i}P^{4}_{ij}=o_{p}(K^{-1})=o_{p}(A^{2}_{n}),
\]
and, because $\sum_{i}P^{2}_{ij}=P_{jj}\le\max_{i}P_{ii}=o_{p}(1)$,
\[
\bar{c}_{j}\le\frac{C}{K}\sum_{i}P^{2}_{ij}=o_{p}(K^{-1}),\qquad\text{so}\qquad\max_{j}\bar{c}_{j}=o_{p}(A_{n}).
\]
For the two Frobenius quantities, $M_{ij}=-P_{ij}$ for $i\neq j$ gives $L^{2}_{F}=\sum^{n}_{i=1}\sum_{j\neq i}P^{2}_{ij}\le K$. Writing $M\Sigma M=\Sigma-P\Sigma-\Sigma P+P\Sigma P$, the off-diagonal entries of $M\Sigma M$ are $-(\sigma^{2}_{i}+\sigma^{2}_{j})P_{ij}+(P\Sigma P)_{ij}$, and 
\[
\|P\Sigma P\|^{2}_{F}=\operatorname{tr}(P\Sigma PP\Sigma P)\le C^{2}\operatorname{tr}(P)=C^{2}K,
\]
so that $S_{F}\le C\sqrt{\operatorname{tr}(P)}=C\sqrt{K}$. Hence
\[
\mathcal{C}^{1/2}_{2}(L_{F}+S_{F})=o_{p}(K^{-1/2})O_{p}(K^{1/2})=o_{p}(1)=o_{p}(A_{n}),
\]
and Lemma \ref{lem:lln} delivers the first two conditions of \eqref{eq:condalt} together with $B^{\mathrm{cf}}_{n}=o(V)$.

It remains to bound the estimation error. For the cross-fit estimator, $m^{\prime}_{i}X=0$ by (ii), so Lemma \ref{lem:exp} again gives $g_{2i}=0$ and $\mathcal{R}_{n}=0$, while now $g_{0i}=\check{A}_{i}$ and $g_{1i}=(m^{\prime}_{i}\check{u})x_{i}$ with $m^{\prime}_{i}\check{u}=m^{\prime}_{i}e+\lambda_{i}$. Relative to the null the only changes are that $m^{\prime}_{i}e$ acquires the shift $\lambda_{i}$ and that $e_{i}(m^{\prime}_{i}e)$ becomes $\check{A}_{i}$. Both factors keep the uniform second-moment bounds that \eqref{eq:wcs} needs: $\mathbb{E}[\check{A}^{2}_{i}|X]\le C(1+\pi^{2}_{i})(1+\lambda^{2}_{i})\le C$ by Assumption \ref{as:spec}(iv), and $\mathbb{E}[(m^{\prime}_{i}\check{u})^{2}|X]\le2\lambda^{2}_{i}+2\mathbb{E}[(m^{\prime}_{i}e)^{2}|X]\le C$ because $\max_{i}|\lambda_{i}|=o_{p}(1)$. The computation used under the null therefore applies unchanged and gives $\|\mathcal{G}_{1}\|=O_{p}(n^{1/6})$ and $\|\mathcal{G}_{2}\|=O_{p}(n^{1/3})$. With $\|\hat{\delta}\|=O_{p}(n^{-1/2})$ and $V\ge2c^{2}A_{n}$, this yields $\hat{V}_{\mathrm{cf}}-\check{V}_{\mathrm{cf}}=o_{p}(V)$.

For the plug-in estimator, \eqref{eq:Vhat} is \eqref{eq:Vcf} with $\ell_{i}$ the $i$th unit vector and $2a^{2}_{ij}$ in place of $c_{ij}$, so Lemma \ref{lem:exp} applies with $L=I$ and gives $g_{0i}=\check{u}^{2}_{i}$, $g_{1i}=2\check{u}_{i}x_{i}$ and $g_{2i}=x_{i}x^{\prime}_{i}$; here the degree-three and degree-four terms do not vanish. The first two forms need different moment bounds, since $g_{0i}$ and $g_{1i}$ no longer carry $m^{\prime}_{i}\check{u}$. Writing $\bar{a}_{i}=\sum_{j\neq i}2a^{2}_{ij}$, so that $\sum_{i}\bar{a}_{i}=2A_{n}=O_{p}(1)$, three sums bound them:
\[
\sum_{i}\bar{a}_{i}\check{u}^{4}_{i}=O_{p}(1),\qquad\sum_{i}\bar{a}_{i}\check{u}^{2}_{i}x^{2}_{i}=O_{p}(n^{1/3}),\qquad\sum_{i}\bar{a}_{i}x^{4}_{i}=O_{p}(n^{2/3}),
\]
the first two because $\mathbb{E}[\check{u}^{4}_{i}|X]\le C(1+\pi^{4}_{i})\le C$ by (i) and Assumption \ref{as:spec}(iv), the last two because $\max_{i}x^{2}_{i}=O_{p}(n^{1/3})$. By \eqref{eq:wcs} the first two give $\|\mathcal{G}_{1}\|=O_{p}(n^{1/6})$, while the first and third bound the $g_{0i}g_{2j}$ part of $\mathcal{G}_{2}$ and the second its $g_{1i}g^{\prime}_{1j}$ part, so $\|\mathcal{G}_{2}\|=O_{p}(n^{1/3})$. Hence $\|\hat{\delta}\|\|\mathcal{G}_{1}\|+\|\hat{\delta}\|^{2}\|\mathcal{G}_{2}\|=O_{p}(n^{-1/3})$. For the other two, $\max_{i}\|x_{i}\|=O_{p}(n^{1/6})$ and $\max_{i}|\check{u}_{i}|=O_{p}(b_{n})$ by (i) and (iii), so $\mathcal{G}_{3}=O_{p}(b_{n}n^{1/2})$ and $\mathcal{G}_{4}=O_{p}(n^{2/3})$, whence $\|\hat{\delta}\|^{3}\mathcal{G}_{3}+\|\hat{\delta}\|^{4}\mathcal{G}_{4}=O_{p}(b_{n}n^{-1})+O_{p}(n^{-4/3})=o_{p}(1)$. Since $V+B_{n}\ge V\ge2c^{2}A_{n}$, this gives $\hat{V}-\check{V}=o_{p}(V+B_{n})$.

Finally, the last condition of Theorem \ref{thm:alt} is the non-degeneracy condition of Assumption \ref{as:spec}(iv), so $\liminf_{n}B_{n}/V>0$ and the plug-in estimator is inconsistent for $V$. All conditions of Theorems \ref{thm:null} and \ref{thm:alt} therefore hold, which proves Proposition \ref{prop:spec}.
\end{proof}

\section{Assumptions and proof of Proposition \ref{prop:over} }

\label{app:over}
\begin{assumption} \label{as:over}
The following conditions hold, in the notation of Section \ref{sub:over}.
\begin{description}
\item [{(i)}] The errors $\{(u_{i},v_{i})\}^{n}_{i=1}$ are independent across $i$ with $\mathbb{E}[u_{i}]=0$ and $\mathbb{E}[v_{i}]=0$, and satisfy $0<c\le\mathbb{E}[u^{2}_{i}]\le C<\infty$, $\max_{i}\mathbb{E}|u_{i}|^{6}\le C$ and $\max_{i}\mathbb{E}\|v_{i}\|^{6}\le C$. The reduced form is bounded, $\max_{i}\|\Upsilon_{i}\|\le C$. The error of the framework satisfies the same bounds: with $e_{i}=(y_{i}-\beta^{\prime}_{*}x_{i})-\bar{u}_{i}$,
\[
0<c\le\mathbb{E}[e^{2}_{i}]\le C<\infty,\qquad\max_{i}\mathbb{E}|e_{i}|^{6}\le C.
\]
\item [{(ii)}] The instruments $Z$ are nonrandom with $\operatorname{rank}(Z)=K$, the number of instruments satisfies $K\to\infty$ and $K/n\to\alpha\in[0,1)$, and there is a constant $c_{P}$ with $P_{ii}\le c_{P}<1$ for all $i$, and the reduced form is spanned by them, $\Upsilon=Z\Psi$.
\item [{(iii)}] The estimator is consistent for the pseudo-true value, $\hat{\beta}\overset{p}{\to}\beta_{*}$, and $\max_{i}|\omega_{i}|=o_{p}(b^{-3}_{n})$ for a deterministic sequence $1\le b_n=o(n)$ satisfying $\max_{i}|e_{i}|=O_{p}(b_{n})$ and $\max_{i}|m^{\prime}_{i}e|=O_{p}(b_{n})$.
\item [{(iv)}] Under the alternative the misspecification is bounded and non-degenerate, and the weights are not concentrated on a few pairs:
\[
\max_{i}|\bar{u}_{i}|\le C,\qquad\liminf_{n}A^{-1}_{n}\sum^{n}_{i=1}\sum_{j\neq i}P^{2}_{ij}\bar{u}^{2}_{i}>0,\qquad\sum^{n}_{i=1}\sum_{j\neq i}P^{4}_{ij}=o(K).
\]
\end{description}
\end{assumption}

Under the null $\beta_{*}=\beta$ and $e_{i}=u_{i}$, so the bounds (i) places on $e_{i}$ are vacuous; under the alternative $e_{i}=u_{i}+v^{\prime}_{i}(\beta-\beta_{*})$ and the bounds are a genuine restriction, since neither $\|\beta-\beta_{*}\|$ nor the lower bound on $\mathbb{E}[e^{2}_{i}]$ follows from the moments of $u_{i}$ and $v_{i}$ alone.

The first condition of (iv) is vacuous under the null, where $\bar{u}=0$. The second makes the plug-in estimator inconsistent as required in Theorem \ref{thm:alt}. The third is implied by $\max_{i\neq j}|P_{ij}|=o(1)$. 
\begin{proof}
For the overidentification-testing problem, 
\[
a_{ij}=P_{ij},\qquad c_{ij}=\frac{2P^{2}_{ij}}{M_{ii}M_{jj}+M^{2}_{ij}},\qquad\ell_{i}=m_{i},
\]
where $M=I-P$ and $m_{i}$ denotes the $i$-th column of $M$. We verify the conditions of Theorems \ref{thm:null} and \ref{thm:alt}.

\subsubsection*{Step 1. The maintained assumption}

We begin by verifying Assumption \ref{as:gen}. By (ii), $P_{ii}\le c_{P}<1$ gives $M_{ii}=1-P_{ii}\ge1-c_{P}$, hence $M_{ii}M_{jj}+M^{2}_{ij}\ge(1-c_{P})^{2}$ uniformly over $i\neq j$. Because $P$ is idempotent, 
\[
\sum^{n}_{i=1}\sum_{j\neq i}P^{2}_{ij}=\mathrm{tr}(P^{2})-\sum_{i}P^{2}_{ii}=K-\sum_{i}P^{2}_{ii}.
\]
Moreover $\sum_{i}P^{2}_{ii}\le c_{P}\sum_{i}P_{ii}=c_{P}K$ by (ii), so that
\[
(1-c_{P})K\le\sum^{n}_{i=1}\sum_{j\neq i}P^{2}_{ij}\le K,\qquad A_{n}=\sum^{n}_{i=1}\sum_{j\neq i}a^{2}_{ij}=\sum^{n}_{i=1}\sum_{j\neq i}P^{2}_{ij}.
\]
Moreover, 
\[
2c^{2}A_{n}\le V=2\sum^{n}_{i=1}\sum_{j\neq i}a^{2}_{ij}\mathbb{E}[e^{2}_{i}]\mathbb{E}[e^{2}_{j}]\le2C^{2}A_{n},
\]
because the variances are bounded above and away from zero. The leverage condition of Assumption \ref{as:gen} holds with $c_{L}=(1-c_{P})^{2}$, and
\[
C_{n}=\sum^{n}_{i=1}\sum_{j\neq i}|c_{ij}|\le\frac{2}{(1-c_{P})^{2}}A_{n}.
\]
Thus $C_{n}=O(A_{n})$. The maximal bounds for $e_{i}$, $m^{\prime}_{i}e$, $\omega_{i}$, and $m^{\prime}_{i}\omega$ are imposed in Assumption \ref{as:over}(iii), the one for $m^{\prime}_{i}\omega$ through $m^{\prime}_{i}\omega=\omega_{i}$, which holds because $\omega=M\bar{u}$ and $M$ is idempotent. This establishes Assumption \ref{as:gen}.

\subsubsection*{Step 2. Orthogonality of the drift}

We next verify the orthogonality condition appearing in Theorem \ref{thm:alt}. By the decomposition in Section \ref{sub:over}, $\Delta=P\bar{u}$. Hence $\Delta\in\operatorname{col}(Z)$. Since $P$ is the projection matrix onto $\operatorname{col}(Z)$ and $M=I-P$, $M\Delta=(I-P)\Delta=0$. Therefore, $m^{\prime}_{i}\Delta=0$ for every $i$, and consequently $\max_{i}|m^{\prime}_{i}\Delta|=0=o_{p}(b^{-3}_{n})$.

\subsubsection*{Step 3. The null hypothesis}

First we check condition \eqref{eq:cond} for the error $e_{i}$ of the framework, as Theorems \ref{thm:null} and \ref{thm:alt} both require. Under $H_{0}$ the departure vanishes, $\bar{u}=0$, so $\Delta=0$, $\omega=0$ and $e_{i}=u_{i}$: the null is the special case $e_{i}=u_{i}$ of what follows. Writing $\tilde{P}^{2}_{ij}=P^{2}_{ij}/(M_{ii}M_{jj}+M^{2}_{ij})$, we have $c_{ij}=2\tilde{P}^{2}_{ij}$ and $2a^{2}_{ij}=2P^{2}_{ij}$, so that
\[
\sum^{n}_{i=1}\sum_{j\neq i}\bigl\{ c_{ij}e_{i}(m^{\prime}_{i}e)e_{j}(m^{\prime}_{j}e)-2a^{2}_{ij}\mathbb{E}[e^{2}_{i}]\mathbb{E}[e^{2}_{j}]\bigr\}=2K\cdot\frac{1}{K}\sum^{n}_{i=1}\sum_{j\neq i}\bigl\{\tilde{P}^{2}_{ij}e_{i}(m^{\prime}_{i}e)e_{j}(m^{\prime}_{j}e)-P^{2}_{ij}\mathbb{E}[e^{2}_{i}]\mathbb{E}[e^{2}_{j}]\bigr\}.
\]
The sum on the right is the expression of \citet[Lemma 2]{MikushevaSun2022} with $P$ the projection onto $\operatorname{col}(Z)$, $M=I-P$ and $U^{(1)}_{i}=U^{(2)}_{i}=U^{(3)}_{i}=e_{i}$. Their conditions hold by Assumption \ref{as:over}: $K\to\infty$ and $P_{ii}\le c_{P}<1$ by (ii), and by (i) the $e_{i}$ are independent across $i$ with mean zero and $\max_{i}\mathbb{E}|e_{i}|^{6}\le C$. This is where the bounds (i) places on $e_{i}$ itself are used. Hence the right-hand side is $2Ko_{p}(1)=o_{p}(K)$, which is $o_{p}(V)$ because $V\ge2c^{2}A_{n}\ge2c^{2}(1-c_{P})K$. This is \eqref{eq:cond}.

Next we check condition \eqref{eq:condest}. Recall from (ii) that $\Upsilon=Z\Psi$, so $M\Upsilon=0$ and $h_{i}=\ell^{\prime}_{i}H=m^{\prime}_{i}x=m^{\prime}_{i}v$. We again give the argument for $G=1$. Under $H_{0}$
\[
g_{0i}=u_{i}(m^{\prime}_{i}u),\qquad g_{1i}=u_{i}(m^{\prime}_{i}v)+(m^{\prime}_{i}u)x_{i},\qquad g_{2i}=x_{i}(m^{\prime}_{i}v).
\]
Since $m^{\prime}_{i}u=\sum_{k}M_{ik}u_{k}$ with $\sum_{k}M^{2}_{ik}=M_{ii}\le1$, independence and the fourth moments of (i) give $\mathbb{E}[(m^{\prime}_{i}u)^{2}]\le C$ and $\mathbb{E}[(m^{\prime}_{i}u)^{4}]\le C$, and likewise for $m^{\prime}_{i}v$. With $x_{i}=\Upsilon_{i}+v_{i}$ and $\max_{i}\|\Upsilon_{i}\|\le C$ the Cauchy--Schwarz inequality then gives $\max_{k}\max_{i}\mathbb{E}\|g_{ki}\|^{2}\le C$.

With $\bar{c}_{i}=\sum_{j\neq i}|c_{ij}|$, so that $\sum_{i}\bar{c}_{i}=C_{n}$, and $c_{ij}=c_{ji}$ here, we can apply \eqref{eq:wcs}. Since $\mathbb{E}\bigl[\sum^{n}_{i=1}\bar{c}_{i}\|g_{ki}\|^{2}\bigr]\le CC_{n}$, Markov's inequality gives $\sum^{n}_{i=1}\bar{c}_{i}\|g_{ki}\|^{2}=O_{p}(C_{n})$ for $k=0,1,2$. By the same symmetry, the two terms of $\mathcal{G}_{1}$ coincide and so do the last two of $\mathcal{G}_{2}$, so that
\[
\|\mathcal{G}_{1}\|\le2\sum^{n}_{i=1}\sum_{j\neq i}|c_{ij}|\,|g_{0i}|\,\|g_{1j}\|,\qquad\|\mathcal{G}_{2}\|\le\sum^{n}_{i=1}\sum_{j\neq i}|c_{ij}|\,\|g_{1i}\|\,\|g_{1j}\|+2\sum^{n}_{i=1}\sum_{j\neq i}|c_{ij}|\,|g_{0i}|\,\|g_{2j}\|.
\]
For $\mathcal{G}_{3}$ and $\mathcal{G}_{4}$, with $\|\delta\|=1$ we have $|g^{\prime}_{1i}\delta|\le\|g_{1i}\|$ and $|\delta^{\prime}g_{2i}\delta|\le\|g_{2i}\|$, so the summand of $\mathcal{G}_{3}$ is at most $|c_{ij}|(\|g_{1i}\|\,\|g_{2j}\|+\|g_{1j}\|\,\|g_{2i}\|)$ and that of $\mathcal{G}_{4}$ at most $|c_{ij}|\,\|g_{2i}\|\,\|g_{2j}\|$. Summing these and using the symmetry of $c$ once more,
\[
\mathcal{G}_{3}\le2\sum^{n}_{i=1}\sum_{j\neq i}|c_{ij}|\,\|g_{1i}\|\,\|g_{2j}\|,\qquad\mathcal{G}_{4}\le\sum^{n}_{i=1}\sum_{j\neq i}|c_{ij}|\,\|g_{2i}\|\,\|g_{2j}\|.
\]
By \eqref{eq:wcs} each of the five double sums on the right is $O_{p}(C_{n})$. Since $C_{n}\le2A_{n}/(1-c_{P})^{2}\le2K/(1-c_{P})^{2}$, all $\mathcal{G}_{k}$ are $O_{p}(K)$ for $k=1,2,3,4$.

Consistency of $\hat{\beta}$ gives $\|\hat{\delta}\|=o_{p}(1)$, and $V\ge2c^{2}A_{n}\ge2c^{2}(1-c_{P})K$, so
\[
\|\hat{\delta}\|\|\mathcal{G}_{1}\|
+\|\hat{\delta}\|^{2}\|\mathcal{G}_{2}\|
+\|\hat{\delta}\|^{3}\mathcal{G}_{3}
+\|\hat{\delta}\|^{4}\mathcal{G}_{4}
=o_{p}(K)=o_{p}(V),
\]
which is \eqref{eq:condest}. All conditions of Theorem \ref{thm:null} are therefore satisfied, which proves the null part of Proposition \ref{prop:over}.

\subsubsection*{Step 4. The alternative}

We turn to Theorem \ref{thm:alt}. We first check the hypotheses of Lemma \ref{lem:lln}, which give the first two conditions of \eqref{eq:condalt} and $B^{\mathrm{cf}}_{n}=o(V)$, then the two estimation-error conditions, then the last condition of the theorem. Because $\Delta=P\bar{u}$ and $\omega=M\bar{u}$, the systematic component is the misspecification itself and the surviving part is its projection off the instruments,
\[
\pi_{i}=\Delta_{i}+\omega_{i}=\bar{u}_{i},\qquad\lambda_{i}=m^{\prime}_{i}\omega=(M\bar{u})_{i}=\omega_{i},
\]
the last equality because $M$ is idempotent. 

We check the hypotheses of Lemma \ref{lem:lln}. Again $L=M$ is symmetric idempotent, so $\|L\|\le1$; the sixth moments are Assumption \ref{as:over}(i); \eqref{eq:cond} was verified above; and $\xi_{n}=O_{p}(1)$ by Assumption \ref{as:over}(iv) together with $\max_{i}|\lambda_{i}|=\max_{i}|\omega_{i}|=o_{p}(1)$ from (iii). For \eqref{eq:condagg}, $c_{ij}=2\tilde{P}^{2}_{ij}$ with $M_{ii}M_{jj}+M^{2}_{ij}\ge(1-c_{P})^{2}$, so $\mathcal{C}_{2}\le C\sum^{n}_{i=1}\sum_{j\neq i}P^{4}_{ij}=o_{p}(K)$ by Assumption \ref{as:over}(iv), which is $o_{p}(A^{2}_{n})$ because $A_{n}\ge(1-c_{P})K\to\infty$. Next $\bar{c}_{j}\le C\sum_{i}P^{2}_{ij}=CP_{jj}\le Cc_{P}$, so $\max_{j}\bar{c}_{j}=O(1)=o(A_{n})$. Finally $L^{2}_{F}=\sum^{n}_{i=1}\sum_{j\neq i}P^{2}_{ij}\le K$ and, as in the proof of Proposition \ref{prop:spec}, $S_{F}\le C\sqrt{K}$; hence
\[
\mathcal{C}^{1/2}_{2}(L_{F}+S_{F})=o_{p}(K^{1/2})O(K^{1/2})=o_{p}(K)=o_{p}(A_{n}).
\]
Lemma \ref{lem:lln} therefore gives the first two conditions of \eqref{eq:condalt} and $B^{\mathrm{cf}}_{n}=o(V)$. Note where the conditions of Assumption \ref{as:over}(iv) act: the bound on $\bar{u}$ controls the size of the systematic component, and the summability of $P^{4}_{ij}$ stops a single pair of observations from carrying a non-negligible share of the weight.

The estimation-error conditions are those of the null argument with $u_{i}$ replaced by $\check{u}_{i}=\bar{u}_{i}+e_{i}$ and $m^{\prime}_{i}u$ by $m^{\prime}_{i}\check{u}=\lambda_{i}+m^{\prime}_{i}e$. Since $\xi_{n}=O_{p}(1)$, the moment bounds $\max_{k}\max_{i}\mathbb{E}\|g_{ki}\|^{2}\le C$ used there continue to hold, so
\[
\|\hat{\delta}\|\|\mathcal{G}_{1}\|
+\|\hat{\delta}\|^{2}\|\mathcal{G}_{2}\|
+\|\hat{\delta}\|^{3}\mathcal{G}_{3}
+\|\hat{\delta}\|^{4}\mathcal{G}_{4}
=o_{p}(K)=o_{p}(V);
\]
this gives
$\hat{V}_{\mathrm{cf}}-\check{V}_{\mathrm{cf}}=o_{p}(V)$. For the plug-in estimator, \eqref{eq:Vhat} is \eqref{eq:Vcf} with $\ell_{i}$ the $i$th unit vector and $2a^{2}_{ij}=2P^{2}_{ij}$ in place of $c_{ij}$, so Lemma \ref{lem:exp} applies with $L=I$ and gives $g_{0i}=\check{u}^{2}_{i}$, $g_{1i}=2\check{u}_{i}x_{i}$, $g_{2i}=x_{i}x^{\prime}_{i}$, the weights summing to $2A_{n}\le2K$. The same moment bound $\max_{k}\max_{i}\mathbb{E}\|g_{ki}\|^{2}\le C$ holds for these, using $\max_{i}\|\Upsilon_{i}\|\le C$ and the sixth moments of $v_{i}$, so
\[
\|\hat{\delta}\|\|\mathcal{G}_{1}\|
+\|\hat{\delta}\|^{2}\|\mathcal{G}_{2}\|
+\|\hat{\delta}\|^{3}\mathcal{G}_{3}
+\|\hat{\delta}\|^{4}\mathcal{G}_{4}
=o_{p}(K)=o_{p}(V),
\]
and hence
$\hat{V}-\check{V}=o_{p}(V)=o_{p}(V+B_{n})$.

Finally, Assumption \ref{as:over}(iv) requires $\liminf_{n}A^{-1}_{n}\sum^{n}_{i=1}\sum_{j\neq i}P^{2}_{ij}\bar{u}^{2}_{i}>0$, which makes the plug-in estimator inconsistent as required in Theorem \ref{thm:alt}. All conditions of Theorems \ref{thm:null} and \ref{thm:alt} therefore hold, which proves Proposition \ref{prop:over}.
\end{proof}

\section{Assumptions and proof of Proposition \ref{prop:F} } \label{app:F}

\begin{assumption} \label{as:F}
The following conditions hold, in the notation of Section \ref{sub:F}.
\begin{description}
\item [{(i)}] $\{(y_{i},w_{i},z_{i})\}^{n}_{i=1}$ are i.i.d. and 
\[
y_{i}=w^{\prime}_{i}\beta_{1}+z^{\prime}_{i}\beta_{2}+\varepsilon_{i},\qquad\mathbb{E}[\varepsilon_{i}|x_{i}]=0,
\]
where $x_{i}=(w^{\prime}_{i},z^{\prime}_{i})^{\prime}$. Also,
\[
0<c\le\sigma^{2}_{i}:=\mathbb{E}[\varepsilon^{2}_{i}|x_{i}]\le C<\infty,\qquad\max_{1\le i\le n}\mathbb{E}[\varepsilon^{6}_{i}|x_{i}]\le C,
\]
uniformly in $i$, for some constants $0<c<C<\infty$. In addition, $\mathbb{E}\|w_{i}\|^{2}<\infty$ and $\mathbb{E}[w_{i}w^{\prime}_{i}]$ is nonsingular.
\item [{(ii)}] The number of restrictions satisfies $r\to\infty$, and there is a constant $c_{M}>0$ such that $\min_{i}M_{ii}\ge c_{M}$ with probability approaching one.
\item [{(iii)}] There exists a deterministic sequence $1\le b_n=o(n)$ such that $\max_{i}|\varepsilon_{i}|=O_{p}(b_{n})$ and $\max_{i}|m^{\prime}_{i}\varepsilon|=O_{p}(b_{n})$, where $m_{i}$ denotes the $i$-th column of $M$. 
\item [{(iv)}] Under the alternative the drift is bounded and non-degenerate, and the weights are not concentrated on a few pairs:
\[
\max_{i}|\Delta_{i}|=O_{p}(1),\qquad\liminf_{n}A^{-1}_{n}\sum^{n}_{i=1}\sum_{j\neq i}Q^{2}_{ij}\Delta^{2}_{i}>0,\qquad\sum^{n}_{i=1}\sum_{j\neq i}Q^{4}_{ij}=o_{p}(r).
\]
\end{description}
\end{assumption}

The condition of \citet{AnatolyevSolvsten2023} is the fourth-moment bound $\max_{i}\mathbb{E}[\varepsilon^{4}_{i}|x_{i}]+\sigma^{-2}_{i}=O_{p}(1)$; \citet[Lemma 2]{MikushevaSun2022} requires the sixth moment, and we state the bounds with deterministic constants so that they supply Assumption \ref{as:gen} directly, as in Assumptions \ref{as:spec} and \ref{as:over} above. For the same reason (ii) fixes the constant $c_{M}$. Their $\max_{i}M^{-1}_{ii}=O_{p}(1)$ gives a bound that may depend on how small a failure probability one insists on, whereas here $c_{M}$ has to be a single constant: it becomes the $c_{L}=c^{2}_{M}$ of Assumption \ref{as:gen}, and the proof carries it on through $c_{P}=1-c_{M}$ and $A_{n}\ge(1-c_{P})r$. The two conditions on $w_{i}$ are not part of their Assumption 2.

We place no direct restriction on $r/n$: as the proof below shows, the condition in (ii) already delivers $\limsup_{n}r/n<1$, the leverage bound $(P_{X})_{ii}\le c_{P}<1$ that \citet[Lemma 2]{MikushevaSun2022} requires of the projection to which it is applied, together with the bound $Q_{ii}\le c_{P}$ used for $\sum_{i}Q^{2}_{ii}\le c_{P}r$, and two-sided bounds on $A_{n}$. \citet{AnatolyevSolvsten2023} allow $r$ to remain fixed; we retain $r\to\infty$, which enters through \citet[Assumption 1]{MikushevaSun2022} and lets us avoid the non-standard limit that a fixed number of restrictions produces.

The first condition of (iv) is vacuous under the null, where $\Delta=0$. The second makes the plug-in estimator inconsistent as required in Theorem \ref{thm:alt}. Note that the first condition is stated for $\Delta=M_{W}Z(\beta_{2}-\beta_{20})$ and does not follow from a bound on $z^{\prime}_{i}(\beta_{2}-\beta_{20})$ alone, since $(M_{W}\Delta_{0})_{i}=\Delta_{0i}-w^{\prime}_{i}(W^{\prime}W)^{-1}W^{\prime}\Delta_{0}$ and (i) bounds $\max_{i}\|w_{i}\|$ only by $o_{p}(n^{1/2})$. The third is implied by $\max_{i\neq j}|Q_{ij}|=o_{p}(1)$.
\begin{proof}
For the many restrictions testing problem, 
\[
a_{ij}=Q_{ij},\qquad c_{ij}=2\frac{Q^{2}_{ij}}{M_{ii}M_{jj}+M^{2}_{ij}},\qquad\ell_{i}=m_{i},
\]
where $m_{i}$ denotes the $i$-th column of $M$. We verify the conditions of Theorems \ref{thm:null} and \ref{thm:alt}. All expectations below are conditional on $X=[W,Z]$.

\subsubsection*{Step 1. The maintained assumption}

We begin by verifying Assumption \ref{as:gen}. By Assumption \ref{as:F}(ii), $\min_{i}M_{ii}\ge c_{M}>0$ with probability approaching one, so that $M_{ii}M_{jj}+M^{2}_{ij}\ge c^{2}_{M}$ uniformly over $i\neq j$, the leverage condition of Assumption~\ref{as:gen} with $c_{L}=c^{2}_{M}$. The same bound on $\min_{i}M_{ii}$ carries the remaining leverage and order calculations of this step. It first bounds the leverages of $Q$, which supplies the weights $a_{ij}=Q_{ij}$: by the Frisch--Waugh--Lovell decomposition, $P_{X}=P_{W}+Q$ with $P_{W}=W(W^{\prime}W)^{-1}W^{\prime}$ positive semidefinite, so that $(P_{W})_{ii}\ge0$ and
\[
Q_{ii}\le Q_{ii}+(P_{W})_{ii}=(P_{X})_{ii}=1-M_{ii}\le1-c_{M}=:c_{P}<1.
\]
Averaging $1-M_{ii}\le1-c_{M}$ over $i$ and using $\sum_{i}(1-M_{ii})=d$ gives $d/n\le1-c_{M}$; since $r\le d$ this makes $\limsup_{n}r/n\le1-c_{M}<1$. Finally $Q$ is symmetric idempotent of rank $r$, so $\sum^{n}_{i=1}\sum_{j\neq i}Q^{2}_{ij}=\operatorname{tr}(Q^{2})-\sum_{i}Q^{2}_{ii}=r-\sum_{i}Q^{2}_{ii}$. Moreover $\sum_{i}Q^{2}_{ii}\le c_{P}\sum_{i}Q_{ii}=c_{P}r$, so that
\[
(1-c_{P})r\le A_{n}=\sum^{n}_{i=1}\sum_{j\neq i}a^{2}_{ij}=\sum^{n}_{i=1}\sum_{j\neq i}Q^{2}_{ij}\le r.
\]
Since $0<c\le\sigma^{2}_{i}\le C<\infty$, we obtain 
\[
2c^{2}A_{n}\le V=2\sum^{n}_{i=1}\sum_{j\neq i}a^{2}_{ij}\sigma^{2}_{i}\sigma^{2}_{j}\le2C^{2}A_{n}.
\]
Furthermore, 
\[
C_{n}=\sum^{n}_{i=1}\sum_{j\neq i}|c_{ij}|\le\frac{2}{c^{2}_{M}}\sum^{n}_{i=1}\sum_{j\neq i}Q^{2}_{ij}=\frac{2}{c^{2}_{M}}A_{n}.
\]
Assumption \ref{as:F}(iii) supplies the maximal bounds for $\varepsilon_{i}$ and $m^{\prime}_{i}\varepsilon$. The approximation error vanishes here, $\omega=0$, so the conditions Assumption \ref{as:gen} places on $\omega$ hold trivially. Write $\hat{\delta}=-(n^{-1}W^{\prime}W)^{-1}(n^{-1}W^{\prime}\varepsilon)$ and bound the two factors. By (i) the matrix $\mathbb{E}[w_{i}w^{\prime}_{i}]$ is nonsingular, so $\lambda_{\min}(n^{-1}W^{\prime}W)$ is bounded away from zero with probability approaching one; and since $\mathbb{E}[\varepsilon_{i}\mid x_{i}]=0$ kills the cross terms, $\mathbb{E}\|n^{-1}W^{\prime}\varepsilon\|^{2}=n^{-2}\sum_{i}\mathbb{E}\|w_{i}\varepsilon_{i}\|^{2}\le Cn^{-1}\mathbb{E}\|w_{i}\|^{2}$, which is $O(n^{-1})$ by (i). Hence $\|\hat{\delta}\|=O_{p}(n^{-1/2})$ under both hypotheses. Moreover $m^{\prime}_{i}\rho=m^{\prime}_{i}W\hat{\delta}=0$ identically because $MW=0$. This establishes Assumption \ref{as:gen}.

\subsubsection*{Step 2. Orthogonality of the drift}

We next verify the orthogonality condition in Theorem \ref{thm:alt}. Under the alternative, $\Delta=M_{W}Z(\beta_{2}-\beta_{20})$. Since $Z(\beta_{2}-\beta_{20})\in\operatorname{col}(Z)\subseteq\operatorname{col}(X)$ and $MW=0$ gives $MP_{W}=0$, it follows that $M\Delta=0$. Therefore $m^{\prime}_{i}\Delta=0$ for every $i$, and hence $\max_{i}|m^{\prime}_{i}\Delta|=0=o_{p}(b^{-3}_{n})$.

\subsubsection*{Step 3. The null hypothesis}

First we check condition \eqref{eq:cond}. Under $H_{0}$ we have $\beta_{2}=\beta_{20}$, so $\Delta=0$, and $\omega=0$, so $\check{u}_{i}=\varepsilon_{i}$ in the notation of Lemma \ref{lem:exp}. With $\tilde{P}^{2}_{ij}=Q^{2}_{ij}/(M_{ii}M_{jj}+M^{2}_{ij})$ we have $c_{ij}=2\tilde{P}^{2}_{ij}$ and $2a^{2}_{ij}=2Q^{2}_{ij}$, so
\begin{equation}
\sum^{n}_{i=1}\sum_{j\neq i}\bigl\{ c_{ij}\varepsilon_{i}(m^{\prime}_{i}\varepsilon)\varepsilon_{j}(m^{\prime}_{j}\varepsilon)-2a^{2}_{ij}\sigma^{2}_{i}\sigma^{2}_{j}\bigr\}=2\sum^{n}_{i=1}\sum_{j\neq i}\bigl\{\tilde{P}^{2}_{ij}\varepsilon_{i}(m^{\prime}_{i}\varepsilon)\varepsilon_{j}(m^{\prime}_{j}\varepsilon)-Q^{2}_{ij}\sigma^{2}_{i}\sigma^{2}_{j}\bigr\}.\label{eq:MS_target}
\end{equation}
\citet[Lemma 2]{MikushevaSun2022} is stated for a projection and its own complement, and $Q$ is not the complement of $M$: here $M=I-P_{X}$ with $P_{X}=P_{W}+Q$, so the lemma does not apply to the pair $(Q,M)$ directly. It does apply to the pair $(P_{X},M)$, and we reduce to that case by splitting the weight. Rearranging the decomposition of Step 1 as $Q=P_{X}-P_{W}$ gives
\[
Q^{2}_{ij}=(P_{X})^{2}_{ij}-2(P_{X})_{ij}(P_{W})_{ij}+(P_{W})^{2}_{ij}.
\]
Substituting this into the right-hand side of \eqref{eq:MS_target} yields three pieces
\[
2\sum^{n}_{i=1}\sum_{j\neq i}\Bigl\{\frac{v_{ij}}{M_{ii}M_{jj}+M^{2}_{ij}}\varepsilon_{i}(m^{\prime}_{i}\varepsilon)\varepsilon_{j}(m^{\prime}_{j}\varepsilon)-v_{ij}\sigma^{2}_{i}\sigma^{2}_{j}\Bigr\},\quad\text{for }v_{ij}\in\left\{ (P_{X})^{2}_{ij},-2(P_{X})_{ij}(P_{W})_{ij},(P_{W})^{2}_{ij}\right\} .
\]
For $v_{ij}=(P_{X})^{2}_{ij}$, it is $2d$ times the expression of \citet[Lemma 2]{MikushevaSun2022} with $P=P_{X}$ and $U^{(1)}_{i}=U^{(2)}_{i}=U^{(3)}_{i}=\varepsilon_{i}$. Their conditions hold: $P_{X}$ is a projection of rank $d\to\infty$, its complement is $M$, the leverage bound $(P_{X})_{ii}=1-M_{ii}\le1-c_{M}<1$ holds by (ii), and the errors are independent with mean zero and six bounded conditional moments by (i). Hence it is $2d\,o_{p}(1)=o_{p}(d)=o_{p}(r)$.

For $v_{ij}=-2(P_{X})_{ij}(P_{W})_{ij}$ and $(P_{W})^{2}_{ij}$, the pieces are negligible because $w_{i}$ has fixed dimension, so that $\operatorname{tr}(P_{W})=d-r$ stays bounded. For a symmetric idempotent matrix $A$ one has $\sum_{i}\sum_{j}A^{2}_{ij}=\operatorname{tr}(A)$, so by the Cauchy--Schwarz inequality
\begin{align*}
\sum^{n}_{i=1}\sum_{j\neq i}\bigl|(P_{X})_{ij}(P_{W})_{ij}\bigr| & \le\bigl\{\operatorname{tr}(P_{X})\operatorname{tr}(P_{W})\bigr\}^{1/2}=\bigl\{ d(d-r)\bigr\}^{1/2}=O(r^{1/2}),\\
\sum^{n}_{i=1}\sum_{j\neq i}(P_{W})^{2}_{ij} & \le d-r=O(1).
\end{align*}
Moreover, since $\sum_{k}M^{2}_{ik}=M_{ii}\le1$, it follows
\[
\mathbb{E}[\varepsilon^{2}_{i}(m^{\prime}_{i}\varepsilon)^{2}|X]=M^{2}_{ii}\mathbb{E}[\varepsilon^{4}_{i}|x_{i}]+\sigma^{2}_{i}\sum_{k\neq i}M^{2}_{ik}\sigma^{2}_{k}\le C.
\]
Thus applying the Cauchy--Schwarz inequality gives $\mathbb{E}[|\varepsilon_{i}(m^{\prime}_{i}\varepsilon)\varepsilon_{j}(m^{\prime}_{j}\varepsilon)|\,|X]\le C$ uniformly in $i\neq j$. Hence, for each of the two,
\[
\mathbb{E}\Bigl[\Bigl|2\sum^{n}_{i=1}\sum_{j\neq i}\Bigl\{\frac{v_{ij}}{M_{ii}M_{jj}+M^{2}_{ij}}\varepsilon_{i}(m^{\prime}_{i}\varepsilon)\varepsilon_{j}(m^{\prime}_{j}\varepsilon)-v_{ij}\sigma^{2}_{i}\sigma^{2}_{j}\Bigr\}\Bigr|\Bigm|X\Bigr]\le C\sum^{n}_{i=1}\sum_{j\neq i}|v_{ij}|,
\]
the first term inside the sum by $M_{ii}M_{jj}+M^{2}_{ij}\ge c^{2}_{M}$ and the moment bound just established, the second by $\sigma^{2}_{i}\le C$. The right-hand side is $O(r^{1/2})$ for $v_{ij}=-2(P_{X})_{ij}(P_{W})_{ij}$ and $O(1)$ for $v_{ij}=(P_{W})^{2}_{ij}$, and it is deterministic, so both pieces are $O_{p}(r^{1/2})=o_{p}(r)$ by Markov's inequality. Hence the left-hand side is $o_{p}(r)$, which is $o_{p}(V)$ because $V\ge2c^{2}A_{n}\ge2c^{2}(1-c_{P})r$. This is \eqref{eq:cond}.

Next we check condition \eqref{eq:condest}. The annihilation $\ell^{\prime}_{i}H=m^{\prime}_{i}W=0$ gives $g_{2i}=0$, so $\mathcal{G}_{3}=\mathcal{G}_{4}=\mathcal{R}_{n}=0$ and only $\mathcal{G}_{1}$ and $\mathcal{G}_{2}$ remain. Under $H_{0}$, $g_{0i}=\varepsilon_{i}(m^{\prime}_{i}\varepsilon)$ and $g_{1i}=(m^{\prime}_{i}\varepsilon)w_{i}$, exactly as in the proof of Proposition~\ref{prop:spec} with $w_{i}$ in place of $x_{i}$. The same bound \eqref{eq:wcs} applies: with $\bar{c}_{i}=\sum_{j\neq i}|c_{ij}|$ and $\alpha_{i}=\varepsilon_{i}(m^{\prime}_{i}\varepsilon)$, $\beta_{j}=w_{j}(m^{\prime}_{j}\varepsilon)$,
\[
\mathbb{E}\Bigl[\sum^{n}_{i=1}\bar{c}_{i}\alpha^{2}_{i}\Bigm|X\Bigr]\le CC_{n},\qquad\mathbb{E}\Bigl[\sum^{n}_{j=1}\bar{c}_{j}\|\beta_{j}\|^{2}\Bigm|X\Bigr]\le C\max_{j}\|w_{j}\|^{2}\,C_{n},
\]
so that $\|\mathcal{G}_{1}\|$ and $\|\mathcal{G}_{2}\|$ are $O_{p}(C_{n}\max_{j}\|w_{j}\|)$ and $O_{p}(C_{n}\max_{j}\|w_{j}\|^{2})$ respectively. Since $w_{i}$ is fixed-dimensional with bounded second moments, $\max_{j}\|w_{j}\|=o_{p}(n^{1/2})$, and $C_{n}=O_{p}(A_{n})=O_{p}(r)$, so with $\|\hat{\delta}\|=O_{p}(n^{-1/2})$
\[
\|\hat{\delta}\|\,\|\mathcal{G}_{1}\|=o_{p}(r),\qquad\|\hat{\delta}\|^{2}\|\mathcal{G}_{2}\|=o_{p}(r),
\]
both of which are $o_{p}(V)$. This is \eqref{eq:condest}. All conditions of Theorem \ref{thm:null} are therefore satisfied, which proves the null part of Proposition \ref{prop:F}.

\subsubsection*{Step 4. The alternative}

We turn to Theorem \ref{thm:alt}. We first check the hypotheses of Lemma \ref{lem:lln}, which give the first two conditions of \eqref{eq:condalt} and $B^{\mathrm{cf}}_{n}=o(V)$, then the two estimation-error conditions, then the last condition of the theorem. Here $\omega=0$, so the systematic component is the drift itself and the surviving part vanishes identically,
\[
\pi_{i}=\Delta_{i}=(M_{W}Z(\beta_{2}-\beta_{20}))_{i},\qquad\lambda_{i}=m^{\prime}_{i}\omega=0.
\]
By Step 2, $M\Delta=0$. This is the application in which the cross-fit bias reduces furthest: by Lemma \ref{lem:bias} only the terms carrying $(M\Sigma M)_{ij}$ and $M_{ij}$ survive.

We check the hypotheses of Lemma \ref{lem:lln}. As before $L=M$ is symmetric idempotent, so $\|L\|\le1$; the sixth moments are Assumption \ref{as:F}(i); \eqref{eq:cond} was verified above; and $\xi_{n}=\max_{i}|\Delta_{i}|=O_{p}(1)$ by Assumption~\ref{as:F}(iv), $\lambda$ being zero. For \eqref{eq:condagg}, $c_{ij}=2Q^{2}_{ij}/(M_{ii}M_{jj}+M^{2}_{ij})$ with $M_{ii}M_{jj}+M^{2}_{ij}\ge c^{2}_{M}$, so Assumption~\ref{as:F}(iv) gives $\mathcal{C}_{2}\le C\sum^{n}_{i=1}\sum_{j\neq i}Q^{4}_{ij}=o_{p}(r)$, which is $o_{p}(A^{2}_{n})$ because $A_{n}\ge(1-c_{P})r\to\infty$. Next $\bar{c}_{j}\le C\sum_{i}Q^{2}_{ij}=CQ_{jj}\le Cc_{P}$, so $\max_{j}\bar{c}_{j}=O(1)=o(A_{n})$. For the Frobenius quantities, $\operatorname{tr}(P_{X})=d$ and $M_{ij}=-(P_{X})_{ij}$ for $i\neq j$; then $L^{2}_{F}\le d$ and, exactly as in the proof of Proposition \ref{prop:spec} with $P_{X}$ in place of $P$, $S_{F}\le C\sqrt{d}$. Since $d=r+\dim(w_{i})$ and $r\to\infty$, $d/r\to1$, whence
\[
\mathcal{C}^{1/2}_{2}(L_{F}+S_{F})=o_{p}(r^{1/2})O(d^{1/2})=o_{p}(r)=o_{p}(A_{n}).
\]
Lemma \ref{lem:lln} therefore gives the first two conditions of \eqref{eq:condalt} and $B^{\mathrm{cf}}_{n}=o(V)$.

The estimation error is the easiest of the three applications, because the rate $\|\hat{\delta}\|=O_{p}(n^{-1/2})$ established above holds under the alternative as well as under the null. Since $MW=0$ we have $\ell^{\prime}_{i}H=m^{\prime}_{i}W=0$, so Lemma \ref{lem:exp} gives $g_{2i}=0$ and $\mathcal{R}_{n}=0$, while $g_{0i}=\check{A}_{i}$ and $g_{1i}=(m^{\prime}_{i}\varepsilon)w_{i}$; indeed $m^{\prime}_{i}\hat{u}=m^{\prime}_{i}\varepsilon$ exactly, under the alternative as under the null, because $M\Delta=0$ and $MW=0$. Write $\bar{c}_{i}=\sum_{j\neq i}|c_{ij}|\le C\sum_{j}Q^{2}_{ij}=CQ_{ii}\le Cc_{P}$, so that $\sum_{i}\bar{c}_{i}=C_{n}=O_{p}(r)$. By \eqref{eq:wcs},
\[
\|\mathcal{G}_{1}\|\le C\Bigl(\sum_{i}\bar{c}_{i}\check{A}^{2}_{i}\Bigr)^{1/2}\Bigl(\sum_{j}\bar{c}_{j}(m^{\prime}_{j}\varepsilon)^{2}\|w_{j}\|^{2}\Bigr)^{1/2},\qquad\|\mathcal{G}_{2}\|\le C\sum_{j}\bar{c}_{j}(m^{\prime}_{j}\varepsilon)^{2}\|w_{j}\|^{2}.
\]
Now $\mathbb{E}[\check{A}^{2}_{i}|X]\le C(1+\Delta^{2}_{i})\le C$ by Assumption~\ref{as:F}(iv) and $\mathbb{E}[(m^{\prime}_{i}\varepsilon)^{2}|X]\le C$ because $\sum_{k}M^{2}_{ik}=M_{ii}\le1$, so the first sum is $O_{p}(C_{n})=O_{p}(r)$ and the second is at most $C\max_{j}\bar{c}_{j}\sum_{j}\|w_{j}\|^{2}=O_{p}(n)$, using $\mathbb{E}\|w_{i}\|^{2}<\infty$. Hence $\|\hat{\delta}\|\|\mathcal{G}_{1}\|=O_{p}(n^{-1/2})O_{p}(\sqrt{rn})=O_{p}(\sqrt{r})$ and $\|\hat{\delta}\|^{2}\|\mathcal{G}_{2}\|=O_{p}(1)$; since $V\ge2c^{2}A_{n}\ge2c^{2}(1-c_{P})r\to\infty$, both are $o_{p}(V)$ and $\hat{V}_{\mathrm{cf}}-\check{V}_{\mathrm{cf}}=o_{p}(V)$.

For the plug-in estimator, \eqref{eq:Vhat} is \eqref{eq:Vcf} with $\ell_{i}$ the $i$th unit vector and $2a^{2}_{ij}=2Q^{2}_{ij}$ in place of $c_{ij}$, so Lemma \ref{lem:exp} applies with $L=I$ and gives $g_{0i}=\check{u}^{2}_{i}$, $g_{1i}=2\check{u}_{i}w_{i}$ and $g_{2i}=w_{i}w^{\prime}_{i}$, the weights having row sums $\bar{a}_{i}=2(Q_{ii}-Q^{2}_{ii})\le2c_{P}$ and $\sum_{i}\bar{a}_{i}=2A_{n}\le2r$. Three sums bound all four forms:
\[
\sum_{i}\bar{a}_{i}\check{u}^{4}_{i}=O_{p}(r),\qquad\sum_{i}\bar{a}_{i}\check{u}^{2}_{i}\|w_{i}\|^{2}=O_{p}(n),\qquad\sum_{i}\bar{a}_{i}\|w_{i}\|^{4}=o_{p}(n^{2}),
\]
the first because $\mathbb{E}[\check{u}^{4}_{i}|X]\le C$ by Assumption~\ref{as:F}(i) and (iv) while $\sum_{i}\bar{a}_{i}\le2r$, the second because $\bar{a}_{i}\le2c_{P}$ and $\mathbb{E}\|w_{i}\|^{2}<\infty$, and the third because $\sum_{i}\bar{a}_{i}\|w_{i}\|^{4}\le2c_{P}\max_{i}\|w_{i}\|^{2}\sum_{i}\|w_{i}\|^{2}$ with $\max_{i}\|w_{i}\|^{2}=o_{p}(n)$. The first two bound $\mathcal{G}_{1}$; the second bounds the $g_{1i}g^{\prime}_{1j}$ part of $\mathcal{G}_{2}$ and the first and third its $g_{0i}g_{2j}$ part; the second and third bound $\mathcal{G}_{3}$, and the third bounds $\mathcal{G}_{4}$. Applying \eqref{eq:wcs} to $\mathcal{G}_{1},\ldots,\mathcal{G}_{4}$ gives $\|\mathcal{G}_{1}\|=O_{p}(\sqrt{rn})$, $\|\mathcal{G}_{2}\|=O_{p}(n)+o_{p}(n\sqrt{r})$, $\mathcal{G}_{3}=o_{p}(n^{3/2})$ and $\mathcal{G}_{4}=o_{p}(n^{2})$, whence
\[
\|\hat{\delta}\|\|\mathcal{G}_{1}\|+\|\hat{\delta}\|^{2}\|\mathcal{G}_{2}\|+\|\hat{\delta}\|^{3}\mathcal{G}_{3}+\|\hat{\delta}\|^{4}\mathcal{G}_{4}=O_{p}(\sqrt{r})+o_{p}(\sqrt{r})+o_{p}(1)=o_{p}(r),
\]
so $\hat{V}-\check{V}=o_{p}(V)=o_{p}(V+B_{n})$.

Finally the last condition of Theorem \ref{thm:alt} is Assumption \ref{as:F}(iv). All conditions of Theorems \ref{thm:null} and \ref{thm:alt} therefore hold, which proves Proposition \ref{prop:F}.
\end{proof}

\section{Size-corrected power for Section \ref{sec:calibrated} } \label{subsec:Size-corrected-power}

We report size-corrected power for Section \ref{sec:calibrated} because the raw sizes at $\eta=0$ differ slightly across procedures, all between $3.6\%$ and $4.4\%$. Size correction removes any advantage the cross-fit procedure enjoys merely from being better sized, and isolates the drift-contamination mechanism. Throughout, rejection frequencies use only the replications with a non-negative variance estimate.

Figure~\ref{fig:size_corrected} shows the size-corrected curves at the baseline sample size. The cross-fit advantage survives the correction: for $\phi(X)=\mathrm{age}^{3}$ the peak cross-fit $-$ plug-in gap is $+5.0$pp at $\eta=0.15$, and cross-fit $-$ AS23 is $+4.8$pp there; for the heavier-tailed $\phi(X)=\mathrm{age}^{5}$ the two gaps reach $+6.6$pp and $+4.8$pp. That the gap persists after equalizing size shows it reflects the reduction in drift contamination rather than the mild conservativeness of the plug-in. By $\eta=0.30$ all three procedures approach saturation and the gaps close.

\begin{figure}[h]
\centering \begin{tikzpicture}
\begin{groupplot}[
    group style={
        group size=2 by 2,
        horizontal sep=1.4cm,
        vertical sep=0.6cm,
        xlabels at=edge bottom,
        xticklabels at=edge bottom,
    },
    width=0.5\textwidth,
    xmin=-0.01, xmax=0.31,
    grid=major, grid style={gray!20},
    legend style={font=\scriptsize, draw=gray!50, fill=white, fill opacity=0.95, text opacity=1},
    every axis plot/.append style={very thick},
    tick label style={font=\scriptsize},
    label style={font=\small},
    title style={font=\small},
]
\nextgroupplot[
    height=5.5cm,
    title={$\phi(X)=\mathrm{age^{3}}$},
    ylabel={size-corr.\ rej.\ freq.\ (\%)},
    ymin=0, ymax=1.02,
    ytick={0,0.2,0.4,0.6,0.8,1.0},
    yticklabels={0,20,40,60,80,100},
    legend pos=north west,
]
\addplot[blue!70!black, mark=triangle*] coordinates {(0.000,0.050) (0.025,0.072) (0.050,0.108) (0.075,0.148) (0.100,0.210) (0.125,0.348) (0.150,0.492) (0.200,0.748) (0.300,0.958)};
\addlegendentry{AS23}
\addplot[orange!90!black, mark=square*] coordinates {(0.000,0.050) (0.025,0.066) (0.050,0.110) (0.075,0.146) (0.100,0.206) (0.125,0.348) (0.150,0.490) (0.200,0.746) (0.300,0.954)};
\addlegendentry{plug-in}
\addplot[red!85!black, mark=*] coordinates {(0.000,0.050) (0.025,0.072) (0.050,0.116) (0.075,0.166) (0.100,0.224) (0.125,0.382) (0.150,0.540) (0.200,0.776) (0.300,0.968)};
\addlegendentry{cross-fit}
\draw[dotted, gray!70] (axis cs:-0.01,0.05) -- (axis cs:0.31,0.05);
\nextgroupplot[
    height=5.5cm,
    title={$\phi(X)=\mathrm{age^{5}}$},
    ymin=0, ymax=1.02,
    ytick={0,0.2,0.4,0.6,0.8,1.0},
    yticklabels={0,20,40,60,80,100},
]
\addplot[blue!70!black, mark=triangle*] coordinates {(0.000,0.050) (0.025,0.064) (0.050,0.156) (0.075,0.330) (0.100,0.484) (0.125,0.626) (0.150,0.702) (0.200,0.856) (0.300,0.976)};
\addplot[orange!90!black, mark=square*] coordinates {(0.000,0.050) (0.025,0.060) (0.050,0.152) (0.075,0.316) (0.100,0.470) (0.125,0.618) (0.150,0.680) (0.200,0.846) (0.300,0.982)};
\addplot[red!85!black, mark=*] coordinates {(0.000,0.050) (0.025,0.074) (0.050,0.186) (0.075,0.374) (0.100,0.524) (0.125,0.674) (0.150,0.746) (0.200,0.872) (0.300,0.990)};
\draw[dotted, gray!70] (axis cs:-0.01,0.05) -- (axis cs:0.31,0.05);
\nextgroupplot[
    height=4cm,
    xlabel={$\eta = \|\beta_2\|$},
    ylabel={\small power gap (pp)},
    ymin=-0.02, ymax=0.10,
    ytick={0.00, 0.04, 0.08}, yticklabels={0,4,8},
    legend pos=north west,
]
\addplot[orange!90!black, mark=square*] coordinates {(0.000,0.000) (0.025,0.006) (0.050,0.006) (0.075,0.020) (0.100,0.018) (0.125,0.034) (0.150,0.050) (0.200,0.030) (0.300,0.014)};
\addlegendentry{cross-fit $-$ plug-in}
\addplot[blue!70!black, mark=triangle*, dashed] coordinates {(0.000,0.000) (0.025,0.000) (0.050,0.008) (0.075,0.018) (0.100,0.014) (0.125,0.034) (0.150,0.048) (0.200,0.028) (0.300,0.010)};
\addlegendentry{cross-fit $-$ AS23}
\draw[dotted, gray!70] (axis cs:-0.01,0) -- (axis cs:0.31,0);
\nextgroupplot[
    height=4cm,
    xlabel={$\eta = \|\beta_2\|$},
    ymin=-0.02, ymax=0.10,
    ytick={0.00, 0.04, 0.08}, yticklabels={0,4,8},
]
\addplot[orange!90!black, mark=square*] coordinates {(0.000,0.000) (0.025,0.014) (0.050,0.034) (0.075,0.058) (0.100,0.054) (0.125,0.056) (0.150,0.066) (0.200,0.026) (0.300,0.008)};
\addplot[blue!70!black, mark=triangle*, dashed] coordinates {(0.000,0.000) (0.025,0.010) (0.050,0.030) (0.075,0.044) (0.100,0.040) (0.125,0.048) (0.150,0.044) (0.200,0.016) (0.300,0.014)};
\draw[dotted, gray!70] (axis cs:-0.01,0) -- (axis cs:0.31,0);
\end{groupplot}
\end{tikzpicture}\caption{Size-corrected power for the three many-restriction procedures (AS23, plug-in, cross-fit) under two heavy-tail HTE directions: $\phi(X)=\mathrm{age}^{3}$ (left, $r=27$) and $\phi(X)=\mathrm{age}^{5}$ (right, $r=28$). Bernoulli-LPM errors, $n=1500$, $500$ repetitions per $\eta$. Top row: size-corrected rejection frequency vs.$\eta=\|\beta_{2}\|$. Bottom row: the power gap of cross-fit over plug-in and over AS23. Rejection frequencies use only the repetitions with non-negative variance.}
\label{fig:size_corrected} 
\end{figure}

Table~\ref{tab:smalln} and Figure~\ref{fig:smalln} repeat the exercise across the four sample sizes. The table reports frequencies at the representative alternative $\eta=0.15$, together with the largest number of negative variance estimates observed at any point of the $\eta$-grid; the figure gives the differences over the whole grid. The $n=1{,}500$ column of the table is the baseline gap above. Rejection frequencies fall as $n$ decreases at the fixed signal scale $\eta=0.15$, since the aggregate signal in the statistic weakens. The gap over the plug-in procedure depends on the shape of the alternative: under the cubic alternative it falls from $5.0$ percentage points at $r/n=1.8\%$ to about zero at $5.4\%$, while under the fifth-order alternative it stays between $4.0$ and $7.0$ points.

\begin{table}[t]
\caption{Size-corrected rejection frequencies as $r/n$ increases}
\label{tab:smalln} \centering {\small{}%
\begin{tabular}{lrrrrrrrr}
\hline 
 &  & \multicolumn{3}{c}{{\small Rejection frequency at $\eta=0.15$}} & \multicolumn{2}{c}{{\small Difference}} & \multicolumn{2}{c}{{\small Max. negative estimates}}\tabularnewline
\hline 
{\small Design } & {\small$r/n$ } & {\small AS23 } & {\small Plug-in } & {\small Cross-fit } & {\small CF--plug-in } & {\small CF--AS23 } & {\small AS23 } & {\small Cross-fit }\tabularnewline
\hline 
{\small$\mathrm{age}^{3},\ n=1{,}500$ } & {\small 1.8\% } & {\small 0.492 } & {\small 0.490 } & {\small 0.540 } & {\small 0.050 } & {\small 0.048 } & {\small 19 } & {\small 0 }\tabularnewline
{\small$\mathrm{age}^{3},\ n=1{,}000$ } & {\small 2.7\% } & {\small 0.270 } & {\small 0.256 } & {\small 0.294 } & {\small 0.038 } & {\small 0.024 } & {\small 17 } & {\small 0 }\tabularnewline
{\small$\mathrm{age}^{3},\ n=750$ } & {\small 3.6\% } & {\small 0.207 } & {\small 0.220 } & {\small 0.236 } & {\small 0.016 } & {\small 0.029 } & {\small 18 } & {\small 0 }\tabularnewline
{\small$\mathrm{age}^{3},\ n=500$ } & {\small 5.4\% } & {\small 0.136 } & {\small 0.160 } & {\small 0.162 } & {\small 0.002 } & {\small 0.026 } & {\small 26 } & {\small 0 }\tabularnewline
\hline 
{\small$\mathrm{age}^{5},\ n=1{,}500$ } & {\small 1.8\% } & {\small 0.702 } & {\small 0.680 } & {\small 0.746 } & {\small 0.066 } & {\small 0.044 } & {\small 26 } & {\small 0 }\tabularnewline
{\small$\mathrm{age}^{5},\ n=1{,}000$ } & {\small 2.8\% } & {\small 0.510 } & {\small 0.496 } & {\small 0.566 } & {\small 0.070 } & {\small 0.056 } & {\small 33 } & {\small 0 }\tabularnewline
{\small$\mathrm{age}^{5},\ n=750$ } & {\small 3.7\% } & {\small 0.323 } & {\small 0.334 } & {\small 0.382 } & {\small 0.048 } & {\small 0.059 } & {\small 40 } & {\small 0 }\tabularnewline
{\small$\mathrm{age}^{5},\ n=500$ } & {\small 5.6\% } & {\small 0.189 } & {\small 0.178 } & {\small 0.218 } & {\small 0.040 } & {\small 0.029 } & {\small 50 } & {\small 0 }\tabularnewline
\hline 
\end{tabular}}{\small\par}

{\small\medskip{}
}{\small{}%
\begin{minipage}[c]{0.96\textwidth}%
{\footnotesize Notes: Rejection frequencies are size-corrected and evaluated at $\eta=0.15$. The final two columns report the maximum, over the reported $\eta$-grid, number of negative variance estimates among 500 replications.}%
\end{minipage}}{\small\par}
\end{table}

\begin{figure}[t]
\centering \begin{tikzpicture}
\begin{groupplot}[
    group style={
        group size=2 by 2,
        horizontal sep=1.4cm,
        vertical sep=1.1cm,
        xlabels at=edge bottom
    },
    width=0.43\textwidth,
    height=4.6cm,
    xmin=-0.01,
    xmax=0.31,
    ymin=-0.02,
    ymax=0.10,
    ytick={0.00,0.04,0.08},
    yticklabels={0,4,8},
    grid=major,
    grid style={gray!20},
    legend style={
        font=\scriptsize,
        draw=gray!50,
        fill=white,
        fill opacity=0.95,
        text opacity=1
    },
    every axis plot/.append style={thick},
    tick label style={font=\scriptsize},
    label style={font=\small},
    title style={font=\small},
    cycle list={
        {red!85!black,mark=*},
        {orange!90!black,mark=square*},
        {green!55!black,mark=triangle*},
        {blue!70!black,mark=diamond*}
    }
]

\nextgroupplot[
    title={\(\phi(X)=\mathrm{age}^{3}\)},
    ylabel={cross-fit \(-\) plug-in (pp)},
    legend pos=north west
]
\addplot coordinates {
(0.000,0.000)
(0.025,0.006)
(0.050,0.006)
(0.075,0.020)
(0.100,0.018)
(0.125,0.034)
(0.150,0.050)
(0.200,0.030)
(0.300,0.014)
};
\addlegendentry{\(n=1500\)}

\addplot coordinates {
(0.000,0.000)
(0.025,-0.002)
(0.050,0.000)
(0.075,0.020)
(0.100,0.020)
(0.125,0.030)
(0.150,0.038)
(0.200,0.048)
(0.300,0.044)
};
\addlegendentry{\(n=1000\)}

\addplot coordinates {
(0.000,0.000)
(0.025,-0.002)
(0.050,0.004)
(0.075,0.012)
(0.100,0.014)
(0.125,0.010)
(0.150,0.016)
(0.200,0.008)
(0.300,0.016)
};
\addlegendentry{\(n=750\)}

\addplot coordinates {
(0.000,0.000)
(0.025,-0.006)
(0.050,0.000)
(0.075,0.014)
(0.100,0.010)
(0.125,0.020)
(0.150,0.002)
(0.200,0.036)
(0.300,0.030)
};
\addlegendentry{\(n=500\)}

\draw[dotted,gray!70]
(axis cs:-0.01,0) -- (axis cs:0.31,0);

\nextgroupplot[
    title={\(\phi(X)=\mathrm{age}^{5}\)}
]
\addplot coordinates {
(0.000,0.000)
(0.025,0.014)
(0.050,0.034)
(0.075,0.058)
(0.100,0.054)
(0.125,0.056)
(0.150,0.066)
(0.200,0.026)
(0.300,0.008)
};

\addplot coordinates {
(0.000,0.000)
(0.025,0.010)
(0.050,0.022)
(0.075,0.040)
(0.100,0.074)
(0.125,0.064)
(0.150,0.070)
(0.200,0.056)
(0.300,0.020)
};

\addplot coordinates {
(0.000,0.000)
(0.025,0.004)
(0.050,0.020)
(0.075,0.026)
(0.100,0.038)
(0.125,0.082)
(0.150,0.048)
(0.200,0.058)
(0.300,0.072)
};

\addplot coordinates {
(0.000,0.000)
(0.025,0.010)
(0.050,0.018)
(0.075,0.010)
(0.100,0.024)
(0.125,0.050)
(0.150,0.040)
(0.200,0.060)
(0.300,0.092)
};

\draw[dotted,gray!70]
(axis cs:-0.01,0) -- (axis cs:0.31,0);

\nextgroupplot[
    xlabel={\(\eta\)},
    ylabel={cross-fit \(-\) AS23 (pp)}
]
\addplot coordinates {
(0.000,0.000)
(0.025,0.000)
(0.050,0.008)
(0.075,0.018)
(0.100,0.014)
(0.125,0.034)
(0.150,0.048)
(0.200,0.028)
(0.300,0.010)
};

\addplot coordinates {
(0.000,0.000)
(0.025,-0.010)
(0.050,-0.010)
(0.075,0.000)
(0.100,0.010)
(0.125,0.031)
(0.150,0.024)
(0.200,0.032)
(0.300,0.023)
};

\addplot coordinates {
(0.000,0.000)
(0.025,0.008)
(0.050,0.000)
(0.075,0.008)
(0.100,-0.002)
(0.125,0.021)
(0.150,0.029)
(0.200,0.005)
(0.300,0.019)
};

\addplot coordinates {
(0.000,0.000)
(0.025,0.002)
(0.050,0.004)
(0.075,-0.002)
(0.100,0.022)
(0.125,0.015)
(0.150,0.026)
(0.200,0.034)
(0.300,0.056)
};

\draw[dotted,gray!70]
(axis cs:-0.01,0) -- (axis cs:0.31,0);

\nextgroupplot[
    xlabel={\(\eta\)}
]
\addplot coordinates {
(0.000,0.000)
(0.025,0.010)
(0.050,0.030)
(0.075,0.044)
(0.100,0.040)
(0.125,0.048)
(0.150,0.044)
(0.200,0.016)
(0.300,0.014)
};

\addplot coordinates {
(0.000,0.000)
(0.025,0.005)
(0.050,0.015)
(0.075,0.038)
(0.100,0.021)
(0.125,0.066)
(0.150,0.056)
(0.200,0.049)
(0.300,0.021)
};

\addplot coordinates {
(0.000,0.000)
(0.025,0.007)
(0.050,0.021)
(0.075,0.031)
(0.100,0.035)
(0.125,0.046)
(0.150,0.059)
(0.200,0.084)
(0.300,0.076)
};

\addplot coordinates {
(0.000,0.000)
(0.025,-0.011)
(0.050,-0.002)
(0.075,0.005)
(0.100,0.043)
(0.125,0.043)
(0.150,0.029)
(0.200,0.028)
(0.300,0.045)
};

\draw[dotted,gray!70]
(axis cs:-0.01,0) -- (axis cs:0.31,0);

\end{groupplot}
\end{tikzpicture}

\caption{Size-corrected rejection-frequency differences as $r/n$ increases. The columns correspond to $\phi(X)=\mathrm{age}^{3}$ ($r=27$) and $\phi(X)=\mathrm{age}^{5}$ ($r=28$). The top row reports the cross-fit rejection frequency minus the plug-in rejection frequency, and the bottom row reports the cross-fit rejection frequency minus the Anatolyev--S{ø}lvsten rejection frequency. Results are based on 500 Monte Carlo replications for each pair $(\eta,n)$ and are computed over replications in which the corresponding variance estimates are nonnegative.}
\label{fig:smalln} 
\end{figure}

\newpage{}

\printbibliography

\end{document}